\title{Induction and Recursion Principles in a Higher-Order Quantitative Logic for Probability} %TODO Please add
\titlerunning{Induction and Recursion Principles in a Quantitative HO-Logic for Probability} %TODO optional, please use if title is longer than one line
\author{Giorgio Bacci}{Department of Computer Science, Aalborg University, Denmark \and \url{https://homes.cs.aau.dk/~grbacci/}}{grbacci@cs.aau.dk}{https://orcid.org/0000-0003-4004-6049}{This work was supported by Digital Research Centre Denmark (DIREC), under the Robust NIDS project.}
\author{Rasmus Ejlers M{\o}gelberg}{IT University of Copenhagen, Denmark \and \url{https://www.itu.dk/~mogel/}}{mogel@itu.dk}{https://orcid.org/0000-0003-0386-4376}{{This work was supported by the Independent Research Fund Denmark, grant number 2032-00134B.}}
\authorrunning{G. Bacci and R.\,E. M{\o}gelberg} %TODO mandatory. First: Use abbreviated first/middle names. Second (only in severe cases): Use first author plus 'et al.'
\keywords{Quantitative Logic, Probabilistic Processes, Affine Logic, Guarded Recursion, Metric Spaces.} %TODO mandatory; please add comma-separated list of keywords
  \tikzset{
    commutative diagram/.style 2 args={
    	matrix of math nodes, row sep=#1,column sep=#2,
	text height=1.5ex, text depth=0.25ex},
    commutative diagram/.default={1cm}{1cm}
    }
  \tikzset{    
    skip loop/.style n args={3}{to path={-- ++(0,#1) -| node[pos=0.25,#2] {#3} (\tikztotarget)}},
    cross line/.style={preaction={draw=white, -, line width=6pt}}
  }
\providecommand*{\ifempty}[3]{\ifthenelse{\isempty{#1}}{#2}{#3}}
\newcommand{\parensmathoper}[2]{\ensuremath{#1\ifempty{#2}{}{(#2)}}}
\newcommand{\ie}{\textit{i.e.}}
\newcommand{\lol}{\mathbin{\multimap}}
\newcommand{\cat}[1]{\mathbf{#1}}
\newcommand{\CMet}{\cat{CMet}}
\newcommand{\Set}{\cat{Set}}
\newcommand{\mprod}{\otimes}
\newcommand{\mexp}{\lol}
\newcommand{\unit}{\mathbf{1}}
\newcommand{\nat}{\mathbb{N}}
\newcommand{\bool}{\mathsf{Bool}}
\newcommand{\unitT}{1}
\newcommand{\extpreals}{[0,\infty]}
\newcommand{\uinterval}{[0,1]}
\newcommand{\D}{\mathcal{D}}
\newcommand{\emptyctx}{\langle\rangle}
\newcommand{\pair}[2]{\otimes_{#1,#2}}
\renewcommand{\pair}[2]{\leftindex_{#1}\otimes_{#2}} %new version
\newcommand{\inj}{\mathsf{\mathit{inj}}}
\newcommand{\infrule}[3][]{\infer[{\ifempty{#1}{}{(\textsc{#1})\;}}]{#3}{#2}}
\newcommand{\doubleinfrule}[3][]{\infer=[{\ifempty{#1}{}{(\textsc{#1})\;}}]{#3}{#2}}
\newcommand{\sminus}{\!-\!}
\newcommand{\scaling}{}
\newcommand{\semfp}{\mathsf{fp}}
\newcommand{\semfix}{\mathsf{fix}}
\newcommand{\iso}{\cong}
\newcommand{\cp}{\mathsf{copy}}
\newcommand{\splt}{\mathsf{split}}
\newcommand{\distribute}{\mathsf{dist}}
\newcommand{\iterate}[1][n]{\parensmathoper{\mathsf{iterate}}{#1}}
\newcommand{\proj}{\mathsf{proj}}
\newcommand{\semweak}{\mathsf{weak}}
\newcommand{\Proc}[1][\discfact]{\mathbb{P}_{#1}}
\newcommand{\Labels}{A}
\newcommand{\Act}{\mathcal A}
\newcommand{\State}{\mathcal S}
\newcommand{\trans}{\mathcal P} 
\newcommand{\reward}{\mathcal R}
\newcommand{\policy}{\pi}
\newcommand{\refine}{\textsf{TDstep}}
\newcommand{\TDfunc}{\textsf{TD}}
\newcommand{\strength}{\textsf{st}}
\newcommand{\ts}{\vdash}
\newcommand{\hastype}[3]{#1\ts #2 : #3}
\newcommand{\geo}[1][p]{\mathsf{geo}_{#1}}
\newcommand{\semProc}[1][\discfact]{\mathbb{P}_{#1}}
\newcommand{\semLabels}{A}
\newcommand{\Pos}{\mathsf{Pos}}
\newcommand{\flip}{\mathsf{flip}}
\newcommand{\unifZN}{\mathsf{unif}_{0,N}}
\newcommand{\hwalk}{\mathsf{hwalk}}
\newcommand{\bijarg}[2]{\sigma_{#1,#2}}
\newcommand{\bij}{\sigma}
\newcommand{\letIn}[2]{\mathsf{let~} #1 \mathsf{~in~} #2}
\newcommand{\fix}[2]{\mathsf{fix~} #1 . #2 }
\newcommand{\Zero}{\mathsf{zero}}
\newcommand{\Succ}{\parensmathoper{\mathsf{succ}}}
\newcommand{\Succbare}{\mathsf{succ}}
\newcommand{\Rec}[3]{\mathsf{rec}(#1,#2,#3)}
\newcommand{\tuple}[1]{\langle #1 \rangle}
\newcommand{\case}[5]{\mathsf{case~} #1%
    \mathsf{~of~} [\inj_1 #2 . #3 \,|\, \inj_2 #4 . #5]%
}
\newcommand{\hd}{\textsf{Hd}}
\newcommand{\tl}{\textsf{Tl}}
\newcommand{\hdstate}[1][\epsilon]{\textsf{hd}_{#1}}
\newcommand{\tlstate}[1][\epsilon]{\textsf{tl}_{#1}}
\newcommand{\hdtlstate}[1][\epsilon]{\textsf{hdtl}_{#1}}
\newcommand{\hdfair}{\textsf{hd}}
\newcommand{\tlfair}{\textsf{tl}}
\newcommand{\mean}[3]{E_{#1 \sim #2}[#3]}
\newcommand{\Prop}{\mathsf{Prop}}
\newcommand{\true}{\mathsf{tt}}
\newcommand{\false}{\mathsf{ff}}
\newcommand{\ltensor}{*}
\renewcommand{\ltensor}{\bullet}
\newcommand{\lexp}{\mathbin{{-}\kern-.2em\raisebox{0ex}{$*$}}}
\renewcommand{\lexp}{\mathbin{{-}\kern-.2em\raisebox{0ex}{$\bullet$}}}
\newcommand{\conj}{\wedge}
\newcommand{\curry}{\mathsf{curry}}
\newcommand{\eval}{\mathsf{ev}}
\newcommand{\unfold}{\mathsf{ufld}}
\newcommand{\fold}{\mathsf{fld}}
\newcommand{\proc}[2]{#1 \cdot #2}
\renewcommand{\proc}[2]{#1 ; #2}
\newcommand{\TD}{\mathcal{D}}
\newcommand{\kant}[2]{K(#1,#2)}
\newcommand{\coup}[4]{#2 \in \mathsf{Cpl}_{#1}(#3,#4)}
\newcommand{\Rel}[2]{\mathsf{Rel}(#1,#2)}
\newcommand{\RelQ}[4]{\mathsf{Rel}_{#3,#4}(#1,#2)}
\newcommand{\Pred}[1]{\mathsf{Pred}(#1)}
\newcommand{\PredQ}[2]{\mathsf{Pred}_{#2}(#1)}
\newcommand{\Bisim}[1]{\mathsf{Bisim(#1)}} % statement that #1 is a bisimulation
\newcommand{\bisim}{\sim} % The bisimulation relation
\newcommand{\discfact}{c} % The discounting factor for markov processes
\newcommand{\logicJ}[3][\Delta]{#1 \mid #2 \vdash #3}
\newcommand{\peq}{=}
\newcommand{\jeq}{\equiv}
\newcommand{\defeq}{\triangleq}
\newcommand{\den}[1]{\llbracket #1\rrbracket}
\newcommand{\denD}[1]{\left\llbracket #1 \right\rrbracket}
\newcommand{\asmap}{\beta} % The map for associativity of scalar multiplication 
\newcommand{\store}{\mathbb{S}}
\begin{document}

\maketitle

\begin{abstract}
Quantitative logic reasons about the degree to which formulas are satisfied. 
%As such it is a natural logic for metric spaces: Distance is the degree to which two points are equal. 
This paper studies the fundamental reasoning principles of higher-order quantitative logic and their application to reasoning about probabilistic programs and processes. 

We construct an affine calculus for $1$-bounded complete metric spaces and the monad for probability measures equipped with the Kantorovich distance. The calculus includes a form of guarded recursion interpreted via Banach's fixed point theorem, useful, e.g., for recursive programming with processes. We then define an affine higher-order quantitative logic for reasoning about terms of our calculus. The logic includes novel principles for guarded recursion, and induction over probability measures and natural numbers.

We illustrate the expressivity of the logic by a sequence of case studies: Proving upper limits on bisimilarity distances of Markov processes, showing convergence of 
a temporal learning algorithm and of a random walk using a coupling argument. 
\end{abstract}

\section{Introduction}

In computer science, logics are traditionally designed for proving precise qualitative properties of programs, such as program equality. 
However, in many modern applications, especially those that involve probabilistic programming, one is often interested 
in proving quantitative properties of programs, such as upper limits on program distances, sensitivity of program 
outputs to program inputs, or convergence of sequences of programs. Such properties are important in diverse
application areas such as differential privacy~\cite{AmorimGHKC17,ReedP10}, security~\cite{AlmeidaBBGLLOPQSSS23,AvanziniBGMV24} and machine learning~\cite{OlmedoKKM16}. 
Similarly, in process algebra, it has long been known that for probabilistic processes, the notion of bisimilarity should be stated
quantitatively, in the form of a metric, 
to be robust to small perturbations that may otherwise compromise the exact comparison of behaviours~\cite{GiacaloneJS94,Desharnais00}.
Also program logics for probabilistic programming languages are often defined quantitatively, using 
random variables valued in $\extpreals$ as a 
quantitative notion of predicates of states~\cite{Kozen85,MorganMS96,McIverM05,BHKKM21,AvanziniBDG25}. 

What has been lacking so far is a general notion of quantitative logic in which all these applications can be expressed
as special cases. This paper develops such a logic, and in particular, the fundamental principles 
governing equality, those for reasoning about distributions, and the notion of guarded recursion. 
What we found is that these 
simple basic principles are extremely powerful, and that by reading standard definitions of concepts such as a coupling and
bisimulation in our logic we can reason about quantitative versions of these concepts in a very natural way.

\subparagraph*{A sensitivity calculus for complete metric spaces.}

Quantitative logic is the logic of metric spaces.
A metric space can be thought of as a set with
a quantitative notion of equality: The smaller the distance between two elements, the more equal they are. From the
logical perspective, the appropriate notion of morphism between metric spaces is that of a non-expansive map: a function $f$ satisfying $d(f(x),f(y)) \leq d(x,y)$, for
all $x, y$. These are precisely the maps that preserve equality in the quantitative sense, sending quantitatively equal inputs to quantitatively equal outputs.
We will restrict our attention to those metric spaces that are non-empty, complete, and $1$-bounded, and denote the resulting category by $\CMet$. 
We found this category to be the most suitable for our purposes, although alternative models are also possible, as we discuss further in Section~\ref{sec:conclusions}.

The category $\CMet$ is symmetric monoidal closed, and moreover has a strong monad $\TD$ on it, mapping a metric
space to the set of Radon distributions equipped with the Kantorovich metric. The Kantorovich metric is a natural notion
of metric on distributions, and is used in most applications, including program logics and bisimilarity distance. 
Another important operation is that of scaling a metric space by a factor $r \geq 0$. The resulting space $r X$ has the same
underlying set as $X$, but all distances are scaled by $r$ (but bounded by $1$). 
Scaling is used to express the sensitivity of functions to changes in input data: A map has type $r X \to Y$ iff it is $r$-Lipschitz continuous or, using the terminology of the programming languages Fuzz~\cite{ReedP10} and DFuzz~\cite{GaboardiHHNP13}, is \emph{$r$-sensitive} in $X$.
An important example is the operation that maps two distributions $\mu$ and $\nu$ to their convex combination $\mu \oplus_p \nu$, corresponding to choosing $\mu$ with probability $p$ and $\nu$ with probability $1\sminus p$, for $p \in (0,1)$. This is an operation of type
\begin{equation} \label{eq:oplus:type}
\oplus_p \colon p\D X \mprod (1-p) \D X \to \D X
\end{equation}
expressing that $\oplus_p$ is $p$-sensitive in the first argument and $(1-p)$-sensitive in the second,
meaning $d(\mu\oplus_p\nu, \mu'\oplus_p \nu) \leq p\cdot d(\mu, \mu')$ and $d(\mu\oplus_p\nu, \mu\oplus_p \nu') \leq (1-p)\cdot d(\nu, \nu')$ hold simultaneously, where both distances are measured in $\D X$.

Inspired by Fuzz~\cite{ReedP10}, we define a $\lambda$-calculus for programming in $\CMet$. The calculus is affine, because the $\mprod$ of the monoidal
structure has projections, but not diagonals. The calculus is simply-typed, with typing judgements using contexts with sensitivity annotations on all variables. For 
example, the sensitivity of $\oplus_p$ is expressed by the typing judgement $\hastype{\mu :^p \TD X, \nu :^{1-p}\TD X}{\mu \oplus_p \nu}{\TD X}$. 

Metric spaces do not model general recursion, but they do model a form of guarded recursion via the Banach 
fixed point theorem: Any non-expansive map $f \colon p\scaling X \to X$ has a unique 
fixed point if $p < 1$ and $X$ is complete and non-empty. Guarded recursion on this form has previously been 
studied for ultra-metric spaces~\cite{VinkR99}, but
that setting is simpler, because it models simply typed lambda calculus and intuitionistic logic. Generalising to the 
affine setting of general metric spaces has surprising consequences. For example, in the ultra-metric setting, the fixed 
point operator defines a non-expansive map $(p \scaling X \to X) \to X$ whereas in the general metric setting, this must be 
weakened to $(p \scaling X \mexp X) \to (1-p)\scaling X$ where $\mexp$ is the closed monoidal structure. In our 
calculus the \emph{guarded fixed point combinator} can be given type $\Gamma \vdash \fix{x}{t} : A$ 
if $(1\sminus p)\scaling\Gamma, x:^pA \vdash t : A$, for $p < 1$. Here, 
the multiplication refers to the operation of scaling all
sensitivity annotations in a context.

The type \eqref{eq:oplus:type} of the $\oplus_p$ combinator means that many distributions and processes can be defined as 
fixed points.
For example, the geometric distribution is recursively defined as 
$\geo = \delta(0) \oplus_p \D(\Succbare)(\geo)$, where $\delta(0)$ is the Dirac distribution, and $\Succbare \colon \nat \to \nat$ is the
successor function. 
This works because the defining 
equation for $\geo$ is productive: it only calls itself recursively with probability $1-p < 1$. 

\subparagraph*{A higher-order quantitative logic.}
Our main contribution is a quantitative logic for reasoning about terms in our calculus. 

Logical reasoning is via inference of judgements $\Psi \vdash \varphi$ which express that the predicate $\varphi$ logically follows from the sequence of predicates $\Psi = \psi_1, \dots, \psi_n$. 
Predicates are valued in 
the unit interval $\uinterval$ with $0$ being true and $1$ being false. For example, the equality
predicate is interpreted as distance between elements of a metric space. 
This leads naturally to an affine logic, because 
transitivity $x\peq y, y\peq z \ts x \peq z$ can be interpreted as the triangle inequality 
$d(x,y) + d(y,z) \geq d(x,z)$ (one of the axioms of metric spaces) if the comma is interpreted as sum. 

The first step in defining a logic is to state what the well-formed predicates are. Since the 
unit interval is itself an element in $\CMet$, we use our calculus to program with 
predicates using a special type $\Prop$ interpreted as $\uinterval$. The interval $\uinterval$
carries a closed monoid structure given by (truncated) sum and subtraction,
and universal and existential quantification can be modelled using infimum and supremum, respectively.
Since the logic allows quantification over all objects of $\CMet$, including 
exponents of $\Prop$, it is a higher-order logic. 

Like metric spaces, propositions can be rescaled by factors $r \in \extpreals$ and this 
can be used to express a logical principle of guarded recursion.
In a closed context this states that if we can 
prove $\phi$ from $p\cdot \phi$ for some $p<1$, then $\phi$ holds. 
The general rule reflects that of the fixed point combinator:
if $(1-p)\Psi, p\phi\ts \phi$ holds, then  $\Psi \ts \phi$.

We also study induction principles, both for natural numbers and for the Radon distribution monad $\TD$.
In the latter case, the principle states that $\Psi \vdash \phi[\mu/x]$ holds for all 
$\mu \in \TD X$ if it does on  
Dirac distributions and 
$p(\phi[\mu/x]), (1-p)\phi[\nu/x] \ts \phi[\mu \oplus_p \nu/x]$, for all $p \in (0,1)$.
At first sight, this may appear to only prove it for all finitely supported distributions, and $\TD X$ also includes continuous distributions.
The principle is nevertheless sound, because the finite distributions are dense in $\D X$, and because the principle has the side condition
that $x$ has finite sensitivity in $\phi$, so that $\phi$ is continuous in $x$. This principle reflects
the observation by Mardare et al.~\cite{MardarePP16,MardarePP18} 
that $\TD$ is generated freely by Dirac distributions and $\oplus_p$.

When stating the elimination rule for the 
equality predicate, one must take sensitivity into account. More precisely, if $x$ has sensitivity $p$ in $\phi$
then $\phi[t/x], p(t \peq s ) \ts \phi[s/x]$. 
This was previously observed by Dagnino and Pasquali~\cite{DagninoP22} in a propositional logic, and we adapt one of their rules as
an elimination principle for equality. We show that the rule
has wide-ranging consequences, including that equality can be proved symmetric and transitive. Equality
can also be proved a congruence when this is formulated in a way that takes sensitivities into account. 
For example, using the type (\ref{eq:oplus:type}) of $\oplus_p$, one can prove that 
\begin{equation} \label{eq:oplus:congruence}
p(x \peq y), (1-p)(z\peq w) \ts x \oplus_p z \peq y \oplus_p w \, .
\end{equation}

\subparagraph*{Case studies.}

To show the expressiveness of our logic we develop a sequence of case studies. 
The first concerns bisimilarity distance of Markov processes. We model these in $\CMet$ as the terminal
solution $\semProc$ to $\semProc \iso \semLabels\mprod\discfact\D(\semProc)$. Here $\discfact \in (0,1]$ is a 
discount factor. The smaller it is, the more the distance emphasizes short-term behaviour. In $\semProc$ the metric is exactly the bisimilarity distance. We show how to define recursive processes using the %guarded 
fixed point combinator, and prove upper bounds on distances  $d(t,u) \leq c$ by proving $c \cdot \false \ts t \peq u$ in the logic,
using the guarded recursion principle. Some examples require $c<1$, but in many cases
the productivity requirement follows from (\ref{eq:oplus:congruence}). We also show how to
define the notion of bisimilarity inside the logic and prove it equivalent to equality when $c<1$. 

We also show how to prove convergence of a temporal learning algorithm and of a random walk on a hypercube. 
The latter example is particularly interesting, as it requires a coupling proof. Coupling arguments~\cite{LindvallBook} are a technique for showing 
equivalence or closeness of probabilistic programs by relating the probabilistic choices made by one program
to those by another. We show how this technique can be used in our logic by internalising the definition of the 
Kantorovich distance: We prove that two distributions are equal if and only if there exists an equality-coupling 
between them. Note that since this biimplication is a statement in quantitative logic, its semantic meaning is an
equality of numbers in the unit interval. The proof of this illustrates the power of the principles of our logic,
in particular the elimination rule for equality and the induction principle for distributions.

\subparagraph*{Contributions.}
In summary, our contributions are: %we make the following contributions:
\begin{itemize}[topsep=0ex]
\item We present an affine lambda calculus with sensitivity annotations 
for programming in the category $\CMet$ of complete 1-bounded metric spaces; 
\item We formulate a first (to our knowledge) 
higher-order logic for quantitative reasoning in which equality is interpreted as distance in metric spaces. 
This includes new rules for recursion over natural numbers and probability measures, as well as a guarded 
recursion principle;
\item We show by example how to use the logic to reason about Markov processes, convergence in temporal learning
and for coupling arguments, illustrating the power of the combination of 
the above mentioned principles;
\end{itemize}

\subparagraph*{Overview.} The paper is organised as follows. We first discuss preliminaries on metric spaces in 
Section~\ref{sec:prelim}. Sections~\ref{sec:fixed:points} and~\ref{sec:prob:meas} discuss the 
Banach fixed point operator and the probability measure monad. Sections~\ref{sec:calculus} and~\ref{sec:logic}
discuss the syntax and semantics of the calculus and the logic, respectively. 
Section~\ref{sec:basic:prop} shows how to prove basic properties of the logic, including that equality is a congruence 
which is equivalent to an internalisation of the 
usual definition of the Kantorovich measure. Section~\ref{sec:markov} shows applications to Markov processes, 
Section~\ref{sec:temporal:learning} shows an application to temporal learning, and Section~\ref{sec:hypercube}
illustrates how to use coupling arguments in the logic. 
Finally, Section~\ref{sec:related} discusses related work, and 
Section~\ref{sec:conclusions} concludes. 

\section{Preliminaries on metric spaces}
\label{sec:prelim}

A ($1$-bounded) \emph{metric space} is a set $X$ equipped with a
distance function $d_X \colon X \times X \to \uinterval$ satisfying (\emph{reflexivity}) $d_X(x,y) =0$ iff $x=y$; (\emph{symmetry}) $d_X(x,y) = d_X(y,x)$; and (\emph{triangular inequality}) $d_X(x,z) \leq d_X(x,y) + d_X(y,z)$. 

A function $f \colon X \to Y$ between metric spaces is \emph{$r$-Lipschitz continuous}, for $r\geq 0$, if $r \cdot d_X(x,x') \geq  d_Y(f(x),f(x'))$; \emph{non-expansive} when $r = 1$; and a \emph{contraction} when $r < 1$ and $X = Y$. 
A metric space is complete if all Cauchy sequences converge. 

\smallskip
In this paper, we work with non-empty $1$-bounded complete metric spaces. 
These form a category $\CMet$ with non-expansive maps as morphisms. 
Restricting to $1$-bounded spaces allows us to include sets as \emph{discrete} metric spaces by setting all distances between distinct elements to $1$. 
This defines a left adjoint to the forgetful functor from $\CMet$ to $\Set$: if $X$ is discrete, then all maps 
$f\colon X \to Y$ are non-expansive. 
As a consequence, we can regard $\Set$ as a full subcategory of $\CMet$. 
 An important example for this paper is $\nat$, the set of natural numbers, which is an object in $\CMet$.

The category $\CMet$ has both binary products and coproducts: $X \times Y$ combines spaces by equipping the Cartesian product with the point-wise maximum distance; $X + Y$ combines them into a disjoint union by keeping elements from different components as far as possible from each other (\ie, at distance $1$). The singleton metric space $\unit$ is the final object.

There is another natural structure on the Cartesian product, the \emph{tensor} $X \mprod Y$, that combines distances by $1$-bounded truncated sum:
\begin{equation*}
    d_{X \mprod Y}((x,y),(x',y')) = \min \{ d_X(x,x') + d_Y(y,y'), 1 \} \,.
\end{equation*} 
Also the tensor product has non-expansive projections
$X \mprod Y \to X$ and $X \mprod Y \to Y$, but generally no diagonal $X \to X \mprod X$, unless $X$ is discrete. 
Note the following universal property of $\mprod$: A map from $X \mprod Y$ to $Z$ is non-expansive if
and only if it is non-expansive in each variable.

Unlike the categorical product, $\mprod$ allows currying and function application. More precisely, $(\CMet, \mprod, \unit)$ is a symmetric monoidal category, with unit $\unit$ and adjunction $(- \mprod X) \dashv (X \mexp -)$ making this structure closed. Here, $X \mexp Y$ denotes the set of non-expansive functions from $X$ to $Y$ endowed with point-wise supremum metric $d_{X \mexp Y}(f,g) = \sup_{x \in X} d_Y(f(x),g(x))$.
The counit of the adjunction is function evaluation
$\mathsf{ev}\colon (X \mexp Y) \mprod X \to Y$.

Nonexpansive morphisms in $\CMet$ subsume the notion of Lipschitz continuity through the rescaling functor $r \scaling X$, which scales distances by a  factor $r > 0$ as
\begin{equation*}
  d_{r\scaling X}(x,x') = 
  \min \{ r \cdot d_X(x,x'), 1 \} \,.
\end{equation*}
Indeed, by unpacking the definition, $f \colon r \scaling X \to Y$ is a morphism in $\CMet$ iff $f$ considered as a map $f \colon X \to Y$ is $r$-Lipschitz continuous.
For convenience we allow rescaling also for $r = 0$ and $r = \infty$,
and define $0 \scaling X$ as $\unit$, the one-point metric space, and $\infty \scaling X$ as the discrete metric space on $X$. Scaling preserves products and, under suitable restrictions, coproducts:
\begin{mathpar}
s A \times s B \iso s(A\times B) \quad \text{(for $s \in \extpreals$)} \,,
\and
r A + rB \iso r(A+B) \quad \text{(for $r \in [1,\infty]$)} \,.
\end{mathpar}

The next theorem states the properties of the interaction between $\mprod$ and 
scaling. First recall that $[0, \infty]$ is an ordered  
semiring with $\leq$, addition and multiplication defined as usual in most cases,
and by $\infty\cdot 0 = 0$ and $0 \cdot \infty = 0$. The properties together imply that 
the scaling operation is a $[0, \infty]$-graded comonad on $\CMet$ in the 
sense of \cite[Definition~13]{BrunelGMZ14} and \cite[Section~5.2]{GaboardiKOBU16}.

\begin{theorem} \label{thm:graded:comonad}
 There are natural transformations of  types
\begin{align*}
 m_{r,A,B} &\colon r\scaling A \mprod r\scaling B \to r\scaling (A \mprod B) \quad & 
 n_{r} &\colon \unit \to r\scaling \unit \\
 c_{r,s,A} &\colon (r+s) \scaling A \to r\scaling A \mprod s\scaling A &
 w_A &\colon 0\scaling A \to \unit \\
 \asmap_{r,s,A} &\colon (rs)\scaling A \to r\scaling (s\scaling A) & 
 \epsilon_A &\colon 1\scaling A \to A \\
  \kappa_{r,s,A} &\colon s \scaling A \to r \scaling A 	\quad \text{(for $r \leq s$)} 
\end{align*}
Moreover, if $s\leq 1$ or $r \geq 1$ then $\asmap_{r,s,A}$ is an isomorphism, and if $r\geq 1$ then $m_{r,A,B}$ is an isomorphism.
\end{theorem}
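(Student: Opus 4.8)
The plan is to realise every map in the statement by an explicit function on underlying sets --- the identity on $A$ or on $A\times B$, the diagonal $a\mapsto(a,a)$, or the unique map into $\unit$ --- and then to reduce the whole theorem to a short list of elementary facts about truncated arithmetic on $\uinterval$ and $\extpreals$. The payoff of this presentation is that naturality becomes automatic: the functor $r\scaling(-)$ changes neither underlying sets nor underlying functions, and $\mprod$ acts on underlying functions by the Cartesian product, so every naturality square in the statement already commutes in $\Set$, hence in $\CMet$. What is left is (i) to check that each chosen underlying map is non-expansive between the two indicated metrics, and (ii) to verify the four refinements, which concern identities, unitors, constant maps, and invertibility.

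For (i) I would first record the four inequalities, valid for all $x,y\in\uinterval$ and $r,s\in\extpreals$ (with $\infty\cdot 0=0\cdot\infty=0$):
\begin{align*}
\min\{r\min\{x+y,1\},1\} &\leq \min\{\min\{rx,1\}+\min\{ry,1\},1\}, \\
\min\{\min\{rx,1\}+\min\{sx,1\},1\} &= \min\{(r+s)x,1\}, \\
\min\{r\min\{sx,1\},1\} &\leq \min\{rsx,1\}, \\
\min\{rx,1\} &\leq \min\{sx,1\} \qquad (r\leq s),
\end{align*}
each proved by a two- or three-line case split according to whether the inner quantities have already saturated at $1$. The first gives non-expansiveness of $m_{r,A,B}$, the second that of the diagonal $c_{r,s,A}$ (and in fact that $c_{r,s,A}$ is isometric onto its image), the third that of $\asmap_{r,s,A}$, and the fourth that of $\kappa_{r,s,A}$; meanwhile $n_r\colon\unit\to r\scaling\unit$ and $w_A\colon 0\scaling A\to\unit$ are maps between copies of $\unit$, and $\epsilon_A\colon 1\scaling A\to A$ is literally the identity since $\min\{1\cdot x,1\}=x$ for $x\in\uinterval$. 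Throughout, the scalars $0$ and $\infty$ are handled separately using $0\scaling A=\unit$ and the fact that $\infty\scaling(-)$ yields the discrete space; this already settles the first two items, since $c_{r,s,A}$, taken uniformly to be the diagonal, collapses to the (inverse) left unitor $s\scaling A\to\unit\mprod s\scaling A$ when $r=0$, while $m_{0,A,B}$ and $\kappa_{0,s,A}$ have target $\unit$ and hence are forced to be the unique map into it.

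For the last two items I would check that the relevant underlying map, which is the identity in both cases, also admits a non-expansive inverse. For $\asmap_{r,s,A}$ this is the reverse of the third displayed inequality, $\min\{rsx,1\}\leq\min\{r\min\{sx,1\},1\}$: if $sx\leq 1$ the two sides coincide, and if $sx>1$ one needs $\min\{rsx,1\}\leq\min\{r,1\}$, which holds when $r\geq 1$ (the right-hand side is then $1$) and vacuously when $s\leq 1$ (then $sx>1$ is impossible for $x\leq 1$). For $m_{r,A,B}$ one needs $\min\{\min\{rx,1\}+\min\{ry,1\},1\}\leq\min\{r\min\{x+y,1\},1\}$: when $x+y\leq 1$ both sides reduce to $\min\{r(x+y),1\}$ after the usual case split, and when $x+y>1$ and $r\geq 1$ the right-hand side equals $1$, so the inequality is trivial. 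I do not expect any genuine obstacle here; the only real cost is bookkeeping, namely carrying the $1$-truncation correctly through every case and re-checking each claim at the degenerate scalars $0$ and $\infty$. Once these data and inequalities are in place, the graded-comonad laws in the sense of \cite[Definition~13]{BrunelGMZ14} and \cite[Section~5.2]{GaboardiKOBU16} follow by the same kind of routine diagram chase at the level of underlying sets.
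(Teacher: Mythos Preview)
Your proposal is correct and follows essentially the same approach as the paper: define each map by the evident underlying function and verify non-expansiveness by elementary inequalities on truncated arithmetic, then check the isomorphism claims by establishing the reverse inequalities under the stated side conditions. The paper's proof works directly with the distance functions rather than first abstracting the four scalar inequalities as you do, but the content is identical; your explicit remark that naturality is automatic (since scaling is the identity on underlying sets and functions) is a point the paper leaves implicit.
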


Note that without the conditions above, $\asmap_{r,s,A}$ and $m_{r,A,B}$
need not be isomorphisms. For example, if $A$ is discrete then $\frac12 (2A) = \frac12 A$.

\section{Fixed points}
\label{sec:fixed:points}

The Banach fixed point theorem~\cite{banach1922} states that any contractive function on a non-empty complete metric space has a unique fixed point. 
If $f \colon X \mprod p Y \to Y$ is a morphism in $\CMet$ and $p<1$ then, for any $x \in X$, the map $f(x, -) \colon p Y \to Y$ is a contraction on $Y$ and so has a unique fixed point $\semfp(f)(x)$. 

\begin{proposition} \label{prop:fp}
Let $p < 1$. If $f \colon (1-p)X \mprod pY \to Y$ then $\semfp (f) \colon X \to Y$ is non-expansive.  
There is a non-expansive map $\semfix \colon (pY \mexp Y) \to (1-p) Y$ mapping functions to fixed points.
\end{proposition}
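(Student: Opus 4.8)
The plan is to prove the two claims in Proposition~\ref{prop:fp} in turn, using the Banach fixed point theorem and the observation already made in the text that $\semfp(f)(x)$ is well-defined for each $x$.

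\textbf{Non-expansiveness of $\semfp(f)$.} Fix $f\colon (1\sminus p)X \mprod pY \to Y$ and write $g = \semfp(f)$, so that $g(x) = f(x, g(x))$ for every $x$. I want to show $d_Y(g(x), g(x')) \leq d_X(x,x')$. The key computation is the triangle inequality
\begin{align*}
d_Y(g(x), g(x'))
&= d_Y\bigl(f(x,g(x)), f(x',g(x'))\bigr) \\
&\leq (1\sminus p)\, d_X(x,x') + p\, d_Y(g(x), g(x')) \,,
\end{align*}
where the inequality uses that $f$ is non-expansive on $(1\sminus p)X \mprod pY$, i.e. non-expansive in each variable by the universal property of $\mprod$, composed with the triangle inequality in $Y$ (and dropping the outer truncation, which only helps). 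Rearranging gives $(1-p)\,d_Y(g(x),g(x')) \leq (1-p)\,d_X(x,x')$, and since $p<1$ we may divide by $1-p$ to conclude $d_Y(g(x),g(x')) \leq d_X(x,x')$. One subtlety to handle carefully is the $1$-bounded truncation in the definition of the tensor and rescaling metrics: the inequality $d_{(1-p)X\mprod pY}((x,y),(x',y')) \leq \min\{(1-p)d_X(x,x') + p\,d_Y(y,y'),1\}$ goes the convenient way, so applying non-expansiveness of $f$ and then the ordinary triangle inequality in $Y$ is sound; the bound $d_Y(g(x),g(x'))\le 1$ is automatic.

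\textbf{The map $\semfix$.} Here I instantiate the first part with $X = pY \mexp Y$ and $f = \eval \colon (pY \mexp Y) \mprod pY \to Y$, the evaluation counit, which is non-expansive. I first need $1\sminus p$ to be the right scaling factor on the domain $pY\mexp Y$; this is exactly the shape $(1\sminus p)X \mprod pY \to Y$ demanded by Proposition~\ref{prop:fp} once I check $\eval$ remains non-expansive after rescaling its first argument by $1\sminus p \le 1$, which follows from $\kappa$ (monotonicity of scaling, $\kappa_{(1-p),1,-}$) since non-expansiveness is preserved by post-composing with a non-expansive map and pre-composing with $\kappa$. Then $\semfix \defeq \semfp(\eval) \colon (pY\mexp Y) \to (1-p)Y$ is non-expansive by the first part, and for $h \in pY\mexp Y$ the value $\semfix(h)$ satisfies $\semfix(h) = \eval(h, \semfix(h)) = h(\semfix(h))$, i.e. it is the unique fixed point of $h$. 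I should double-check the codomain annotation: the first part gives non-expansiveness as a map into $Y$, but the claim is into $(1-p)Y$, which is a \emph{stronger} statement (distances in $(1-p)Y$ are smaller). This is where the real content lies and it needs the sharper version of the estimate above: for $h, h'\in pY\mexp Y$,
\begin{align*}
d_Y\bigl(\semfix(h), \semfix(h')\bigr)
&= d_Y\bigl(h(\semfix(h)), h'(\semfix(h'))\bigr) \\
&\leq d_{pY\mexp Y}(h,h') + p\, d_Y\bigl(\semfix(h),\semfix(h')\bigr) \,,
\end{align*}
using $d_Y(h(y),h'(y)) \le d_{pY\mexp Y}(h,h')$ (definition of the exponential metric, noting the domain is $pY$ not $Y$, which makes this bound hold) and $d_Y(h'(\semfix(h)),h'(\semfix(h'))) \le p\,d_Y(\semfix(h),\semfix(h'))$ ($h'$ non-expansive on $pY$). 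Rearranging yields $d_Y(\semfix(h),\semfix(h')) \le \tfrac{1}{1-p} d_{pY\mexp Y}(h,h')$, i.e. $(1-p)\,d_Y(\semfix(h),\semfix(h')) \le d_{pY\mexp Y}(h,h')$, which is precisely non-expansiveness into $(1-p)Y$ (again the truncation is harmless).

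\textbf{Main obstacle.} The routine part is the two triangle-inequality computations; the point requiring the most care is bookkeeping with the $1$-bounded truncations and the rescaling factors — in particular making sure that dividing by $1-p$ is legitimate and that the exponential metric $d_{pY\mexp Y}$ really does bound $d_Y(h(y),h'(y))$ for $y$ ranging over $pY$ (it does, since the underlying sets of $pY$ and $Y$ coincide). A secondary point is verifying $\eval$ stays non-expansive after rescaling the first argument down by $1-p$; this is immediate from $\kappa$ but worth stating. No compactness or limiting argument beyond Banach's theorem (already invoked) is needed.
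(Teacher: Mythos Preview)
Your proof is correct, and the core estimate you use---rearranging $d_Y(\text{fix}, \text{fix}') \leq p\,d_Y(\text{fix},\text{fix}') + (\text{remainder})$---is exactly what the paper does. The organization differs, though. You prove both statements by two separate instances of this computation, first for $\semfp(f)$ and then again for $\semfix$, after correctly observing that your first statement is too weak to yield the second (it lands in $Y$, not $(1\sminus p)Y$). The paper goes the other way round: it proves the $\semfix$ estimate first (your second computation), then obtains the first statement for free by currying $f\colon (1\sminus p)X\mprod pY \to Y$ to $(1\sminus p)X \to (pY \mexp Y)$, composing with $\semfix$ to land in $(1\sminus p)Y$, and finally using $\tfrac{1}{1-p}((1\sminus p)X) = X$. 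This avoids running the triangle-inequality argument twice and makes the detour through $\kappa$ in your second paragraph unnecessary. Either order works; the paper's is just a bit more economical.
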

\begin{proof}%[Proof of Proposition~\ref{prop:fp}]
For the proof of the second statement, define $\semfix$ as the function mapping a contraction
$f \colon pY \to Y$ to its unique fixed point. Let $f,f' \colon pY \to Y$, then %. Since $f(\semfix(f)) = \semfix(f)$,
\begin{align*}
 d(\semfix (f), \semfix (f')) & \leq d(\semfix (f), f(\semfix (f')) + d(f(\semfix(f')), \semfix(f')) \\
 & =  d(f(\semfix (f)), f(\semfix (f')) + d(f(\semfix(f')), f'(\semfix(f'))) \\
 & \leq pd(\semfix(f), \semfix(f')) + d(f,f')  
\end{align*}
so that $(1-p)d(\semfix (f), \semfix (f')) \leq d(f,f')$, as desired. For the first statement, by currying $f$ and composing with $\semfix$ we see that
\[
  \semfix \circ \curry(f) \colon (1-p)X \to (1-p) Y \,.
\]
Thus, we can define $\semfp(f) \colon X \to Y$ as the composition $\asmap^{-1}_Y \circ \frac{1}{1-p}(\semfix \circ \curry(f)) \circ \asmap_X$, where $\asmap$ is natural isomorphims from Theorem~\ref{thm:graded:comonad}.
\end{proof}

\section{Probability measures}
\label{sec:prob:meas}

We introduce the \emph{Radon probability monad} on $\CMet$ and recall its presentation as the free complete interpolative barycentric algebra~\cite{MardarePP18}.

A (Borel) probability measure $\mu$ on a metric space $X$ is \emph{Radon} if for any Borel set $E \subseteq X$, $\mu(E)$ is the supremum of $\mu(K)$ over all compact subsets $K$ of $E$. Examples of Radon probability measures are Dirac measures, (the Borel restriction of) the Lebesgue measure over the unit interval, any probability measure with finite support or over a Polish space (\ie, a complete metric space with a countable dense subset).

A \emph{coupling} between
two probability measures $\mu$ and $\nu$ on $X$ is a probability measure $\omega$ on $X \times X$
whose left and right marginals are,
respectively, $\mu$ and $\nu$ (\ie, 
$\omega(E \times X) = \mu(E)$ and $\omega(X \times E) = \nu(E)$, for all Borel sets $E$). The product measure $\mu \times \nu$ is always a coupling between $\mu$ and $\nu$. Moreover, if $\mu$ and $\nu$ are Radon, so is any coupling $\omega$ between them.

For $X$ a metric space, denote by $\D X$ the space of Radon probability measures over $X$ equipped with the 
\emph{Kantorovich distance}, defined by
\begin{equation} \label{eq:Kantorovic}
d_{\D X}(\mu,\nu) = \min_{\omega} \int \, d_X(x,x') \; {\omega(\mathrm{d}x,\mathrm{d}x')}
\end{equation}
where $\omega$ runs over the couplings between $\mu$ and $\nu$.
If $X$ is a complete metric space, so is $\D X$. 

The definition above extends to a monad $\D$ on $\CMet$, called \emph{Radon probability monad}, with underlying functor acting on morphisms $f\colon X \to Y$ as $\mu \in \D X \mapsto \mu \circ f^{-1} \in \D Y$ (a.k.a., the pushforward measure along $f$).
The unit of $\D$ is the Dirac measure $\delta_X \colon X\to \D X$, but rather than describing the multiplication,
we recall that this monad has an algebraic presentation as the free complete
\emph{interpolative
barycentric algebra}~\cite{MardarePP16,MardarePP18}.
\begin{definition}[IB Algebra]
\label{def:IB:alg}
An \emph{interpolative barycentric algebra} is a complete metric space $X$ with non-expansive operations $\oplus_p \colon pX \mprod(1-p) X \to X$, for all $p \in (0,1)$, such that
\begin{align*}
    x \oplus_p x &= x \tag{\sc idem} \\
    x \oplus_p y &= y \oplus_{1-p} x\tag{\sc comm} \\
    (x \oplus_p y) \oplus_q z 
    &= x \oplus_{pq} (y \oplus_{\frac{q-pq}{1-pq}} z) \tag{\sc assoc} % \\
\end{align*}

A homomorphism $f \colon X \to Y$ of IB algebras is a non-expansive map such that $f(x \oplus_p y) = f(x) \oplus_p f(y)$ for all $p \in (0,1)$.
\end{definition}
The axioms are those of barycentric algebras (a.k.a., convex algebras), axiomatizing  probabilistic choice by means of binary convex combinations $x \oplus_p y$. 
The definition above is equivalent to that in~\cite{MardarePP18}, as the
type imposed on the operations $\oplus_p$ is 
equivalent to requiring
\[
   d_X(x \oplus_p y, x' \oplus_p y')  \leq p d_X(x,x') + (1-p) d_X(y,y') \,. %\tag{\sc ib}
\]
The formulation proposed in Definition~\ref{def:IB:alg} is preferable in our context because it incorporates the Lipschitz constants directly into the type of the operation. This not only enables the remaining conditions to be expressed purely as equations but also ensures that many probabilistic processes can be defined using the Banach fixed point combinator
as we shall see below.

It is not difficult to show that, for any $X \in \CMet$,
$\D X$ is a complete interpolative barycentric algebra, by interpreting the operations as 
$\mu \oplus_p \nu = p \mu + (1-p) \nu$, for all $\mu,\nu \in \D X$. 
The next result states that $\D X$ is the free algebra with respect to all Lipschitz maps, which follows as a corollary of~\cite[Theorem~3.8]{MardarePP18}.
Before we state it,  note that if $A$ and $B$ are IB-algebras, the equational definition of IB-homomorphism extends
to Lipschitz maps $f \colon rA \to B$. In terms of diagrams,  this can be stated as the commutativity of
\[
 \begin{tikzcd}[column sep=10ex]
  (pr)A \mprod (\bar pr) A \ar{r}{m \circ (\asmap\mprod\asmap)} \arrow[d,"\asmap\mprod\asmap"'] & r(pA \mprod \bar pA) \ar{r} & rA \ar{d}{f}  \\
  p(rA) \mprod \bar p(rA) \ar{r}{pf \mprod \bar p f} & pB \mprod \bar p B \ar{r} & B
 \end{tikzcd}
 \tag{where $\bar p = 1-p$}
\]

\begin{proposition} \label{prop:D:free:IB}
 If $f \colon \Gamma \mprod rX \to Y$ (with $r<\infty$) and $Y$ is an IB algebra, there exists a unique  $\overline f \colon \Gamma \mprod r\D X \to Y$
 which is a homomorphism in its second argument, satisfying 
 $f = \overline f \circ (\Gamma \mprod r\delta_X)$. 
\end{proposition}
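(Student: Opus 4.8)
The plan is to deduce this from the non-parametric freeness theorem \cite[Theorem~3.8]{MardarePP18} via two reductions: first eliminate the parameter context $\Gamma$ by currying, and then, if that theorem is used only in its non-expansive form, pass to $r$-Lipschitz extensions by a density argument over finitely supported measures.

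\emph{Removing $\Gamma$.} Since $\CMet$ is symmetric monoidal closed there is a natural bijection between maps $\Gamma \mprod rX \to Y$ and maps $rX \to (\Gamma \mexp Y)$; write $g$ for the map corresponding to $f$. The space $\Gamma \mexp Y$ is again a complete IB algebra: completeness of the sup-metric function space follows from completeness of $Y$, and the pointwise operations $(h \oplus_p h')(\gamma) = h(\gamma)\oplus_p h'(\gamma)$ are well-typed non-expansive maps $p(\Gamma\mexp Y)\mprod(1-p)(\Gamma\mexp Y) \to (\Gamma\mexp Y)$ because the convexity inequality noted after Definition~\ref{def:IB:alg} gives $d_Y(h\gamma\oplus_p h'\gamma,\, k\gamma\oplus_p k'\gamma) \le p\, d_Y(h\gamma,k\gamma) + (1-p)\, d_Y(h'\gamma,k'\gamma)$, while {\sc idem}, {\sc comm}, {\sc assoc} hold pointwise. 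Hence it suffices to show: for every non-expansive $g \colon rX \to Z$ with $Z$ a complete IB algebra there is a unique homomorphism $\overline g \colon r\D X \to Z$ (in the Lipschitz sense of the diagram preceding the proposition, which, unwound, just says $\overline g$ is non-expansive and $\overline g(\mu\oplus_p\nu) = \overline g(\mu)\oplus_p\overline g(\nu)$) with $\overline g \circ r\delta_X = g$. Uncurrying $\overline g$ then produces $\overline f \colon \Gamma \mprod r\D X \to Y$; since the IB structure on $\Gamma\mexp Y$ is pointwise, $\overline f$ is a homomorphism in its second argument iff $\overline g$ is a homomorphism, the equation $f = \overline f \circ (\Gamma\mprod r\delta_X)$ is the uncurried form of $g = \overline g \circ r\delta_X$, and uniqueness transfers along the currying bijection.

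\emph{The $r$-Lipschitz case.} On the set $\D_\omega X \subseteq \D X$ of finitely supported measures, with $\mu \oplus_p \nu = p\mu+(1-p)\nu$ — the free set-level barycentric algebra on $X$ — the non-expansive map $g$ extends uniquely to a barycentric homomorphism $\D_\omega X \to Z$, necessarily $\sum_i p_i\delta_{x_i} \mapsto \bigoplus_i p_i\, g(x_i)$; call it $\overline g$. I claim $\overline g \colon r\D_\omega X \to Z$ is non-expansive. Given finitely supported $\mu = \sum_i p_i\delta_{x_i}$, $\nu = \sum_j q_j\delta_{y_j}$ with distinct supports and any coupling $\omega = \sum_{i,j} w_{ij}\delta_{(x_i,y_j)}$, the axioms {\sc idem} and {\sc assoc} let one re-flatten both barycentres over the common weighted index set, so that $\overline g(\mu) = \bigoplus_{i,j} w_{ij}\, g(x_i)$ and $\overline g(\nu) = \bigoplus_{i,j} w_{ij}\, g(y_j)$. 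Iterating the convexity inequality then gives $d_Z(\overline g\mu, \overline g\nu) \le \sum_{i,j} w_{ij}\, d_Z(g x_i, g y_j) \le r\sum_{i,j} w_{ij}\, d_X(x_i, y_j)$, and taking the infimum over $\omega$, together with $1$-boundedness of $Z$, yields $d_Z(\overline g\mu, \overline g\nu) \le \min\{r\, d_{\D X}(\mu,\nu),\, 1\} = d_{r\D X}(\mu,\nu)$. Since finitely supported measures are dense in $\D X$ and $Z$ is complete, $\overline g$ extends uniquely to a non-expansive map $r\D X \to Z$, also written $\overline g$; it stays a homomorphism because, for fixed finitely supported $\nu$, the maps $\mu \mapsto \overline g(\mu\oplus_p\nu)$ and $\mu \mapsto \overline g\mu \oplus_p \overline g\nu$ are continuous and agree on the dense set $\D_\omega X$, after which one lets $\nu$ vary over a dense set too; and any homomorphism $r\D X \to Z$ extending $g$ restricts on $\D_\omega X$ to a barycentric homomorphism extending $g$, hence coincides with $\overline g$ there and, by density, everywhere. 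The case $r = 1$ recovers \cite[Theorem~3.8]{MardarePP18}; the hypothesis $r < \infty$ is precisely what keeps $\D_\omega X$ dense in $r\D X$ (for $r = \infty$ the metric on $r\D X$ is discrete).

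The main obstacle is the re-flattening step in the Lipschitz bound: making rigorous that $\overline g(\mu)$ can be rewritten as a single convex combination indexed by the pairs $(i,j)$ with weights $w_{ij}$, so that the binary convexity inequality can be summed against the coupling $\omega$. This is a purely algebraic manipulation with {\sc idem}, {\sc comm} and {\sc assoc} — the barycentric-algebra analogue of the gluing lemma for couplings — and once it is in place the remaining ingredients (the currying bijection, the pointwise IB structure on $\Gamma\mexp Y$, and the continuity/density extension) are routine.
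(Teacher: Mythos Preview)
Your proof is correct. Both you and the paper eliminate $\Gamma$ by currying into $\Gamma \mexp Y$ (invoking Lemma~\ref{lem:IB:mexp:r}), though the paper does this last and you do it first. The substantive difference is in handling the scalar $r$: the paper avoids redoing any density argument by a case split, observing that for $0<r\le 1$ the spaces $r\D X$ and $\D(rX)$ are literally equal (same Borel sets, and linearity of the integral gives $d_{r\D X}=d_{\D(rX)}$), so one can invoke \cite[Theorem~3.8]{MardarePP18} directly on $rX$; for $1<r<\infty$ it passes through the adjunction $r(-)\dashv \tfrac1r(-)$, transposing $f$ to a map into $\tfrac1rY$ (an IB algebra by Lemma~\ref{lem:IB:mexp:r}), applying the $r=1$ case, and transposing back. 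Your route instead carries the factor $r$ through a direct extension-by-density argument on $\D_\omega X$, essentially reproving the freeness theorem with the Lipschitz constant tracked. Your approach is more self-contained and makes the role of $r<\infty$ (density of $\D_\omega X$ in $r\D X$) transparent; the paper's is more modular, reducing everything to a single black-box invocation of the cited theorem plus structural isomorphisms, at the cost of a three-way case split. The re-flattening step you flag is indeed routine barycentric-algebra algebra and poses no real obstacle.
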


Observe that the special case of $r=1$ and $\Gamma=\unit$ 
in Proposition~\ref{prop:D:free:IB} can be rephrased as the existence of a left adjoint to the forgetful functor from the category of IB algebras to $\CMet$.
Thus, as anticipated, $\D$ forms a monad on $\CMet$. This monad is moreover strong. The restriction on $r$ being finite means that 
$\overline f$ is continuous. This is necessary because the other requirements otherwise only
determine the value of $\overline f$ on finite distributions, which form a dense subset of $\D X$. 

\begin{lemma} \label{lem:IB:mexp:r}
 If $X,Y$ are IB algebras, so are, $X \mprod Y$, $Z \mexp X$, and $qX$ for any $Z$ and $q \leq 1$. 
\end{lemma}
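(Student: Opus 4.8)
The plan is to verify, in each of the three cases, that the IB-algebra operations $\oplus_p$ can be defined on the new space, that they are non-expansive with the required type, and that they satisfy \textsc{idem}, \textsc{comm} and \textsc{assoc}. In each case the operations will be defined componentwise or pointwise from those of the given algebra(s), so the three equational axioms will hold immediately because they hold in the underlying algebra(s); the only real content is checking the \emph{types}, i.e. that $\oplus_p \colon p(-) \mprod (1-p)(-) \to (-)$ is non-expansive for the metric on the new space. I would state this reduction once at the start and then treat the cases.

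For $X \mprod X'$: define $(x,x') \oplus_p (y,y') = (x \oplus_p y, x' \oplus_p y')$. Using the universal property of $\mprod$ (a map out of a tensor is non-expansive iff non-expansive in each variable) together with the characterisation recalled after Definition~\ref{def:IB:alg} that the type of $\oplus_p$ is equivalent to $d(x \oplus_p y, x' \oplus_p y') \leq p\, d(x,x') + (1-p)\, d(y,y')$, this is a short computation: the distance in $X \mprod X'$ is the truncated sum of the two component distances, and the bound follows by adding the two componentwise bounds and using monotonicity of truncation. For $Y \mexp X$: define $(f \oplus_p g)(y) = f(y) \oplus_p g(y)$; one must first check $f \oplus_p g$ is again non-expansive $Y \to X$, which is immediate from the bound for $\oplus_p$ in $X$ applied at $d_Y(y,y') $ on both sides, and then check the Lipschitz bound in the $\sup$-metric, which follows by taking $\sup_y$ of the pointwise bound in $X$. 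For $qX$ with $q \leq 1$: the underlying set and operations are those of $X$, only distances are rescaled by $q$ (and truncated); the required bound $d_{qX}(x \oplus_p y, x' \oplus_p y') \leq p\, d_{qX}(x,x') + (1-p)\, d_{qX}(y,y')$ follows from the bound in $X$ by multiplying by $q$ and truncating, using $q \leq 1$ to commute the scaling past the truncation appropriately (this is where $q \leq 1$ is needed, matching the failure noted after Theorem~\ref{thm:graded:comonad}).

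I do not expect a serious obstacle here; the statement is essentially bookkeeping. The one point requiring a little care is the $qX$ case, where one must be careful about the interaction of the factor $q$ with the $1$-bounded truncation $\min\{-,1\}$ in the definitions of $d_{qX}$ and of $d_{X \mprod X'}$ — but since $q \leq 1$, truncating before or after multiplying by $q$ is harmless in the direction of the inequality we need, so the bound goes through. Completeness of each of the three spaces is either already recorded in Section~\ref{sec:prelim} (for $\mprod$, $\mexp$) or is immediate for $qX$ since a sequence is Cauchy in $qX$ iff it is Cauchy in $X$ (for $q>0$; for $q=0$ the space is $\unit$), so no new completeness argument is needed.
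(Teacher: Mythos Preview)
Your proposal is correct. The approach differs from the paper's in presentation: you verify the Lipschitz bound for $\oplus_p$ by direct metric computation (adding component bounds for $\mprod$, taking $\sup_y$ for $\mexp$, multiplying by $q$ for $qX$), whereas the paper constructs each operation abstractly as a composite of the structure maps of Theorem~\ref{thm:graded:comonad}. For instance, for $qX$ the paper observes that since $q\leq 1$ one has $p(qX) \mprod (1{-}p)(qX) \iso q(pX) \mprod q((1{-}p)X)$ via $\asmap$, and then uses $m$ and $q\oplus_p$; for $Y \mexp X$ the paper curries a map $Y \mprod p(Y \mexp X) \mprod (1{-}p)(Y \mexp X) \to X$ built from $c_{p,1-p,Y}$, $m$, evaluation and $\oplus_p^X$. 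The underlying operations agree with yours (componentwise, pointwise, identity on elements), so the two routes produce the same algebra structure. Your argument is more elementary and makes the role of $q \leq 1$ visible at the level of inequalities; the paper's is more categorical and shows the result as a formal consequence of the graded-comonad laws, with non-expansiveness coming for free since everything is a morphism in $\CMet$. Incidentally, your observation that for $q\leq 1$ the truncation in $d_{qX}$ never activates (since $q\,d_X \leq 1$) slightly simplifies your $qX$ case: no care about truncation is actually needed.
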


\section{A calculus for $\CMet$}
\label{sec:calculus}

We now define a calculus for programming in the category $\CMet$. 
\subparagraph*{Syntax.}
The syntax is based on a simply-typed $\lambda$-calculus with products and sums, extended with primitives for probabilistic distributions, recursion, and fixed points. 
\begin{align*}
 t, u, v &{} ::= {}
  x \mid \lambda x. t \mid t u \mid
 () \mid 
 \tuple{t,u} \mid \pi_1 t \mid \pi_2 t \mid
 \inj_1 t \mid \inj_2 t \mid \case{t}{x}{u}{y}{v}
 \\
 &\mid  (t,u) \mid \letIn{(x,y) = u }{t} %\\&
 \mid \delta t \mid  t \oplus_p u \mid \letIn{x = u}{t} \\
 &\mid \Zero \mid \Succ{t} \mid \Rec{u}{(x,y).t}{v} %\\&
 \mid \fix{x}{t}
\end{align*}
There are two pairs constructors, $\tuple{t,u}$ and $(t,u)$, corresponding to the Cartesian and monoidal products, respectively. The first one is eliminated using the projections $\pi_i t$, whereas the second one is eliminated using $(\letIn{(x,y) = u }{t})$. The term $()$ is the unit value. The injections $\inj_i t$ form expressions of sum type, which are eliminated by case analysis ($\case{t}{x}{u}{y}{v}$). 
The term $\delta(t)$ denotes a Dirac distribution, and $t \oplus_p u$ the
convex sum of $t$ and $u$. Probability distributions are  sampled using $(\letIn{x = u}{t})$. The terms $\Zero$ and $\Succ{t}$ are constructors for natural numbers. $\Rec{u}{(x,y).t}{v}$ denotes a term obtained by primitive recursion on natural numbers. Finally, $\fix{x}{t}$ is the ``Banach'' fixed point combinator. 

The \emph{types} of the calculus are defined by the grammar  
\begin{align*}
    A,B ::= %b \mid %\uinterval \mid 
    \nat \mid
    \unitT \mid 
    A \times B \mid 
    A + B \mid 
    A \pair{r}{s} B \mid A \lol_r B \mid 
    \TD A
    && \text{($r,s\geq 0$)}
\end{align*}
essentially corresponding to the categorical operations in $\CMet$ from Section~\ref{sec:prelim}.
Although rescaling of metric spaces played a central role in Section~\ref{sec:prelim}, it is not a primitive type former in the calculus. Instead, it is part of the tensor type $A \pair{r}{s} B$ and function type $A \lol_r B$ constructors. This choice was made to minimize the book keeping necessary for scalars in terms. Finally, $\TD A$ is the type of Radon probability measures on $A$.

Terms are typed with judgements of the form $\hastype{\Gamma}{t}{A}$, where $A$ is a type and $\Gamma$ a typing context. The complete list of typing rules is given in Figure~\ref{fig:typingrules}.
The typing system is inspired by Fuzz~\cite{ReedP10}, where the sensitivity of each variable used in a term is tracked by annotations of the form $x:^r A$ in typing contexts. More precisely,
a binding $x:^rA$ in a context $\Gamma$ means that $x$ has type $A$ under $\Gamma$ and that terms typed under $\Gamma$ are $r$-sensitive with respect to
$x$.

Formally, typing contexts are constructed according to the following formation rules
\begin{align*}
    \frac{}{\emptyctx :: \textsf{ctx}}
    &&
    \frac{\Gamma :: \textsf{ctx} \quad x \notin \Gamma \quad r\in \extpreals}{ \Gamma, x :^r A :: \textsf{ctx}}
\end{align*}
As usual, $\Gamma, \Gamma'$ denotes concatenation of contexts with disjoint variable bindings.  
Most rules use the operations of sum $\Gamma+\Gamma'$ and scaling $r\scaling \Gamma$ of contexts, to keep track of the sensitivities.
These are defined as the point-wise sum and scaling of the sensitivity in each variable binding: 
\begin{align*}
 \emptyctx + \emptyctx &= \emptyctx &
 (\Gamma, x :^r A) + (\Gamma', x :^s A) &= (\Gamma + \Gamma'), x :^{r+s} A \,,
\\
 r \scaling \emptyctx &= \emptyctx &
 r \scaling (\Gamma, x :^s A) &= r \scaling \Gamma, x :^{r \cdot s} A \,.
\end{align*}
Observe that $\Gamma+\Gamma'$ is defined only when $\Gamma$ and $\Gamma'$ are compatible, \ie, they agree on the order and the types of all variable bindings ---for example $(x:^r A,y:^s B)+(y:^s B,x:^r A)$ is not defined. In the rest of the paper, whenever we take a sum $\Gamma+\Gamma'$, compatibility of the contexts is assumed implicitly.

\begin{figure*}[t]
\small
\centering
\begin{mathpar}
\infrule[var]{
    r \geq 1}{
\hastype{\Gamma, x:^rA, \Gamma'}{x}{A} }
\and
\infrule[unit]{ }{
\hastype{\Gamma}{()}{\unitT} }
\and
\infrule[abs]{
    \hastype{\Gamma, x:^rA}{t}{B} }{
\hastype{\Gamma}{\lambda x. t}{A \lol_r B} }
\and
\infrule[app]{
    \hastype{\Gamma}{t}{A \lol_r B} &
    \hastype{\Gamma'}{u}{A} }{
\hastype{\Gamma + r\scaling\Gamma'}{t\,u}{B} }
\and 
\infrule[pair]{
    \hastype{\Gamma}{t}{A} &
    \hastype{\Gamma}{u}{B} }{
\hastype{\Gamma}{\tuple{t,u}}{A \times B} }
\and
\infrule[$\pi_i$]{
    \hastype{\Gamma}{t}{A_1 \times A_2} }{
\hastype{\Gamma}{\pi_i t}{A_i} }
\and      
\infrule[$\inj_i$]{
    \hastype{\Gamma}{t}{A_i}}{
\hastype{\Gamma}{\inj_i t}{A_1+A_2} }
\and
\infrule[case]{
    \hastype{\Gamma, x:^r A}{u}{C} &
    \hastype{\Gamma, y:^r B}{v}{C} &
    \hastype{\Gamma'}{t}{A+B} &
    r\geq 1  }{ %& (i=1,2)
\hastype{\Gamma + r\scaling\Gamma'}{
    \case{t}{x}{u}{y}{v}}{C} }
\and
\infrule[$\mprod$]{
    \hastype{\Gamma}{t}{A} &
    \hastype{\Gamma'}{u}{B} }{
\hastype{r\scaling\Gamma + s\scaling\Gamma'+\Gamma''}{(t,u)}{A \pair{r}{s} B} }
\and 
\infrule[let-$\mprod$]{
    \hastype{\Gamma, x:^rA, y:^sB}{t}{C} &
    \hastype{\Gamma'}{u}{A \pair{r}{s} B} }{
\hastype{\Gamma + \Gamma'}{\letIn{(x,y) = u }{t}}{C} }
\and
\infrule[$\delta$]{
    \hastype{\Gamma}{t}{A} }{
\hastype{\Gamma}{\delta t}{\TD A} }
\and
\infrule[$\oplus_p$]{
    \hastype{\Gamma}{t}{\TD A} &
    \hastype{\Gamma'}{u}{\TD A}
    & p\in(0,1) }{
\hastype{p\scaling\Gamma + (1-p)\scaling\Gamma'}{t \oplus_p u}{\TD A} }
\and
\infrule[let]{
    \hastype{\Gamma, x:^rA}{t}{E} &
    \hastype{\Gamma'}{u}{\TD A} & 
    \text{$E$ IB algebra}  &
    r<\infty}{
\hastype{\Gamma + r\scaling\Gamma'}{\letIn{x = u}{t}}{E} }
\and
\infrule[zero]{}{
    \hastype{\Gamma}{\Zero}{\nat} }
\and
\infrule[succ]{
    \hastype{\Gamma}{t}{\nat} }{
\hastype{\Gamma}{\Succ{t}}{\nat} }
\and
\infrule[rec]{
    \hastype{\Gamma}{z}{A} &
    \hastype{\Gamma', x:^1 A, y:^1 \nat}{s}{A} &
    \hastype{\Gamma''}{n}{\nat} }{
\hastype{\Gamma + \infty \scaling \Gamma' + \Gamma''}{\Rec{z}{(x,y).s}{n}}{A} }
\and
\infrule[fix]{
    \hastype{(1-p)\scaling\Gamma, x:^pA}{t}{A} & 
    p < 1 }{
\hastype{\Gamma}{\fix{x}{t}}{A} }
\end{mathpar}
    \caption{Typing rules.}
    \label{fig:typingrules}
\end{figure*}

The rule (\textsc{var}) for variable introduction reflects that projection can be given any Lipschitz factor $r\geq 1$. We allow
all such $r$, and not just $r=1$ to incorporate weakening directly into the typing rules. 
Note that, in the rule (\textsc{case}) for sum elimination, the sensitivities of the bound variables $x$ and $y$ are required to match and be greater or equal than $1$, as scaling by $r<1$ does not commute with coproducts.
In the rule ($\mprod$) for tensor introduction, $\Gamma''$ is used merely to incorporate weakening in the rule%
\footnote{The use of a free context, is necessary for build weakening in the rule in the case both $r$ and $s$ are $0$. This detail was overlooked in~\cite{AmorimGHKC17} in their rule for introduction of scaled type as their weakening lemma fails when scaling by $0$.}. 
In the rule (\textsc{rec}) for natural number recursion, the successor case term $s$ can have additional variables to $x$ and $y$. However, 
$s$ must be applied $n$ times, and therefore the sensitivity of $\Gamma'$ must be scaled by $n$. Since $n$ is not known statically, %in a simply typed calculus, 
the only possible upper limit is $\infty$. The type of the fixed point operator reflects Proposition~\ref{prop:fp}. The elimination rule for $\TD$ uses the judgment of a type being an IB algebra. These are defined by the grammar 
\[
 E,F ::= \TD A \mid E \pair{p}{q} F \mid A\lol_r E  
 \qquad \text{($p,q \leq 1$ and $r\geq 0$)}
\]
as justified by Lemma~\ref{lem:IB:mexp:r}.

\begin{example}[The geometric distribution] \label{example:geo:calc}
Let $p \in (0,1)$.
The geometric distribution $\geo : \TD(\nat)$ is uniquely characterized by the recurrence $\geo \jeq \delta(0) \oplus_p \TD(\Succbare)(\geo)$. It is therefore natural to define it as the following fixed point:
\begin{align*}
\geo &\defeq \fix x \delta(0) \oplus_p\TD(\Succbare)(x) \,,
\end{align*}
where, the functorial action of $\TD$ on a function $f \colon A \lol_r B$ (with $r < \infty$) is given by the term $\TD(f) \defeq \lambda x. \letIn{a=x}{\delta(f(a))} \colon \TD(A) \lol_r \TD(B)$. In particular, $\TD(\Succbare) \colon \TD(\nat) \lol_1 \TD(\nat)$. Hence, $\geo$ is well typed by (\textsc{fix}), since $x$ has sensitivity $1-p$ in the body of the fixed point.
\end{example}

As usual, terms are considered equal up to $\alpha$-equivalence.
We denote by $t[u/x]$ the capture-avoiding substitution of the term $u$ for the free variable $x$ in $t$. 

The following two forms of context weakening for typing judgements hold.
\begin{lemma}[Weakening] \label{lm:weakening} \ 
\begin{enumerate}
    \item If\, $\hastype{\Gamma,\Gamma'}{t}{A}$, then $\hastype{\Gamma,\Delta,\Gamma'}{t}{A}$;
    \item If\, $\hastype{\Gamma}{t}{A}$, then $\hastype{\Gamma+\Delta}{t}{A}$.
\end{enumerate}
\end{lemma}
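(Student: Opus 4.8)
The proof is by induction on the typing derivation $\hastype{\Gamma}{t}{A}$ (for part (1), on the derivation of $\hastype{\Gamma,\Gamma'}{t}{A}$). I would first observe that part (2) is essentially a special case of part (1): adding a binding $x :^s A$ with sensitivity $s$ is the same as first weakening in $x :^0 A$ (which part (1) handles) and then raising its sensitivity from $0$ to $s$. Raising a sensitivity is harmless because every rule that constrains a sensitivity does so with a lower bound ($r \geq 1$ in \textsc{var} and \textsc{case}, $r < \infty$ in \textsc{let}, $p < 1$ in \textsc{fix}) or requires only an upper bound ($p, q \leq 1$ in the IB-algebra grammar, but those are on types not contexts); none is an equality on a context sensitivity, and the context operations $\Gamma + \Gamma'$ and $r\scaling\Gamma$ are monotone in sensitivities. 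Strictly, one proves a slightly stronger "subsumption" statement first: if $\hastype{\Gamma}{t}{A}$ and $\Gamma'$ has the same variables and types as $\Gamma$ but pointwise larger sensitivities, then $\hastype{\Gamma'}{t}{A}$. Part (2) and the sensitivity-raising step both follow from this.

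So the core work is part (1), the structural weakening, by induction on the derivation. For each typing rule I need to check that inserting an extra block $\Delta$ of bindings into the context — with whatever sensitivities $\Delta$ carries — is compatible with how the rule splits, sums, and scales contexts. The leaf rules (\textsc{var}, \textsc{unit}, \textsc{zero}) are immediate: e.g. for \textsc{var}, the conclusion context is arbitrary apart from the named binding, so inserting $\Delta$ preserves well-formedness (one must check $x \notin \Delta$, which we may assume by $\alpha$-renaming / disjointness conventions on concatenation). For the rules that split the context additively or multiplicatively — \textsc{app}, \textsc{case}, $\mprod$, \textsc{let-}$\mprod$, $\oplus_p$, \textsc{let}, \textsc{rec} — I push the extra bindings $\Delta$ into one of the premises (say with sensitivity $0$ everywhere, or keeping its given sensitivities and splitting them to taste), apply the induction hypothesis there, and then observe that the rule's context-combining operation reproduces the $\Delta$ bindings in the conclusion with the correct sensitivities: for instance in \textsc{app}, weakening the premise $\hastype{\Gamma'}{u}{A}$ to $\hastype{\Gamma',\Delta}{u}{A}$ yields conclusion context $\Gamma + r\scaling(\Gamma',\Delta) = (\Gamma + r\scaling\Gamma'),\, r\scaling\Delta$, and since we also weaken the first premise appropriately (or just note $\Gamma$ can absorb a zero-sensitivity copy of $\Delta$), the $\Delta$ part appears as desired. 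The scaling rules ($\infty\scaling\Gamma'$ in \textsc{rec}, $(1-p)\scaling\Gamma$ in \textsc{fix}, $r\scaling\Gamma$ in \textsc{app}/\textsc{case}) commute with concatenation by the defining equations $r\scaling(\Gamma,\Gamma'') = r\scaling\Gamma, r\scaling\Gamma''$, so nothing goes wrong there.

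The main obstacle — and the one subtlety worth flagging — is that $\Gamma + \Gamma'$ is a partial operation: it is only defined when the two contexts list the same variables, in the same order, with the same types. So when I weaken a premise whose context is, say, $\Gamma + r\scaling\Gamma'$ by inserting $\Delta$, I must make sure the $\Delta$ block is inserted at the same position in both $\Gamma$ and $\Gamma'$ before summing, and with matching types; otherwise the sum in the conclusion is ill-defined. This is exactly the kind of bookkeeping the paper's footnote about the free context $\Gamma''$ in rule $\mprod$ is warning about. The clean way to handle it is to state the induction hypothesis with $\Delta$ appearing in a fixed, named position (the general form $\hastype{\Gamma,\Delta,\Gamma'}{t}{A}$ already does this) and to take all the context-splitting in the rules as splitting the two flanking pieces $\Gamma$ and $\Gamma'$ while leaving $\Delta$ intact (carried, say, with zero sensitivity, then raised afterwards via the subsumption lemma if needed). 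With that convention the partiality never bites, and every case reduces to a one-line appeal to the defining equations of $+$ and $\scaling$ on contexts together with the induction hypothesis.
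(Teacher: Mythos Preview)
Your plan works, but the opening claim that ``part (2) is essentially a special case of part (1)'' is confused. In $\Gamma + \Delta$ the two contexts must be compatible---same variables, same types, same order---so part (2) raises sensitivities of variables \emph{already in} $\Gamma$; there is no ``first weaken in $x:^0 A$ via part (1)'' step, because $x$ is already present. You self-correct by proposing to prove subsumption directly, but subsumption \emph{is} part (2): you are not reducing (2) to (1), you are proving (2) first and then using it inside the proof of (1).

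The paper does the two parts as independent inductions, neither relying on the other. For (1), rather than carrying $\Delta$ at zero sensitivity and invoking subsumption afterwards, it observes that $\Delta$ can always be decomposed to match each rule's context split---$\Delta = \Delta_1 + r\Delta_2$ for \textsc{app}, $\Delta = p\Delta + (1{-}p)\Delta$ for $\oplus_p$, $\Delta = r\Delta_1 + s\Delta_2 + \Delta_3$ for $\mprod$---and the pieces go directly into the premises. For (2), it absorbs $\Delta$ into a single summand (into $\Gamma_1$ for \textsc{app}; as $\frac1p\Delta$ into $\Gamma_1$ for $\oplus_p$; into the free $\Gamma''$ for $\mprod$, which is exactly what the footnote you cite addresses, namely (2) when $r=s=0$). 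Your route is valid but introduces an ordering dependency the paper avoids.

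Minor correction: $r<\infty$ in \textsc{let} and $p<1$ in \textsc{fix} are upper bounds, not lower. Subsumption still goes through because these constrain the sensitivity of a \emph{bound} variable, not of the free context variables being raised; $\Delta$ is absorbed into the unconstrained part of the split, so these bounds are never touched.
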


Moreover, we have the substitution lemma.
\begin{lemma}[Substitution] \label{lm:substitution}
If\, $\hastype{\Gamma, x{:^r}A, \Gamma'}{t}{B}$ and $\hastype{\Delta}{u}{A}$, 
then $\hastype{(\Gamma,\Gamma') + r \scaling \Delta}{t[u/x]}{B}$.
\end{lemma}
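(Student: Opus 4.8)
The plan is to prove the Substitution Lemma by induction on the derivation of $\hastype{\Gamma, x:^rA, \Gamma'}{t}{B}$, proceeding by case analysis on the last typing rule applied. The statement to establish is that $\hastype{(\Gamma,\Gamma') + r \scaling \Delta}{t[u/x]}{B}$, so in each case I would identify how $x:^rA$ is distributed among the premises, apply the induction hypothesis, and then reconcile the resulting contexts using the algebraic identities for context sum and scaling (distributivity of scaling over sum, associativity, and the semiring laws on $\extpreals$), together with Weakening (Lemma~\ref{lm:weakening}) where a premise does not mention $x$ at all.

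First I would handle the base cases. For (\textsc{var}): if $t = x$, then $t[u/x] = u$, the context has the form $\Gamma, x:^rA, \Gamma'$ with $B = A$ and $r \geq 1$, and we need $\hastype{(\Gamma,\Gamma') + r\scaling\Delta}{u}{A}$; this follows from $\hastype{\Delta}{u}{A}$ by first scaling — one needs $\hastype{r\scaling\Delta}{u}{A}$ for $r \geq 1$, which itself requires a small auxiliary observation (a scaling lemma, $\hastype{\Delta}{u}{A}$ implies $\hastype{r\scaling\Delta}{u}{A}$ when $r\geq1$, provable by induction) — and then adding $\Gamma,\Gamma'$ via Weakening part (2). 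If instead $t = y$ for some $y \neq x$, then $t[u/x] = y$, and $y$ is declared in $\Gamma$ or $\Gamma'$ with some sensitivity $\geq 1$; since all other entries of $(\Gamma,\Gamma') + r\scaling\Delta$ only get larger or are freshly introduced, the (\textsc{var}) rule still applies. The cases (\textsc{unit}) and (\textsc{zero}) are immediate since the term is unchanged and the context is arbitrary.

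For the inductive cases I would group the rules by how they treat contexts. The ``additive'' rules — (\textsc{var})'s companions (\textsc{abs}), (\textsc{pair}), ($\pi_i$), ($\inj_i$), (\textsc{succ}), ($\delta$) — share the context across premises (or pass it down unchanged), so one simply applies the induction hypothesis to each premise, possibly using $\alpha$-renaming of bound variables to keep them distinct from the free variables of $u$. The ``multiplicative'' rules — (\textsc{app}), (\textsc{case}), ($\mprod$), (\textsc{let-}$\mprod$), ($\oplus_p$), (\textsc{let}), (\textsc{rec}), (\textsc{fix}) — split the ambient context as a sum (often with some summands scaled), so $x:^rA$ gets split as $x:^{r_1}A$ in one part and $x:^{r_2}A$ in another with $r_1 + r_2 = r$ (after accounting for the scaling factors applied to those parts); I apply the induction hypothesis to each premise that mentions $x$, substituting $\Delta$ scaled by the appropriate residual factor, use Weakening for premises that do not, and then the key step is an algebraic manipulation showing that the resulting sum of contexts equals $(\Gamma,\Gamma') + r\scaling\Delta$. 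This last reconciliation — pushing scalings through sums and matching coefficients, e.g.\ in (\textsc{app}) where the conclusion context is $\Gamma_1 + r'\scaling\Gamma_2$ and $x$ may occur in both $\Gamma_1$ and $\Gamma_2$, giving $r = r_1 + r' r_2$ and hence $r\scaling\Delta = r_1\scaling\Delta + r'\scaling(r_2\scaling\Delta)$ via $\asmap$-type semiring identities — is the main obstacle, since it must be checked uniformly and is where the degenerate cases $r=0$, $r=\infty$, or a split premise not using $x$ need care. The (\textsc{fix}) case is representative: from $\hastype{(1-p)\scaling(\Gamma,x:^rA,\Gamma'), y:^pA'}{t}{A'}$ (renaming the fixpoint variable to $y$) the premise context is $(1-p)\scaling\Gamma, x:^{(1-p)r}A, (1-p)\scaling\Gamma', y:^pA'$, so the induction hypothesis with residual scalar $(1-p)r$ applied to $\Delta$ gives $\hastype{(1-p)\scaling(\Gamma,\Gamma') + ((1-p)r)\scaling\Delta, y:^pA'}{t[u/y]}{A'}$, i.e.\ $\hastype{(1-p)\scaling\big((\Gamma,\Gamma') + r\scaling\Delta\big), y:^pA'}{\dots}{A'}$, whence (\textsc{fix}) yields exactly $\hastype{(\Gamma,\Gamma') + r\scaling\Delta}{\fix{y}{t[u/y]}}{A'}$. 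All other multiplicative cases follow the same template.
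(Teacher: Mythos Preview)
Your proposal is correct and follows the same inductive strategy as the paper's proof, with the same case analysis and the same algebraic reconciliation of split contexts. Two small points where the paper is slightly more streamlined: your auxiliary ``scaling lemma'' for the (\textsc{var}) case is just Weakening~(2), since $r\scaling\Delta = \Delta + (r-1)\scaling\Delta$ for $r\geq 1$; and under binders (\textsc{abs}, \textsc{fix}, \textsc{let}, \ldots) the paper explicitly extends $\Delta$ with the fresh bound variable at sensitivity~$0$ via Weakening~(1) before invoking the induction hypothesis, a step you leave implicit.
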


The judgemental equality relation on terms is the least congruence relation generated by the rules in Figure~\ref{eq:judgmental:eq}. We use the symbol $\jeq$ for judgemental equality
to distinguish it from the propositional equality $t \peq u$, which is a predicate in the logic to be defined in Section~\ref{sec:logic}. 
Formally, judgemental equality is a relation on terms of the same type, and we will sometimes underline that by writing $\hastype\Gamma{t\jeq u}{A}$.
Similarly, the rules of Figure~\ref{eq:judgmental:eq} are to be understood as equalities in a typing context in which both sides have the same type. For example, in the case of the $\eta$-rule
for function types, $t$ is assumed to have function type.

\begin{figure*}[tbp]
\small
\begin{align*}
 (\lambda x. t)u \jeq t[u/x] &&& \pi_i(\tuple{t_1, t_2}) \jeq t_i \\
 t \jeq (\lambda x. t\,x) &&&
 \tuple{\pi_1(t), \pi_2(t)} \jeq t \\
t \jeq () &&&
 u_i[t/x_i] \jeq \case{(\inj_i t)}{x_1}{u_1}{x_2}{u_2} \\
 \letIn{(x,y) = (s,t)}u \jeq u[s/x,t/y] &&& 
  u[t/z] \jeq \case{t}{x}{u[\inj_1 x/z]}{y}{u[\inj_2 y/z]} \\
  \letIn{(x,y) = t}{u[(x,y)/z]} \jeq u[t/z]  &&&
  \letIn{x = \delta(t)}{u} \jeq u[t/x] \\ 
 &&& \letIn{x = (\letIn{y=s}t)}{u} \jeq \letIn{y=s}{(\letIn{x=t}u)}
   \\
 \fix xt \jeq t[\fix xt/x] &&& 
 \letIn{x = s\oplus_p t}u \jeq (\letIn{x = s}u)\oplus_p(\letIn{x=t}u) \\
 \Rec{z}{(x,y).s}{\Zero} \jeq z &&&
 \Rec z{(x,y).s}{\Succ n} \jeq s[\Rec z{(x,y).s}n/x, n/y]
\end{align*}
\caption{Judgemental equality (to these should be added the axioms of IB algebras of Definition~\ref{def:IB:alg}).}
\label{eq:judgmental:eq}
\end{figure*}

\subparagraph*{Semantics.}
Types and contexts are interpreted as objects in $\CMet$:
\begin{align*}
\begin{aligned}
\den{\nat} &\defeq \nat
&&&
\den{\unitT} &\defeq  \unit
\\
\den{A \times B} &\defeq \den A \times \den B
&&&
\den{A + B} &\defeq \den A + \den B
\\
\den{A \pair{r}{s} B} &\defeq r\den A \mprod s \den B 
&&&
\den{A \lol_r B} &\defeq r\den A \mexp \den B
\end{aligned}
&&
\begin{aligned}
\den{\TD A} &\defeq \D \den A \\
\den{\emptyctx} &\defeq \unit \\
\den{\Gamma, x :^r A} &\defeq \den\Gamma \mprod r\den A
\end{aligned}
\end{align*}
Judgements are interpreted as morphisms
\[
 \den{\hastype\Gamma tA} \colon \den \Gamma\to \den A
\]
in $\CMet$. The interpretation of terms is for most parts the usual set-theoretic interpretation. For example,
function abstraction and application are precisely the usual set-theoretic abstractions and applications. 
The exercise when defining the interpretation is ensuring the Lipschitz conditions
associated with types and typing judgements. This can be done entirely on the abstract category theoretic level, 
using maps such as those of Theorem~\ref{thm:graded:comonad}. One key point in doing so is that there
exist morphisms %natural transformations 
\begin{align*}
 \splt &\colon \den{\Gamma + \Gamma'} \to \den \Gamma \mprod \den{\Gamma'} \,, %\label{eq:splt} \\
 &
 \proj &\colon \den{\Gamma,\Delta,\Gamma'} \to \den{\Gamma,\Gamma'}
 \,,
 \\
 \distribute &\colon \den{p\scaling\Gamma} \to p\scaling\den\Gamma\,, %\label{eq:dist}
 &
 \semweak &\colon  \den{\Gamma + \Gamma'} \to \den \Gamma \,,
\end{align*}
easily defined by induction on $\Gamma$ and $\Gamma'$ using the morphisms $c$, $m$, $\kappa$ of 
Theorem~\ref{thm:graded:comonad} and projections. Note that neither of these is generally an isomorphism,
but $\distribute$ is when $r\geq 1$.
Using these, one can interpret function application
$\den{t\, u}$ as the composition 
\begin{equation*}
	\den{\Gamma + r\scaling \Gamma'} 
		\xrightarrow{\;(\den\Gamma\mprod \distribute) \circ \splt \;} 
	\den\Gamma\mprod r \den{\Gamma'} 
		\xrightarrow{\;\den t\mprod r\den u\;}
	\den{A \lol_r B} \mprod r\den A
		\xrightarrow{\;\mathsf{ev}\;} 
	\den{B} \,.
\end{equation*}
A similar argument can be used for the interpretation of most other constructions in the language, including $t \oplus_p u$ which is interpreted
using the IB algebra structure on $\D\den{A}$. Let binding is interpreted using Proposition~\ref{prop:D:free:IB}, and for the 
interpretation of natural number recursion, in the inductive case we use the fact that $\infty X$ is discrete for any $X$, and
so can be copied $\infty X \to \infty X \mprod \infty X$.

\begin{remark}[Independence on the choice of derivation] \label{rem:independence}
Formally, the above recipe gives an interpretation of a \emph{derivation} of a typing judgement. Ideally, one would like that $\den{\hastype\Gamma tA}$ is defined only
in terms of $\Gamma$, $t$ and $A$, but since a typing judgement can have many derivations, this is not \textit{a priori} clear. The main reason that judgements 
can have many derivations is that a given context $\Gamma$ can be split as a sum $\Gamma' + \Gamma''$ in many different ways. We prove that the interpretation 
of judgements is independent of the derivation by induction on terms. In order to do that, however, we must annotate terms with enough information to infer the 
types of all subterms from $\Gamma, t$ and $A$ in judgements $\hastype\Gamma tA$. This is not possible for the syntax given above. For example, for 
$\hastype\Gamma{\pi_1(t)}A$ to have a derivation $t$ must have type $A \times B$ for some $B$, but different derivations could use different $B$, which prevents the application of the induction hypothesis. The following meta-theoretic statements about the calculus should be read as statements about this `official' annotated
syntax. However, when writing terms, we will use the informal syntax presented above; it will always be the case that the annotations can be inferred.
\end{remark}

\begin{theorem} \label{thm:SemIndependence}
 The interpretation of typing judgements $\den{\hastype\Gamma tA}$ is well-defined and independent of the choice of derivation.
\end{theorem}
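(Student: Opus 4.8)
The plan is to work throughout with the annotated syntax of Remark~\ref{rem:independence}, so that for any derivable $\hastype{\Gamma}{t}{A}$ both the last rule applied and the types of all immediate subterms are forced by $\Gamma$, $t$ and $A$. The only freedom left in a derivation is then the \emph{quantitative bookkeeping}: how the ambient context is split as a sum (in \textsc{app}, \textsc{case}, $\mprod$, \textsc{let}, $\oplus_p$, \textsc{rec}), which Lipschitz factor $r\ge 1$ is chosen in \textsc{var} and \textsc{case}, and which padding context $\Gamma''$ is used in $\mprod$. I must show that none of these choices changes the morphism $\den{\hastype\Gamma tA}$.

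The first ingredient is a package of coherence lemmas for the auxiliary maps $\splt$, $\proj$, $\distribute$, $\semweak$ and their interaction with the graded-comonad maps $m, n, c, w, \asmap, \epsilon, \kappa$ of Theorem~\ref{thm:graded:comonad} and with the symmetric monoidal structure of $(\CMet,\mprod,\unit)$. Concretely I need: that $\splt$ is associative and symmetric (the squares relating $\splt$ on $\Gamma_1+\Gamma_2+\Gamma_3$ to the associator of $\mprod$ commute); that $\distribute$ is natural and commutes with $\splt$, so the two evident maps $\den{r\scaling\Gamma+r\scaling\Gamma'}\to r\scaling(\den\Gamma\mprod\den{\Gamma'})$ coincide; that for a splitting $\Gamma=\Gamma_1+\Gamma_2$ the composite of $\splt$ with either projection of $\mprod$ is the canonical coercion $\den\Gamma\to\den{\Gamma_i}$ built from $\kappa$ and projections; and that $\semweak$ agrees with $\proj$ composed with such a coercion. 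Each is proved by induction on the shared variable list of the contexts using the laws of the graded comonad. Since every map in sight is assembled from the comonad and monoidal structure, these are instances of a general coherence phenomenon, but only the finitely many shapes just listed are needed.

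The proof of the theorem is then an induction on the annotated term $t$, in which I strengthen the statement: $\den{\hastype\Gamma tA}$ is independent of the derivation, and moreover whenever $\Gamma'$ has pointwise-larger sensitivities than $\Gamma$ on the variables of $\Gamma$ and possibly carries extra variables, one has $\den{\hastype{\Gamma'}tA}=\den{\hastype\Gamma tA}\circ w_{\Gamma',\Gamma}$, where $w_{\Gamma',\Gamma}\colon\den{\Gamma'}\to\den\Gamma$ is the canonical structural weakening and $\hastype{\Gamma'}tA$ is derivable by Lemma~\ref{lm:weakening}. The weakening clause is exactly what makes the splitting cases go through: in, say, \textsc{app}, a derivation uses $\Gamma=\Delta_1+r\scaling\Delta_2$ with subderivations over $\Delta_1\le\Gamma$ and $\Delta_2\le\Gamma$; using the coherence lemma identifying $\splt$-then-projection with the coercions $w_{\Gamma,\Delta_1}$ and $w_{\Gamma,\Delta_2}$, together with the induction hypothesis for $t$ and $u$, I rewrite the interpretation into a normal form in which the subterms are interpreted over a fixed canonical splitting and all quantitative data has been pushed into $\splt$, $\distribute$ and $\kappa$; coherence then shows that two derivations yield the same normal form. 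The \textsc{var} case is immediate (a $\kappa_{1,r}$ followed by a coercion is again a projection), the $\mprod$ case additionally discards $\Gamma''$ via $\semweak$, and \textsc{case}, \textsc{let}, $\oplus_p$ and \textsc{rec} are treated identically; \textsc{fix} uses in addition the \emph{uniqueness} part of Proposition~\ref{prop:fp}, so that once the body's interpretation is fixed the fixed point is as well.

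The main obstacle is the coherence bookkeeping of the second step: organising the finitely many diagrams so that \emph{any} two context-manipulation morphisms with equal source and target are provably equal. Once that is available the term induction is routine. A secondary but essential point is that plain derivation-independence cannot be proved on its own: the weakening clause must be carried through the induction because context splittings are highly non-unique, which is the familiar pattern for affine and graded type theories.
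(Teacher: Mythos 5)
Your overall framework matches the paper's: work with the annotated syntax of Remark~\ref{rem:independence}, establish coherence facts about $\splt$, $\proj$, $\distribute$, $\semweak$ and their interaction with the graded-comonad maps, and induct on terms while carrying a weakening clause through the argument. The paper packages the weakening clause separately as Lemma~\ref{lem:sem:weak}(2) (proved first, on derivations) rather than strengthening the inductive hypothesis, but that is a presentational choice and not a mathematical difference. Where I think your sketch has a genuine gap is the central claim that coherence ``shows that two derivations yield the same normal form'' once the subterm interpretations have been reorganised ``over a fixed canonical splitting.'' Concretely, in the \textsc{app} case you are comparing
$\mathsf{ev} \circ (\den{\nabla_1}\mprod r\den{\nabla_2}) \circ (\den{\Gamma_1}\mprod\distribute)\circ\splt_{\Gamma_1,r\Gamma_2}$
against the analogous composite for a different splitting $\Gamma = \Gamma_1' + r\Gamma_2'$. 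Your coherence lemma identifies $\pi_i\circ\splt$ with the coercions $w_{\Gamma,\Gamma_i}$, and the weakening clause of the inductive hypothesis relates $\den{\nabla_i}\circ w$ to the interpretation over the larger context; but $\mprod$ in $\CMet$ has projections, \emph{not} pairings, so you cannot reconstruct $(\den{\nabla_1}\mprod r\den{\nabla_2})\circ\splt$ from the two post-composed projections. There is no ``fixed canonical splitting'' of $\den\Gamma$ in $\CMet$ to normalise to, because diagonals are not available in general, and so the rewriting step does not go through as written.

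The maneuver the paper uses to close this gap is worth naming explicitly: precompose both interpretations with $\semweak_{\Gamma,\Gamma^\infty}$, where $\Gamma^\infty$ is $\Gamma$ with all sensitivities set to $\infty$. Since $\Gamma^\infty$ is discrete, $\den{\Gamma^\infty}$ \emph{does} admit a diagonal, and $\splt\circ\semweak_{\Gamma,\Gamma^\infty}$ factors through the diagonal $\splt_{\Gamma^\infty,r\Gamma^\infty}$ with the coercions to $\den{\Gamma_1},\den{\Gamma_2}$ pushed inside; the semantic weakening lemma then rewrites all subderivations to $\Gamma^\infty$-typed ones, to which the inductive hypothesis for derivation-independence applies directly. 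Finally, since $\semweak_{\Gamma,\Gamma^\infty}$ is bijective on underlying sets and morphisms in $\CMet$ are determined by their underlying set-maps, equality after precomposition with $\semweak$ implies equality. This last surjectivity-cancellation step is essential and is not present in your proposal; without it, precomposing with a weakening only gives information ``up to the weakening,'' and the affine structure of $\CMet$ gives you no other way to recover the original morphism. With that step added, your strengthened-IH version becomes a correct variant of the paper's argument; without it, ``coherence gives a common normal form'' is not justified.
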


The interpretation is sound in the following sense.
\begin{theorem}[Soundness] \label{thm:Soundness:Term:Interpr}
 If $\hastype{\Gamma}{t\jeq u}A$ then $\den t = \den u$.
\end{theorem}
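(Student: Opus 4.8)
The plan is to prove the Soundness theorem (Theorem~\ref{thm:Soundness:Term:Interpr}) by induction on the derivation of the judgemental equality $\hastype{\Gamma}{t \jeq u}{A}$. Since $\jeq$ is defined as the least congruence relation generated by the equations of Figure~\ref{eq:judgmental:eq} together with the IB-algebra axioms, there are two sorts of cases: the \emph{congruence} cases (reflexivity, symmetry, transitivity, and one closure rule per term constructor), and the \emph{axiom} cases (one per equation in Figure~\ref{eq:judgmental:eq}, plus \textsc{idem}, \textsc{comm}, \textsc{assoc}). The congruence cases are essentially formal: reflexivity is trivial, symmetry and transitivity follow because equality of morphisms in $\CMet$ is an equivalence relation, and each constructor-closure rule follows because the interpretation of a compound term is built by post- and pre-composition with fixed morphisms (the structural maps $\splt$, $\proj$, $\distribute$, $\semweak$, the monoidal/closed/comonad structure of Theorem~\ref{thm:graded:comonad}, and the monad structure of $\D$), and composition respects equality of morphisms. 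Here one leans on Theorem~\ref{thm:SemIndependence} so that we may pick whichever derivation is convenient when matching up the two sides.

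For the axiom cases the work is to verify, for each equation, that the two prescribed composites of structural morphisms in $\CMet$ actually coincide. Most are immediate from the universal properties already invoked in the semantics section: the $\beta$/$\eta$ laws for $\lol_r$ are the triangle identities of the adjunction $(-\mprod r\den A) \dashv (r\den A \mexp -)$; the $\beta$/$\eta$ laws for $\times$ and the unit law for $\unitT$ are the universal properties of the categorical product and terminal object; the laws for $+$ and $\case{}{}{}{}{}$ are the universal property of the coproduct; the $\mprod$-pairing/let-law and its $\eta$-form come from the universal property of $\mprod$ (non-expansiveness in each variable) together with the concrete description of $\splt$. The two probabilistic let-laws ($\letIn{x=\delta(t)}{u} \jeq u[t/x]$ and distribution of $\letIn{}{}$ over $\oplus_p$ and over nested let) are exactly the equations characterizing the homomorphic extension $\overline f$ of Proposition~\ref{prop:D:free:IB}: the first is the factorization equation $f = \overline f \circ (\Gamma \mprod r\delta_X)$, and the others say $\overline f$ is an IB-homomorphism in its second argument (commuting with $\oplus_p$) and that extension is compatible with composition. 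The natural-number recursion equations are the defining equations of the iterator on $\nat$ (again a universal property, now of the natural-numbers object, using that $\infty X$ is discrete to get the needed copying). The three IB-algebra axioms hold in $\D\den A$ by the explicit computation $\mu \oplus_p \nu = p\mu + (1-p)\nu$, with \textsc{assoc} being precisely the commuting square displayed after Definition~\ref{def:IB:alg}.

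The one genuinely non-routine case is the fixed-point unfolding law $\fix{x}{t} \jeq t[\fix{x}{t}/x]$. Here I would argue as follows: by Proposition~\ref{prop:fp}, given $\den{(1-p)\scaling\Gamma, x:^p A \vdash t : A} \colon (1-p)\den\Gamma \mprod p\den A \to \den A$ (modulo the structural iso/maps relating $\den{(1-p)\scaling\Gamma}$ to $(1-p)\den\Gamma$), the interpretation $\den{\fix x t} = \semfp(\den t) \colon \den\Gamma \to \den A$ is by construction the map sending $\gamma$ to the unique fixed point of the contraction $\den t(\gamma, -) \colon p\den A \to \den A$. Thus $\den t(\gamma, \den{\fix x t}(\gamma)) = \den{\fix x t}(\gamma)$ holds pointwise by the defining property of $\semfp$. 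It then remains to match the left-hand side of this equation with $\den{t[\fix x t/x]}(\gamma)$, which is exactly the content of the Substitution Lemma (Lemma~\ref{lm:substitution}) at the semantic level: substitution of $\fix x t$ for $x$ is interpreted by precomposing $\den t$ with $\langle \mathrm{id}, \den{\fix x t}\rangle$-like data, reorganized through the structural maps. So the obstacle is not conceptual but bookkeeping: one must state and use a \emph{semantic substitution lemma} ($\den{t[u/x]} = \den t \circ (\text{structural map assembling } \den\Gamma \text{ and } \den u)$), which is the natural companion to Lemma~\ref{lm:substitution} and is proved by the same induction on $t$; with it in hand, the $\beta$-laws and the fixed-point law all reduce to instances of the universal properties plus this lemma. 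I would therefore prove the semantic substitution lemma first (by induction on the term, parallel to Lemma~\ref{lm:substitution}), and then run the main induction on the derivation of $\jeq$ as sketched above.
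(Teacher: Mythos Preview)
Your proposal is correct and matches the paper's approach essentially exactly: the paper states and proves the semantic substitution lemma (Lemma~\ref{lem:sem:subst}) together with a semantic weakening lemma (Lemma~\ref{lem:sem:weak}) first, and then verifies each judgemental equality axiom using these plus the relevant universal properties (Proposition~\ref{prop:D:free:IB} for the $\letIn{}{}$ laws, the defining equations of $\iterate[]$ for recursion, the fixed-point property of $\semfp$ for $\fix{}{}$). The only thing you might add explicitly is the semantic weakening lemma, which the paper needs alongside substitution to manage the context bookkeeping in the affine setting.
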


The proof of soundness relies on the following lemmas.

\begin{lemma}[Semantic Weakening] \label{lem:sem:weak}
For the derivations of Lemma~\ref{lm:weakening}, the following hold 
\begin{enumerate}
    \item $\den{\hastype{\Gamma,\Delta,\Gamma'}tA} = \den{\hastype{\Gamma,\Gamma'}tA} \circ \proj$;
    \item $\den{\hastype{\Gamma+\Delta}tA} = \den{\hastype{\Gamma}tA} \circ \semweak$.
\end{enumerate}
\end{lemma}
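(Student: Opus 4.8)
The plan is to prove both statements simultaneously by induction on the derivation of $\hastype{\Gamma,\Gamma'}tA$ (respectively $\hastype{\Gamma}{t}{A}$), following the same structure used to establish Theorem~\ref{thm:SemIndependence}. Indeed, the two statements are really instances of a single phenomenon: inserting unused variables or inflating sensitivities does not change what a term computes, only the domain of the interpreting morphism, and the change of domain is witnessed precisely by the reindexing maps $\proj$ and $\semweak$. Since Theorem~\ref{thm:SemIndependence} already gives us derivation-independence, we are free to pick, for each term former, the canonical derivation whose last rule matches the head constructor of $t$, and reason about that.

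First I would record the basic compatibility properties of $\proj$ and $\semweak$ with the auxiliary maps $\splt$, $\distribute$, $\semweak$ (the context-level one) and with the inclusion of a bound variable, i.e. equations such as $\proj \circ (\text{insert into }\Gamma,\Delta,\Gamma',x{:}^rA) = (\proj \mprod r\den A) $ applied appropriately, and the fact that $\splt \colon \den{(\Gamma+\Delta) } \to \dots$ factors through $\semweak$ in the expected way. These are all proved by a routine induction on the shape of the contexts involved, using only the naturality of the morphisms $c, m, \kappa$ of Theorem~\ref{thm:graded:comonad} and the functoriality of $\mprod$ and of scaling. With these lemmas in hand, each case of the main induction becomes a diagram chase: for $\hastype\Gamma{t\,u}B$, for instance, one unfolds $\den{t\,u}$ as the composite through $(\den\Gamma\mprod\distribute)\circ\splt$ and $\eval$, precomposes with $\proj$, pushes $\proj$ through $\splt$ and $\distribute$ using the compatibility lemmas, and then applies the induction hypotheses to $\den t$ and $\den u$ to recover the interpretation over the weakened context. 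The binder cases ($\lambda$, $\letIn{(x,y)=u}{t}$, $\letIn{x=u}{t}$, $\Rec{}{}{}$, $\fix{x}{t}$) additionally use that weakening a context commutes with extending it by a fresh bound variable, so the induction hypothesis applies to the body; for $\fix{x}{t}$ one also invokes uniqueness of the Banach fixed point (Proposition~\ref{prop:fp}) to transport the equation across the fixed-point operator, and for $\letIn{x=u}{t}$ one uses the uniqueness clause of Proposition~\ref{prop:D:free:IB}.

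I expect the main obstacle to be purely bureaucratic rather than conceptual: stating and proving the compatibility lemmas for $\proj$ and $\semweak$ at a level of generality that covers every rule uniformly. The subtlety is that different rules split and scale the ambient context in different ways — e.g. (\textsc{app}) produces $\Gamma + r\scaling\Gamma'$, ($\mprod$) produces $r\scaling\Gamma + s\scaling\Gamma' + \Gamma''$, (\textsc{rec}) involves $\infty\scaling\Gamma'$ — so one must check that inserting $\Delta$ into, or adding $\Delta$ to, the combined context is compatible with each combination operator. The cleanest route is to prove once and for all that $\proj$ and $\semweak$ are the components of natural transformations between the functors $\den{-}$ regarded on an appropriate category of contexts-and-weakenings, and that $\splt$, $\distribute$, the context-level $\semweak$, and bound-variable extension are all morphisms of such functors; then every term-former case is discharged by naturality plus the induction hypothesis, with no further combinatorics. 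Finally, part (2) for the specific case $\Delta = \emptyctx$ but with inflated sensitivities is handled the same way, noting that $\semweak$ degenerates to the evident $\kappa$-built map when no new variables are added.
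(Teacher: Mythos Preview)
Your overall strategy---induction on the derivation, compatibility lemmas for $\proj$ and $\semweak$ against $\splt$ and $\distribute$, uniqueness of Banach fixed points for the $\fix{}{}$ case, and uniqueness of the homomorphic extension (Proposition~\ref{prop:D:free:IB}) for the $\letIn{}{}$ case---is exactly what the paper does. The paper additionally simplifies by passing to underlying set-maps, which collapses most of the diagram chasing into pointwise equalities, but your more categorical formulation would work just as well.

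There is, however, a genuine circularity in your argument. You write that ``Theorem~\ref{thm:SemIndependence} already gives us derivation-independence, [so] we are free to pick \ldots\ the canonical derivation.'' But the paper's proof of Theorem~\ref{thm:SemIndependence} \emph{uses} Lemma~\ref{lem:sem:weak}(2): the key step in the (\textsc{app}) case is to weaken both sub-derivations to a common discrete context $\Gamma^\infty$ and then apply the induction hypothesis there, and that weakening step is precisely part~(2) of the present lemma. So Lemma~\ref{lem:sem:weak} must be proved first, as a statement about the \emph{specific} weakened derivations built in the proof of Lemma~\ref{lm:weakening}, before derivation-independence is available. This is why the lemma statement says ``for the derivations of Lemma~\ref{lm:weakening}'': you are not free to choose a convenient derivation, you must track the one constructed there. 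In practice this changes very little---the weakened derivation built in Lemma~\ref{lm:weakening} already has last rule matching the head constructor of $t$---but you should remove the appeal to Theorem~\ref{thm:SemIndependence} and instead phrase the induction as following the construction of the weakened derivation case by case.
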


\begin{lemma}[Semantic Substitution] \label{lem:sem:subst}
If $\hastype{\Gamma, x:^rA, \Gamma'}{t}{B}$ and $\hastype{\Delta}{u}{A}$, for the derivation of Lemma~\ref{lm:substitution}, the following holds 
\[
  \den{\hastype{(\Gamma,\Gamma') + r \scaling \Delta}{t[u/x]}{B}}
  = \den{t} \circ 
  (\den{\Gamma} \mprod (r \den{u} \!\circ\! \distribute)\mprod \den{\Gamma'})
  % \circ (\den{\Gamma} \mprod \swap) 
  \circ \splt \,.
\]
\end{lemma}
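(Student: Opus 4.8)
The plan is to prove the identity by induction on the (annotated) term $t$, appealing to Theorem~\ref{thm:SemIndependence} throughout so that in each case we may work with whatever derivation of $\hastype{\Gamma,x:^rA,\Gamma'}{t}{B}$ is most convenient, and to Lemma~\ref{lem:sem:weak} in the variable cases. Note that Lemma~\ref{lm:substitution} already guarantees the left-hand side is well-typed, so the equation is an equality of parallel morphisms $\den{(\Gamma,\Gamma') + r\scaling\Delta}\to\den B$; our task is just to identify them.

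\emph{Base cases.} If $t = x$, then $t[u/x] = u$ and rule (\textsc{var}) forces $B = A$ and $r\geq 1$. Here $\den x$ is the projection onto the $r\den A$ component followed by $\kappa_{1,r,\den A}$ and $\epsilon_{\den A}$, so precomposing with $\den\Gamma\mprod(r\den u\circ\distribute)\mprod\den{\Gamma'}$ and $\splt$ and unfolding the definitions of $\splt$ and $\distribute$ in terms of the maps of Theorem~\ref{thm:graded:comonad} yields $\den u$ composed with a weakening morphism, that is, exactly $\den{\hastype{(\Gamma,\Gamma')+r\scaling\Delta}{u}{A}}$ by Lemma~\ref{lem:sem:weak}(2). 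If $t = y$ is a variable other than $x$, then $t[u/x] = y$, and the claim follows purely from compatibility of $\proj$, $\semweak$, $\splt$ and $\distribute$: $\den y$ under $\Gamma,x:^rA,\Gamma'$ factors through $\proj$ by Lemma~\ref{lem:sem:weak}(1), and precomposing with the substitution morphism realises the weakening of Lemma~\ref{lem:sem:weak}(2). The cases $t = ()$ and $t = \Zero$ are immediate.

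\emph{Inductive cases.} For each remaining term former we unfold its typing rule. This exhibits the ambient context as a sum $\sum_i a_i\scaling\Gamma_i$ of scaled contexts (one per immediate subterm, plus possibly an unused context as in rule ($\mprod$)), in which $x$ has sensitivity $r_i$ in $\Gamma_i$, so that $r = \sum_i a_i r_i$ (with $r_i = 0$ when $x$ does not occur in the corresponding subterm, in which case the relevant instances of $c$ and $\distribute$ degenerate as described in Theorem~\ref{thm:graded:comonad}). Since substitution commutes with every term former, $t[u/x]$ is obtained by substituting $u$ into each immediate subterm $t_i$, and the induction hypothesis rewrites $\den{t_i[u/x]}$ as $\den{t_i}$ precomposed with the corresponding substitution morphism. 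It then remains to check that reassembling these morphisms with the very same maps ($m$, $c$, $\kappa$, $\epsilon$, $\asmap$, and projections) that define the interpretation of the term former reproduces the right-hand side; each case is thus a finite diagram chase. The recurring ingredient is naturality of the comultiplication $c_{\cdot,\cdot,(-)}$ in its object argument, together with coherence of $m$, $\asmap$ and $\kappa$: this is what shows that routing $r_i\den u$ into the $i$-th copy of the $r\den A$ component agrees with first applying $r\den u$ and then copying. For term formers with binders — (\textsc{abs}), (\textsc{case}), (\textsc{let}), (\textsc{let-}$\mprod$), (\textsc{rec}) — one first $\alpha$-renames the bound variable away from $x$ and from the variables of $\Delta$, so that it merely extends the $\Gamma'$ part and the induction hypothesis applies verbatim. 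For ($\delta$), ($\oplus_p$) and (\textsc{let}) one additionally invokes the uniqueness clause of Proposition~\ref{prop:D:free:IB} to identify the two homomorphic extensions involved, and for (\textsc{rec}) one uses that $\infty\den{\Gamma'}$ is discrete, hence carries a diagonal.

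\emph{Main obstacle.} The conceptual content is light, but the bookkeeping is heavy: one must track how an arbitrary splitting of the concatenated context $\Gamma, x:^rA, \Gamma'$ distributes the binding $x:^rA$ over the subderivations, and verify that sizeable induced diagrams commute. The cleanest route is to first isolate a short list of structural identities for $\splt$, $\proj$, $\distribute$ and $\semweak$ — associativity of $\splt$ with respect to iterated context sums, the interaction of $\distribute$ with $\splt$ and with the diagonal part of $c$, and the naturality of all four in the ambient types — and then feed these into each inductive case. Since Theorem~\ref{thm:SemIndependence} already provides derivation-independence, in every case we are free to choose the context splitting that makes the chase shortest, which keeps the verification manageable.
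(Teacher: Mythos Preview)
Your outline is correct and follows the same inductive skeleton as the paper, but the paper takes a significant shortcut you do not exploit: because two morphisms in $\CMet$ are equal iff their underlying set-maps are equal, and because the structural maps $\splt$, $\distribute$, $\semweak$, $c$, $m$, $\asmap$, $\kappa$ all have trivial underlying set-maps (identities or diagonals) whenever the relevant scalars are nonzero, the paper reduces the statement to an element-wise equality
\[
\denSet{t[u/x]}(\gamma,\gamma') \;=\; \denSet{t}\bigl(\gamma,\denSet{u}(\gamma,\gamma'),\gamma'\bigr)
\]
(with the obvious degeneration when $r=0$), and then verifies this by a routine induction that mirrors the syntactic substitution lemma almost line for line. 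Your fully categorical route---tracking coherence of $m$, $\asmap$, $\kappa$ and naturality of $c$ through sizeable diagrams---works, but buys you nothing here since the model is concrete; the paper's approach makes the ``heavy bookkeeping'' you flag essentially disappear.

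One small imprecision: you claim the uniqueness clause of Proposition~\ref{prop:D:free:IB} is needed for ($\delta$) and ($\oplus_p$). It is not---those cases follow directly from the inductive hypothesis and naturality of the unit and the algebra operation. Uniqueness is genuinely needed only for (\textsc{let}), to identify the homomorphic extension of $\den{t[u/x]}$ with the composite of $\overline{\den t}$ and the substitution morphism.
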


\section{Logic} \label{sec:logic}

In this section, we introduce a higher-order logic to reason about
the terms of the calculus. 
Compared to standard logics, which have a Boolean semantics, our logic is interpreted over the commutative unital quantale 
\begin{equation*}
    \Prop = (\uinterval, \geq, \oplus, \lol, 0)
\end{equation*}
with truncated sum $x \oplus y = \min \{x+y, 1\}$ as tensor, unit $0$, and adjoint $x \lol y = \max \{y - x, 0 \}$ defined as truncated reversed subtraction. Observe that the order in $\Prop$ corresponds to the reverse order in $\uinterval$: the bottom element is $1$, the top element is $0$, meet is $\sup$ and join is $\inf$. This fits with the idea of interpreting equality as distance in metric spaces: The 
logical statement $t \peq u$ is true in the model if its interpretation as the distance between $t$ and $u$ is $0$.

The formulas of the logic are well-typed \emph{predicates} in our calculus. Formally, we extend the calculus with $\Prop$ as base type with usual Euclidean distance on $\uinterval$ and add
\begin{align*}
 \varphi,\psi \ &{} ::= {}
  \true \mid \false \mid t \peq_A u \mid \varphi\ltensor\psi \mid \varphi\lexp\psi \mid r\psi 
  \mid \neg\varphi \mid \varphi\wedge\psi \mid \varphi\vee\psi \mid
  \exists x:A.\varphi \mid \forall x:A.\varphi
\end{align*}
to the syntax of terms, with typing rules as in Figure~\ref{fig:predicates}.
The formulas are those of a higher-order logic with equality, but extended with connectives specific to the quantale $\Prop$: $r\psi$ is interpreted as ($1$-bounded) rescaling by a factor $r\geq 0$, $\varphi\ltensor\psi$ as the tensor, and $\varphi\lexp\psi$ as its adjoint. 
Note that the latter are not new to our logic, but common connectives of fuzzy logics 
(e.g., {\L}ukasiewicz logic~\cite{Lukasiewicz,tarski1983logic} or the logic of Riesz MV algebras~\cite{NolaL11}).

For readability, we will often omit type annotations from $\peq_A$ and quantifiers when they can be inferred from the context.

\begin{figure}[t]
\small
\centering
\begin{mathpar}
\infrule{}{\hastype{\Gamma}{\true}{\Prop}}
\and
\infrule{}{\hastype{\Gamma}{\false}{\Prop}} 
\and
\infrule{
    \hastype{\Gamma}{t}{A}
    &\hastype{\Gamma'}{s}{A}
}{ \hastype{\Gamma+\Gamma'}{t \peq_A u}{\Prop} }
\\
\infrule{
    \hastype{\Gamma}{\varphi}{\Prop}
    &\hastype{\Gamma'}{\psi}{\Prop} 
}{ \hastype{\Gamma+\Gamma'}{\varphi \ltensor \psi}{\Prop} }
\and
\infrule{
    \hastype{\Gamma}{\varphi}{\Prop}
    &\hastype{\Gamma'}{\psi}{\Prop} 
}{ \hastype{\Gamma+\Gamma'}{\varphi \lexp \psi}{\Prop} }
\and
\infrule{
    \hastype{\Gamma}{\varphi}{\Prop} %& r > 0
}{ \hastype{r\scaling\Gamma + \Gamma'}{r\varphi}{\Prop} }
\and
\infrule{
    \hastype{\Gamma}{\varphi}{\Prop}
}{ \hastype{\Gamma}{\neg\varphi}{\Prop} }
\and
\infrule{
    \hastype{\Gamma}{\varphi}{\Prop} & 
    \hastype{\Gamma}{\psi}{\Prop}
}{ \hastype{\Gamma}{\varphi \wedge \psi}{\Prop} }
\and
\infrule{
    \hastype{\Gamma}{\varphi}{\Prop} & 
    \hastype{\Gamma}{\psi}{\Prop}
}{ \hastype{\Gamma}{\varphi \vee \psi}{\Prop} }
\and
\infrule{
    \Gamma, x:^\infty A \vdash \varphi : \Prop }{
\Gamma \vdash \exists x: A.\varphi : \Prop}
\and
  \infrule{
    \Gamma, x:^\infty A \vdash \varphi : \Prop }{
\Gamma \vdash \forall x: A.\varphi : \Prop}
\end{mathpar}
    \caption{Typing rules for logical predicates.}
    \label{fig:predicates}
\end{figure}

We also add $\Prop$ to the grammar of IB algebra types, 
with operation defined as
\[
  \phi \oplus_p \psi \defeq p\phi \ltensor (1- p)\psi 
\]
and add the axioms of IB algebras (Definition~\ref{def:IB:alg}) to the judgemental equality theory also for this derived operator on predicates.

The interpretation of predicates is defined in Figure~\ref{fig:predicates:interp}. 
These are well-defined morphisms in $\CMet$ because of the following.
\begin{lemma} The following are non-expansive maps (where $r > 0$)
\begin{align*}
 {\oplus}, {\lol} &\colon  \Prop \mprod \Prop \to \Prop &
 \min\{r\cdot -, 1\} &\colon r \scaling \Prop \to \Prop  \\
 {\sup}, {\inf} &\colon (\infty \scaling X \mexp \Prop) \to \Prop &
 d_X &\colon X \mprod X \to \Prop \\
 \max, \min & \colon \Prop \times \Prop \to \Prop
\end{align*} 
\end{lemma}

Observe that, after the extension of the calculus, independence of the choice of the derivation (Theorem~\ref{thm:SemIndependence}) and soundness (Theorem~\ref{thm:Soundness:Term:Interpr}) are still valid, as well as weakening and substitution lemmas for typing judgments (Lemmas~\ref{lm:weakening} and \ref{lm:substitution}) and their semantic variants (Lemmas~\ref{lem:sem:weak} and \ref{lem:sem:subst}).

\smallskip
A \emph{predicate in context $\Gamma$} is a term $\phi$ such that $\hastype{\Gamma}{\phi}{\Prop}$. 
Observe that predicates can be constructed not just using 
logical connectives
but also via the other constructions in our calculus. For example, since
$\Prop$ is an IB algebra,  %(with $\oplus_p \colon p\Prop \mprod (1-p)\Prop \to \Prop$ defined as $a \oplus_p b = pa + (1-p)b$, for all $p\in [0,1]$), 
if $\phi$ is a predicate in context $\Gamma, x:^r A$
and $\mu : \D A$, then ($\letIn{x = \mu}{\phi}$) is also a predicate. % in context $\Gamma$.

Logical reasoning on the terms of the calculus is done via the inference of logical judgments. The judgments of the logic are of the form 
\[
\logicJ[\Delta]{\Psi}{\varphi} \,,
\]
where $\Delta$ is a typing context, 
$\Psi = \psi_1, \dots, \psi_n$ is a list of predicates (\emph{logical context}), and $\varphi$ a predicate (\emph{conclusion}).
%

%%% Move figure so that it goes in the next column 
\begin{figure}[t]
\begin{align*}
\begin{aligned}
\den\true &\defeq 0 \\
\den\false &\defeq 1 \\
\den{t \peq_A u}
    &\defeq d_{\den A}\circ (\den{t} \mprod \den{s}) \circ \splt \\
\den{\varphi \ltensor \psi} 
    &\defeq \oplus \circ (\den{\varphi} \mprod \den{\psi}) \circ \splt \\
\den{\varphi \lexp \psi} 
    &\defeq \lol \circ (\den{\varphi} \mprod \den{\psi}) \circ \splt \\
\den{r\varphi} 
    &\defeq \begin{cases}
    0 & (r=0) \\
    {\min} \{r \cdot -, 1\} \circ r\den{\varphi}\circ \distribute 
    & (r > 0)
    \end{cases}
\end{aligned}
&&
\begin{aligned}
\den{\neg\varphi} 
    &\defeq 1-\den{\varphi} \\
\den{\varphi \wedge \psi}
    &\defeq \max \circ \tuple{\den{\varphi},\den{\psi}} \\
\den{\varphi \vee \psi} 
    &\defeq \min \circ \tuple{\den{\varphi},\den{\psi}} \\
\den{\exists x: A. \varphi} 
    &\defeq {\inf} \circ \curry(\den{\varphi}) \\
\den{\forall x: A. \varphi} 
    &\defeq {\sup} \circ \curry(\den{\varphi})
\end{aligned}
\end{align*}
    \caption{Interpretation of logical predicates.}
    \label{fig:predicates:interp}
\end{figure}

Hereafter, we always assume to work with well-formed logical judgments:
\begin{definition}[Well-formed judgments]
A logical judgment $\logicJ{\Psi}{\varphi}$ is \emph{well-formed} if
\begin{itemize}
    \item $\Delta$ is a discrete context, \ie, all variables in $\Delta$ have sensitivity annotation $\infty$. % = \infty\scaling\Delta$;
    \item all occurring predicates are well-typed in context $\Delta$, \ie, 
    $\hastype{\Delta}{\varphi}{\Prop}$ and $\hastype{\Delta}{\psi}{\Prop}$, for all $\psi\in\Psi$.
\end{itemize}
\end{definition}
The reason for the first condition is that sensitivity factors for term variables in logical predicates are irrelevant for logical judgments, and 
keeping track of these adds unnecessary complications to the logic. For example, allowing more general $\Delta$, for many
rules it would not be the case that well-formedness of the assumptions implies well-formedness of the conclusion, nor vice-versa. 
This would also raise meta-theoretic questions about the logic that we prefer to avoid. 
Note that by Lemma~\ref{lm:weakening},
$\infty$ is the most general 
sensitivity annotation possible: If $\hastype\Delta{t}A$ then also $\hastype{\Delta'}{t}A$ where $\Delta'$ is obtained from $\Delta$ by setting all
sensitivity annotations to $\infty$. 
In logical judgements we use the notation $\Delta, x:A$ as shorthand for the rigorous 
$\Delta, x:^\infty A$. % (this notation is justified by the fact that discrete contexts act essentially as ordinary set-contexts).

\begin{figure*}
\small
\centering
\begin{mathpar}
\infrule[true]{}{\logicJ{\Psi}{\true}}
\and
\infrule[false]{
}{\logicJ{\Psi,\false}{\varphi}}
\and
\infrule[ass]{}{\logicJ{\Psi, \varphi}{\varphi}}
\\
\infrule[ex]{
        \logicJ{\Psi, \varphi, \psi, \Psi'}{\rho}}{
    \logicJ{\Psi, \psi, \varphi, \Psi'}{\rho}}
\and
\infrule[pr]{
        \logicJ{\Psi}{\varphi} }{
    \logicJ{r\Psi}{r\varphi}}
\and
\doubleinfrule[dup]{
        \logicJ{\Psi, (r+s)\varphi}{\psi} }{
    \logicJ{\Psi, r\varphi , s\varphi}{\psi}}
\and
\doubleinfrule[der]{
        \logicJ{\Psi, \psi}{\varphi}}{
    \logicJ{\Psi, 1\psi}{\varphi}}
\and
\infrule[zcon]{
        \logicJ{\Psi, 0\psi}{\varphi}}{
    \logicJ{\Psi}{\varphi}}
\and
\infrule[inc]{
        \logicJ{\Psi, r\psi}{\varphi}
        & r \leq s }{
    \logicJ{\Psi, s\psi}{\varphi}}
\and
\infrule[assoc${}_1$]{
        \logicJ{\Psi, r(s\psi)}{\varphi}}{
    \logicJ{\Psi, (rs)\psi}{\varphi}}
\and
\infrule[assoc${}_2$]{
        \logicJ{\Psi, (rp)\psi}{\varphi} 
        & p\leq 1 \text{ or } r \geq 1 }{
    \logicJ{\Psi, r(p\psi)}{\varphi}}
\and 
\infrule[g-rec]{
        \logicJ{(1-p)\Psi, p\varphi}{\varphi} 
        & p < 1}{ % p \in (0,1)}{
    \logicJ{\Psi}{\varphi} }
\and
\infrule[$\ltensor$-i]{
        \logicJ{\Psi}{\varphi} &
        \logicJ{\Psi'}{\varphi'} }{
    \logicJ{\Psi,\Psi'}{\varphi \ltensor \varphi'} }
\and
\infrule[$\ltensor$-e]{
        \logicJ{\Psi, \varphi, \psi}{\rho} }{
    \logicJ{\Psi, \varphi \ltensor \psi}{\rho} }
\and
\infrule[$\lexp$-i]{
        \logicJ{\Psi, \varphi}{\psi} }{
    \logicJ{\Psi}{\varphi \lexp \psi} }
\and
\infrule[$\lexp$-e]{
        \logicJ{\Psi}{\varphi \lexp \psi}
        & \logicJ{\Psi'}{\varphi} }{
    \logicJ{\Psi,\Psi'}{\psi} }
\and
\infrule[$\neg$-i]{
        \logicJ{\Psi,\varphi}{\false}
    }{\logicJ{\Psi}{\neg\varphi} }
\and
\infrule[$\neg$-e]{
        \logicJ{\Psi, \neg\varphi}{\false}
    }{ \logicJ{\Psi}{\varphi} }
\and
\infrule[$\wedge$-i]{
        \logicJ{\Psi}{r\varphi}
        &\logicJ{\Psi}{r\psi}
    }{\logicJ{\Psi}{r(\varphi \wedge \psi)} }
\and
\infrule[$\wedge$-el]{
        \logicJ{\Psi}{\varphi \wedge \psi} 
    }{ \logicJ{\Psi}{\varphi} }
\and
\infrule[$\wedge$-er]{
     \logicJ{\Psi}{\varphi \wedge \psi}
    }{ \logicJ{\Psi}{\psi} }
\and
\infrule[$\vee$-il]{
        \logicJ{\Psi}{\varphi}
    }{ \logicJ{\Psi}{\varphi \vee \psi} }
\and
\infrule[$\vee$-ir]{
        \logicJ{\Psi}{\psi}
    }{ \logicJ{\Psi}{\varphi \vee \psi} }
\and
\infrule[$\vee$-e]{
        \logicJ{\Psi,r\varphi}{\rho}
        &\logicJ{\Psi,r\psi}{\rho}
    }{\logicJ{\Psi,r(\varphi \vee \psi)}{\rho} }
\and
\infrule[$\exists$-i]{
    \Delta \vdash t : A
    &\logicJ[\Delta]{\Psi}{\varphi[t/x]}
}{\logicJ[\Delta]{\Psi}{\exists x:A. \varphi}}
\and
\infrule[$\exists$-e]{
    \logicJ[\Delta, x:A]{\Psi, r\varphi}{\psi} & r < \infty
}{ \logicJ[\Delta]{\Psi, r( \exists x:A. \varphi)}{\psi} }
\and
\infrule[$\forall$-i]{
    \logicJ[\Delta, x:A]{\Psi}{r\varphi}
}{ \logicJ[\Delta]{\Psi}{r(\forall x:A.\varphi)} }
\and
\infrule[$\forall$-e]{
    \logicJ[\Delta]{\Psi}{\forall x:A. \varphi}
    & \Delta \vdash t : A
}{\logicJ[\Delta]{\Psi}{\varphi[t/x]} }
\and
\infrule[eq-i]{
    \hastype{\Delta}{t \jeq s}{A}
}{\logicJ[\Delta]{\Psi}{t=_A s} }
\and
\infrule[eq-e]{
    \hastype{\Delta, x:^r A}{\varphi}{\Prop}
    &\hastype{\Delta}{t}{A} 
    &\hastype{\Delta}{u}{A}
    &\logicJ[\Delta]{\Psi}{\varphi[t/x]}
    &\logicJ[\Delta]{\Psi'}{r(t=_A u)}
}{\logicJ[\Delta]{\Psi, \Psi'}{\varphi[u/x]} }
\and
\infrule[ind${}_\mprod$]{
        \logicJ[\Delta, x : A, y : B]{\Psi}{\varphi[(x,y)/z]} & \hastype{\Delta}{t}{A\! \pair{r}{s} B}} 
   {  \logicJ{\Psi}{\varphi[t/z]} }
\and
\infrule[ind${}_\nat$]{
      \logicJ[\Delta]{\Psi}{\varphi[\Zero/n]} & %\\
   \logicJ[\Delta, n:\nat]{\varphi}{\varphi[\Succ{n}/n]}
&
\hastype\Delta t\nat
}{\logicJ[\Delta]{\Psi}{\varphi[t/n]} }
   \and
\infrule[ind${}_+$]{
        \logicJ[\Delta, x : A]{\Psi}{\varphi[\inj_1(x)/z]} &
         \logicJ[\Delta, y : B]{\Psi}{\varphi[\inj_2(y)/z]} &
         \hastype{\Delta}{t}{A + B}} 
   {  \logicJ{\Psi}{\varphi[t/z]} }
\and
\infrule[ind${}_\D$]{
    \begin{array}{r@{}l}
        r &{}< \infty \\
        \hastype{\Delta}{{}&t}{\D A} \\
        \hastype{\Delta, x:^r \D A}{{}&\varphi}{\Prop}
    \end{array}
    &
    \begin{array}{r@{}l}
    \\
    \logicJ[\Delta, y:A]{\Psi}{{}&\varphi[\delta y/x]} \\
    \forall p.\, \big(
    \logicJ[\Delta, \mu : \D A ,\nu : \D A]{p\varphi[\mu/x], (1-p)\varphi[\nu/x]}{{}&\varphi[\mu\oplus_p\nu/x] } \big)
    \end{array}
}{\logicJ[\Delta]{\Psi}{\varphi[t/x]} }
\end{mathpar}
\caption{Logic. (The logical judgments appearing above are assumed to be well-formed) 
%\rasmus{Added ind${}_+$ and changed ind${}_\mprod$ ($t$ not mentioned in $\Psi$)}
}
\label{fig:logicrules}
\end{figure*}

The inference rules for the logic are given in Figure~\ref{fig:logicrules}. The notation $p\Psi$ means to multiply each proposition in $\Psi$ by $p$. Rules given by double line are double rules, so can be used in both directions.

The logic is sound with respect to the semantic interpretation of logical judgments in the following sense, where the notation $\den{\psi_1, \dots, \psi_n}$ means $\den{\psi_1} \oplus \dots \oplus \den{\psi_n}$.
\begin{theorem}[Soundness] \label{thm:Soundness:Logic}
%The rules of the logic are sound: 
If $\logicJ[\Delta]{\Psi}{\varphi}$ is derivable,
then $\den{\Psi}(\delta) \geq \den{\varphi}(\delta)$ for all $\delta\in\den\Delta$.
\end{theorem}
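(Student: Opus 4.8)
The plan is to prove soundness by induction on the derivation of $\logicJ[\Delta]{\Psi}{\varphi}$, showing for each rule that the semantic inequality in $\Prop$ (i.e., the reverse inequality $\den{\Psi}(\delta) \geq \den{\varphi}(\delta)$ in $\uinterval$) is preserved. First I would set up the basic bookkeeping: recall that $\den{\psi_1,\dots,\psi_n}$ abbreviates the truncated sum $\den{\psi_1}\oplus\dots\oplus\den{\psi_n}$, that the context $\Delta$ is discrete so $\den\Delta$ is a discrete metric space and every term and predicate over $\Delta$ denotes an ordinary function on it, and that the quantale operations $\oplus$, $\lol$, $\sup$, $\inf$ on $\Prop$ are interpreted as in Figure~\ref{fig:predicates:interp}. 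The structural rules (\textsc{true}, \textsc{false}, \textsc{ass}, \textsc{ex}, \textsc{pr}, \textsc{dup}, \textsc{der}, \textsc{inc}, \textsc{assoc}${}_1$, \textsc{assoc}${}_2$) are all immediate consequences of the quantale axioms of $\Prop$: commutativity and associativity of truncated sum, monotonicity, the unit law $0\oplus x = x$, and the adjunction $x \ltensor y \leq z \iff y \leq x \lexp z$ read off as truncated subtraction. For \textsc{assoc}${}_1$ and \textsc{assoc}${}_2$ one uses the identities $r\cdot(s\cdot x) = (rs)\cdot x$ on $\uinterval$ under truncation, noting that the side condition $p\leq 1$ or $r\geq 1$ is exactly what makes the truncations agree, matching the analogous condition on $\asmap$ in Theorem~\ref{thm:graded:comonad}.

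Next I would handle the connective rules. The introduction/elimination rules for $\ltensor$, $\lexp$, $\wedge$, $\vee$, $\neg$ follow directly from the corresponding (in)equalities in the quantale $\Prop$: $\ltensor$ from functoriality of $\oplus$ and its monotonicity; $\lexp$ from the adjunction; $\wedge$ and $\vee$ from the fact that meet is $\sup$ and join is $\inf$ in $\uinterval$; and the two $\neg$ rules from the involution $x \mapsto 1-x$ together with $\den\false = 1$. For the quantifier rules $\exists$-i, $\exists$-e, $\forall$-i, $\forall$-e I would use that $\den{\exists x:A.\varphi} = \inf_{a}\den\varphi(\delta,a)$ and $\den{\forall x:A.\varphi} = \sup_a \den\varphi(\delta,a)$, so that $\exists$-i and $\forall$-e are the trivial bounds and $\exists$-e, $\forall$-i follow by taking infima/suprema of the induction hypothesis over the extended context; the finiteness side condition $r<\infty$ in $\exists$-e is needed so that $\min\{r\cdot-,1\}$ commutes with $\inf$ (scaling by $\infty$ would collapse to the discrete metric and break continuity). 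For eq-i use reflexivity $d_{\den A}(a,a) = 0 = \den\true$. The substitution rules ind${}_\mprod$ and ind${}_+$ reduce, via Lemma~\ref{lem:sem:subst}, to the observation that every element of $\den{A\pair rs B}$ is a pair $(a,b)$ and every element of $\den{A+B}$ lies in one injection, so $\den{\varphi[t/z]}(\delta)$ equals $\den{\varphi}$ evaluated at the corresponding point.

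The substantive cases are g-rec, eq-e, ind${}_\nat$ and ind${}_\D$. For \textsc{g-rec}, the hypothesis gives $\den{(1-p)\Psi}(\delta)\oplus p\cdot\den\varphi(\delta) \geq \den\varphi(\delta)$ for all $\delta$, i.e., writing $a = \den\varphi(\delta)$ and $b = \den{\Psi}(\delta)$, we have $(1-p)b \oplus pa \geq a$; since $p<1$ and all values are truncated into $[0,1]$, this forces $a \leq b$, which is the conclusion $\den\Psi(\delta)\geq\den\varphi(\delta)$ — essentially a one-line fixed-point/Banach-style estimate (if $pa + (1-p)b < 1$ then $a \le pa+(1-p)b$ gives $(1-p)a\le(1-p)b$; the truncated case only helps). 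For \textsc{eq-e}, the key is that $\varphi$ has sensitivity $r$ in $x$, so by the typing of $\den\varphi$ as a nonexpansive map out of $\den{\Delta,x:^rA} = \den\Delta\mprod r\den A$ we get $|\den\varphi(\delta,a) - \den\varphi(\delta,a')| \leq r\cdot d_{\den A}(a,a')$; combining with $\den{\Psi}(\delta)\geq\den{\varphi[t/x]}(\delta)$ and $\den{\Psi'}(\delta)\geq r\cdot d_{\den A}(\den t\delta,\den u\delta)$ and using $\oplus$-monotonicity yields $\den{\Psi,\Psi'}(\delta)\geq\den{\varphi[u/x]}(\delta)$ — this is where the sensitivity annotations genuinely earn their keep. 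For \textsc{ind}${}_\nat$ one does an ordinary induction on the value of $\den t(\delta) \in \nat$. The \emph{main obstacle} is \textsc{ind}${}_\D$: the two hypotheses only directly establish $\den{\varphi[\mu/x]}\le\den\Psi$-type bounds for \emph{finitely supported} $\mu$ (by induction on the structure of a finite convex combination, using that $\den{\varphi[\mu\oplus_p\nu/x]} \le p\den{\varphi[\mu/x]} \oplus (1-p)\den{\varphi[\nu/x]}$ to do the inductive step), and one must then extend to arbitrary Radon $\mu\in\D\den A$ by a density/continuity argument: finitely supported measures are dense in $\D\den A$ in the Kantorovich metric, and because $x$ has \emph{finite} sensitivity $r<\infty$ in $\varphi$ the map $\mu\mapsto\den\varphi(\delta,\mu)$ is $r$-Lipschitz, hence continuous, so the inequality $\den\varphi(\delta,\mu)\le\den\Psi(\delta)$ passes to the limit. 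Making the density claim precise (that every Radon measure is a Kantorovich-limit of finitely supported ones, which is where the Radon/compact-inner-regularity hypothesis from Section~\ref{sec:prob:meas} is used) and checking that the base case $\den{\varphi[\delta y/x]}\le\den\Psi$ plus the convex-combination step really do cover all finite distributions is the technical heart of the proof, and I would isolate it as a separate lemma.
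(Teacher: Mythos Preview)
Your proposal is correct and follows essentially the same approach as the paper's proof: induction on the derivation, with the propositional-connective rules discharged by generic quantale reasoning, the quantifier rules by the $\inf$/$\sup$ interpretation (using $r<\infty$ so that scaling commutes with $\inf$), \textsc{eq-e} by non-expansiveness of $\den\varphi$ in its $r$-scaled argument, \textsc{g-rec} by the one-line Banach-style estimate, and \textsc{ind}${}_\D$ by first establishing the bound on finitely supported measures and then passing to all of $\D\den A$ via density plus the Lipschitz continuity guaranteed by $r<\infty$. The paper cites \cite[Theorem~2.7]{MardarePP18} for the density of finitely supported measures, which is the external input you flagged as needing to be isolated.
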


Note the similarity between the rule (\textsc{g-rec}) for guarded recursion  and the typing rule for fixed points. 
The logic includes the classical rule (\textsc{$\neg$-e}), because it is verified by the model, but our examples below do not use it. 
The rule (\textsc{eq-e}) for elimination of equality is perhaps most easily understood via a sketch proof of soundness: The sensitivity
of $\varphi$ in $x$ ensures that $\varphi[s/x]$ is at most at distance $r(t \peq s)$ from $\varphi[t/x]$. Therefore, since 
$\den{\Psi} \geq \den{\varphi[t/x]}$ and $\den{\Psi'} \geq \den{r(t \peq s)}$, also 
$\den{\Psi} \oplus \den{\Psi'} \geq \den{\varphi[s/x]}$. 

One consequence of the induction principles (\textsc{ind}${}_\mprod$), (\textsc{ind}${}_+$), (\textsc{ind}${}_\nat$), and (\textsc{ind}${}_\TD$) is that judgements involving 
predicates defined using recursion and let-bindings can be proved. For example, by (\textsc{ind}${}_\mprod$), to prove 
\[\logicJ[\Delta]{\Psi, \letIn{(x,y) = t}{\phi}}{\letIn{(x,y) = t}{\psi}},\] it suffices to show that 
$\logicJ[\Delta, x: A, y: B]{\Psi,\phi}{\psi}$. The induction principle for $\nat$ requires the induction
case to be proven not using $\Psi$, so that $\Psi$ is only used once, at the base case. 
The use of convex combinations in the hypothesis of the induction principle for $\TD$ ensures that any probability measure constructed inductively from Dirac distributions and convex combinations satisfies the conclusion. A priori, these only give the finite distributions
and, semantically, $\TD$ also contains continuous distributions. However, the finitely supported distributions are dense, and the requirement that $r$ be finite means that $\varphi$ is continuous in $x$, which suffices to verify soundness.

\begin{remark} \label{rem:finiteTD-Induction}
The universal quantification over $p$ in \textsc{(ind${}_\TD$)} makes it an infinitary rule. If one wants
a finitary proof system, the induction case can be replaced by
%
%Observe that the proof obligation for the inductive case in \textsc{(ind${}_\TD$)} can be replaced by 
$\frac{1}{2} \varphi[\mu/x], \frac{1}{2}\varphi[\nu/x] \vdash \varphi[\mu \oplus_{\frac12} \nu/x]$. 
%, thus, avoiding the universal quantification over $p\in(0,1)$. 
This principle is still sound as the finitely supported dyadic distributions are also dense in $\D X$ 
(see e.g., proof of \cite[Theorem 21]{BreugelHMW07}). We prefer to keep the rule as stated in Figure~\ref{fig:logicrules},
because it follows the standard induction principle for inductive types: One proof obligation for each constructor. 
\end{remark}

The rules (\textsc{$\wedge$-I}), (\textsc{$\vee$-e}), (\textsc{$\exists$-e}), and (\textsc{$\forall$-I}) include scaling factors in the conclusions. This is to recover distributivity of scalar multiplication over the logical connectives, which cannot be otherwise derived due to the necessary restrictions in (\textsc{assoc${}_2$}).

\begin{lemma} \label{lem:r:exists}
 For all $r \geq 0$, the predicates $r(\varphi \wedge \psi)$, $r(\varphi \vee \psi)$, and $r(\forall x : A. \varphi)$ are respectively equivalent to $r\varphi \wedge r\psi$, $r\varphi \vee r\psi$, and $\forall x : A. r\varphi$.  
 If $r < \infty$, then $r(\exists x:A . \varphi)$ is equivalent to $\exists x : A. r\varphi$.
\end{lemma}

The condition that $r < \infty$ in the last statement is necessary: $\infty (\exists x : (0,1] . x \peq 0)$ is interpreted as $0$ in the model, but $\exists x : (0,1] . \infty (x \peq 0)$ is $1$.

The logic is affine but not relevant, that is, weakening is derivable but contraction is not. 
\begin{lemma} \label{lem:logic:weak}
 If $\logicJ[\Delta]{\Psi}{\varphi}$ is derivable, so is $\logicJ[\Delta]{\Psi, \psi}{\varphi}$.
\end{lemma}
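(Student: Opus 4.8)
The inference system of Figure~\ref{fig:logicrules} contains neither a primitive weakening rule for the logical context nor a cut rule, so the plan is to synthesise weakening from the rules that are available, using the adjunction between $\ltensor$ and $\lexp$ as a substitute for cut. The key observation is the tautology ``$\varphi$ entails $\varphi$'' can be turned, using an arbitrary extra hypothesis $\psi$ on the left, into a derivation of $\psi \vdash \varphi \lexp \varphi$, which can then be eliminated against the given derivation of $\Psi \vdash \varphi$.

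Concretely, I would proceed as follows. Since $\logicJ[\Delta]{\Psi, \psi}{\varphi}$ is assumed well-formed, $\psi$ is well-typed in $\Delta$, so $\logicJ[\Delta]{\psi, \varphi}{\varphi}$ is an instance of the axiom $(\textsc{ass})$ (with ``$\Psi$'' of that rule instantiated to the single predicate $\psi$). Applying $(\lexp\text{-i})$ gives $\logicJ[\Delta]{\psi}{\varphi \lexp \varphi}$. Feeding this together with the hypothesised derivation of $\logicJ[\Delta]{\Psi}{\varphi}$ into $(\lexp\text{-e})$ yields $\logicJ[\Delta]{\psi, \Psi}{\varphi}$. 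Finitely many applications of the exchange rule $(\textsc{ex})$ then move $\psi$ past the predicates of $\Psi$ to the right end of the context, producing $\logicJ[\Delta]{\Psi, \psi}{\varphi}$ as required.

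I do not expect a real obstacle here: every rule used is free of side conditions, and well-formedness of each intermediate judgment is immediate because no new term variables are introduced and $\psi$, $\varphi$, and all predicates of $\Psi$ are already well-typed in $\Delta$. The only ``difficulty'' is spotting that weakening must be encoded via $\lexp$ rather than taken for granted. (Alternatively, the lemma could be proved by a routine induction on the derivation of $\logicJ[\Delta]{\Psi}{\varphi}$, threading the extra hypothesis $\psi$ through each rule and discarding it in the axioms; but the derivation above is shorter and avoids any case analysis.)
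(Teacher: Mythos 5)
Your derivation is correct, and it takes a genuinely different route from the paper's. The paper proves this lemma by a routine structural induction on the derivation of $\logicJ[\Delta]{\Psi}{\varphi}$, threading the extra hypothesis $\vartheta$ through each of the rules of Figure~\ref{fig:logicrules} (the appendix spells out a handful of representative cases, such as (\textsc{dup}), (\textsc{g-rec}), (\textsc{eq-e}), and so on). You instead avoid induction entirely: starting from (\textsc{ass}) you build $\logicJ[\Delta]{\psi}{\varphi \lexp \varphi}$, cut it against the given derivation via ($\lexp$-e) to obtain $\logicJ[\Delta]{\psi, \Psi}{\varphi}$, and finish by exchange. All side conditions check out (in particular $\varphi \lexp \varphi$ is well-typed in $\Delta$ because $\Delta$ has only $\infty$-annotations, so $\Delta + \Delta = \Delta$), so this is a valid proof. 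Your argument is shorter and more robust in the sense that it would survive adding new rules to Figure~\ref{fig:logicrules} without revision, whereas the paper's inductive argument would need new cases; on the other hand, the paper's inductive proof is the one that naturally accompanies the other structural lemmas (Lemma~\ref{lem:weak}) that genuinely require induction, and it does not rely on the presence of the $\lexp$ connective and exchange in the system.
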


Moreover, derivability is closed under weakening of the typing context and term substitution in predicates.
\begin{lemma} \label{lem:weak} \
\begin{enumerate}
\item  If $\logicJ[\Delta]{\Psi}{\varphi}$ is derivable, then so is $\logicJ[\Delta, x:A]{\Psi}{\varphi}$;
\item If $\hastype\Delta tA$ and $\logicJ[\Delta, x:A]{\Psi}{\varphi}$ is derivable, then so is $\logicJ[\Delta]{\Psi[t/x]}{\varphi[t/x]}$.
\end{enumerate}
\end{lemma}

\section{Proving basic properties}
\label{sec:basic:prop}

We show some basic consequences of the rules of the logic, focusing on equality. 

We first show that the rule  for equality elimination implies that equality is symmetric and transitive. The derivations mirror those of the corresponding rules for identity types in type theory (see e.g.~\cite{hottbook}), although 
the setting of affine logic used here is different. 

\begin{proposition} \label{prop:eq:symm:trans}
 Propositional equality is symmetric and transitive in the sense that 
\begin{align*}
 x : A, y: A \mid r(x\peq y)  \ts r(y\peq x) \\
 x:A, y:A, z:A \mid r(x\peq y), r(y\peq z)  \ts r(x \peq z)
\end{align*}
\end{proposition}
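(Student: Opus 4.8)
The plan is to obtain both statements as direct applications of the equality-elimination rule (\textsc{eq-e}), in the style of the standard derivations that identity types are symmetric and transitive, but with the motive chosen so that the eliminated variable carries sensitivity exactly $r$. For \textbf{symmetry}, I would use the motive $\varphi \defeq r(w \peq x)$, eliminating the variable $w$; since $w$ occurs with sensitivity $1$ in $w \peq x$, prefixing by the scalar $r$ makes its sensitivity exactly $r$, so $\varphi$ is well-typed in the discrete context $x:A, y:A, w:^r A$. Instantiating $t \defeq x$ and $u \defeq y$, the two hypotheses required by (\textsc{eq-e}) amount to proving $\varphi[x/w]$, i.e.\ $r(x \peq x)$, which follows from (\textsc{eq-i}) and one use of (\textsc{pr}), and $r(t \peq u)$, i.e.\ $r(x \peq y)$, which is the assumption via (\textsc{ass}); the conclusion is $\varphi[y/w]$, i.e.\ $r(y \peq x)$, as wanted.

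For \textbf{transitivity}, I would instead take the motive $\varphi \defeq r(x \peq w)$, again with $w$ at sensitivity $r$, and instantiate $t \defeq y$ and $u \defeq z$. Now the two hypotheses of (\textsc{eq-e}) are $\varphi[y/w]$, i.e.\ $r(x \peq y)$, and $r(t \peq u)$, i.e.\ $r(y \peq z)$, both of which are among the assumptions (via (\textsc{ass})), and the conclusion $\varphi[z/w]$, i.e.\ $r(x \peq z)$, is exactly the goal. In both derivations the ambient term context is discrete, as required for well-formed logical judgments, and the remaining side conditions---that $x, y, z$ are terms of type $A$ and that each motive is a well-typed predicate with $w$ at sensitivity $r$---follow from (\textsc{var}), weakening, and the typing rule for scaled predicates.

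I do not anticipate a genuine obstacle; the only delicate point is the sensitivity bookkeeping, namely checking that in each motive the eliminated variable occurs with sensitivity precisely $r$. This is exactly why one takes the \emph{scaled} equality $r(\cdots)$ as the motive rather than the bare one: it ensures that the hypothesis $r(t \peq u)$ of (\textsc{eq-e}) is literally one of the assumptions of the statement. An alternative would be to prove the $r = 1$ instances first and then scale everything with (\textsc{pr}), but that leaves a stray factor of $1$ on the equality hypothesis which then has to be absorbed using (\textsc{der}), so the scaled-motive approach is cleaner.
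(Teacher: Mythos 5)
Your proof is correct and takes essentially the same approach as the paper: for transitivity the paper also applies (\textsc{eq-e}) with the motive $r(x \peq w)$ in $\Delta, w :^r A$ and instantiates $t = y$, $u = z$, which is precisely your derivation. The paper omits the details of symmetry, and your version (motive $r(w \peq x)$, $t = x$, $u = y$, with the base case $r(x \peq x)$ discharged by (\textsc{eq-i}) and (\textsc{pr})) is the standard one the authors have in mind, in the spirit of the identity-type derivations they reference.
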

\begin{proof}
We just prove transitivity. Let $\Delta \defeq x:A, y:A, z :A$ and apply (\textsc{eq-e}) to $\hastype{\Delta, w :^r A}{r(x\peq w)}{\Prop}$. 
Because we have 
  $\logicJ{r(x\!\peq \!y)}{r(x\!\peq\! y)}$
we obtain 
  $\logicJ{r(x\!\peq\! y), r(y\!\peq\! z)}{r(x\! \peq\! z)}$. % \qedhere
\end{proof}
Note that since the comma is interpreted as $1$-bounded addition, in the case where $r=1$, transitivity corresponds to the triangle inequality. 
Equality is also a congruence relation.

\begin{proposition}[Congruence] \label{prop:eq:congruence}
 Let $\hastype\Delta uA$ and $\hastype\Delta vA$. If $\hastype{\Delta, x:^rA}{t}B$, then $\logicJ{r(u\peq v)}{t[u/x] \peq t[v/x]}$.
\end{proposition}
\begin{proof}
 Apply (\textsc{eq-e}) to $\hastype{\Delta, y:^r A}{t[u/x]\peq t[y/x]}{\Prop}$.
\end{proof}
As a special case, for $\Delta = x:\TD A, y:\TD A, z :\TD A, w:\TD A$, we have
\begin{equation} \label{eq:oplus:peq}
  \logicJ{p(x \peq y), (1-p)(z \peq w)}{x \oplus_p z \peq y \oplus_p w} \,.
\end{equation}

Using (\textsc{ind}${}_\mprod$) and (\textsc{ind}${}_+$), we can prove 
the following extensionality principles.

\begin{lemma} \label{lem:ext:tensor}
 If $x,y : A \pair{r}{s} B$ then $x \peq y$ is equivalent to 
 \[\letIn{(a,b) = x, (a',b') = y}{r\scaling(a\peq a') \ltensor s (b \peq b')}.\] 
\end{lemma}
\begin{lemma} \label{lem:ext:sum}
 If $x,y : A + B$ then $x \peq y$ is equivalent to 
\begin{equation*}
\big(\exists a,a'. (x \peq \inj_1 a) \ltensor (y \peq \inj_1 a') \ltensor (a \peq a')\big)
\vee 
\big(\exists b,b'. (x \peq \inj_2 b) \ltensor (y \peq \inj_2 b') \ltensor (b \peq b')\big)
\end{equation*}
\end{lemma}

Since $\Prop$ is a type in our calculus, the logic is higher order. We can define types of predicates
and relations
\begin{align}
 \PredQ Ar & \defeq A \lol_r \Prop & % \label{eq:Pred} \\
 \RelQ ABrs & \defeq A \pair rs B\lol_1 \Prop \,.
 \label{eq:Rel}
% \RelQ XYpq & \defeq X \lol_p Y \lol_q \Prop \label{eq:Rel}
\end{align}
We simply write $\Pred A$ and $\Rel AB$
as shorthands for $\PredQ A1$ and $\RelQ AB11$, respectively, when all the sensitivities are $1$.

One can prove that equality is equivalent to Leibniz equality. 
\begin{proposition} \label{prop:Leibniz}
 The predicates $\forall \phi : \PredQ Ar . \phi(x) \lexp \phi(y)$ and $r(x \peq_A y)$ are equivalent. 
\end{proposition}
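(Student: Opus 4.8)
I want to show the two propositions are logically equivalent, i.e.
$\forall\phi:\PredQ Ar.\,\phi(x)\lexp\phi(y)$ and $r(x\peq_A y)$ each entail the other
in the logic (in the term context $x:A,\,y:A$, with empty logical context).

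For the direction from $r(x\peq_A y)$ to the universal statement, I would use $(\forall\text{-i})$: it suffices to prove, in the context extended by $\phi:\PredQ Ar$, that $r(x\peq_A y)\ts \phi(x)\lexp\phi(y)$, and then — since $r$ is the $\lambda$-bound sensitivity and by Lemma~\ref{lem:r:exists} (or directly) $r(\forall\phi.\,\psi)$ is equivalent to $\forall\phi.\,r\psi$ — this gives what we want, provided we are careful that the sensitivity bookkeeping matches. By $(\lexp\text{-i})$ this reduces to $\phi(x),\,r(x\peq_A y)\ts\phi(y)$. Now $\phi:\PredQ Ar = A\lol_r\Prop$, so the application $\phi(y)$ is obtained from a term in which the argument variable has sensitivity $r$; concretely, if we set $z:^rA\ts \phi(z):\Prop$, then $(\textsc{eq-e})$ applied with this predicate, with $\logicJ{\phi(x)}{\phi(x)}$ and $\logicJ{r(x\peq y)}{r(x\peq y)}$, yields exactly $\phi(x),\,r(x\peq y)\ts\phi(y)$. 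The only subtlety is justifying that $\phi$ applied to a variable really does have sensitivity $r$ in that variable — this is where the type $A\lol_r\Prop$ and the $(\textsc{app})$ typing rule do the work.

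For the converse direction, the standard trick is instantiation at the identity-of-the-left-endpoint predicate. Take $\phi$ to be $\lambda z.\,r(x\peq_A z):\PredQ Ar$; this is well-typed because $z$ occurs with sensitivity $r$ in $r(x\peq_A z)$ (by the scaling rule for predicates applied to $x\peq z$, whose right argument has sensitivity $1$). Applying $(\forall\text{-e})$ to the hypothesis $\forall\phi:\PredQ Ar.\,\phi(x)\lexp\phi(y)$ with this $\phi$ gives $r(x\peq x)\lexp r(x\peq y)$. Then $(\textsc{eq-i})$ gives $\ts x\peq x$, hence $\ts r(x\peq x)$ (rule $(\textsc{pr})$, or noting $\den{x\peq x}=0$), and $(\lexp\text{-e})$ discharges the antecedent, leaving $\ts r(x\peq y)$ as required. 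Again one must check the sensitivity annotations on the empty-logical-context side line up — here they are trivial since both hypotheses used are provable with empty logical context.

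The main obstacle I anticipate is purely the affine/graded bookkeeping: making sure that when we instantiate the universal quantifier the scaling factor $r$ threads correctly through $(\forall\text{-e})$ and through the application $\phi(x)$, $\phi(y)$, and that the side conditions of $(\textsc{eq-e})$ (namely $\hastype{\Delta,z:^rA}{\phi(z)}{\Prop}$) are met with precisely the sensitivity $r$ rather than some larger value. The logical content is exactly the type-theoretic proof that identity types coincide with Leibniz equality; the novelty is only in verifying it survives the passage to affine quantitative logic, and this is the part of the argument I would write out with care.
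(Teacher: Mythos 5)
Your proof is correct, and since the paper does not actually write out an argument for Proposition~\ref{prop:Leibniz}, there is nothing to compare it against beyond the surrounding machinery; your plan uses exactly the rules one would expect (\textsc{eq-e}, \textsc{eq-i}, \textsc{$\forall$-i/e}, \textsc{pr}) and the standard type-theoretic ``Leibniz $\Leftrightarrow$ identity'' template, adapted to track the sensitivity $r$.

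Two small remarks on the bookkeeping. In the first direction, the appeal to Lemma~\ref{lem:r:exists} (commuting $r$ past the quantifier) is unnecessary and slightly confused: the target $\forall\phi:\PredQ Ar.\,\phi(x)\lexp\phi(y)$ has no outer scaling, so you simply use $(\textsc{$\forall$-i})$ with the scaling parameter set to $1$ and dispense with the harmless factor $1$; the $r$ in $\PredQ Ar$ is the $\lambda$-bound sensitivity of $\phi$ and never migrates onto the quantified proposition. The part that genuinely matters, and which you handle correctly, is that $(\textsc{app})$ gives $\hastype{\Delta, z:^r A}{\phi(z)}{\Prop}$ precisely because $\phi : A \lol_r \Prop$, so $(\textsc{eq-e})$ closes $\phi(x),\, r(x\peq y) \ts \phi(y)$ with the right grade. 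In the second direction, note that after $(\textsc{$\forall$-e})$ you obtain the $\beta$-redexes $(\lambda z. r(x\peq z))(x)$ and $(\lambda z. r(x\peq z))(y)$; turning these into $r(x\peq x)$ and $r(x\peq y)$ uses judgemental equality inside the logic, which the paper relies on implicitly (it is sound by Theorem~\ref{thm:Soundness:Term:Interpr}) but does not state as an explicit rule — worth flagging if you were to write this out in full. Neither point is a gap; the argument goes through.
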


\subsection{Internalising the Kantorovich distance}

\newcommand{\eq}{\mathsf{eq}}

We show that the Kantorovich distance (\ref{eq:Kantorovic}) can be internalised in the logic. 
This, in turn, will enable reasoning via couplings to establish equalities between probability distributions.
% This will allow us to reason
% using couplings to prove equalities of distributions. 
The integral in the definition of the Kantorovich distance computes
the mean of the distance $d_X(x,y)$ when $(x,y)$ is distributed according to the given coupling $\omega$. We start by defining, more
generally, the mean $\mean x\mu\phi$ of a predicate $\phi : \Pred A$ over a distribution $\mu:\TD A$ 
as 
\begin{equation} \label{eq:mean:def}
 \mean x\mu\phi \defeq \letIn{x = \mu}{\phi(x)}
\end{equation}
This defines a mean because it satisfies the equations
\begin{align}
  \mean x{\delta(y)}\phi &\jeq \phi(y) 
  \label{eq:mean:dirac} \\
  \mean x{\mu \oplus_p \mu'}\phi &\jeq 
  %\mean x\mu\phi \oplus_p \mean x{\mu'}\phi
  p (\mean x\mu\phi) \ltensor(1-p) (\mean x{\mu'}\phi) 
  \,. \label{eq:mean:conv}
\end{align}
For $R : \Rel AB$ and $\omega : \TD(A \pair11 B)$ we write $\mean{(a,b)}\omega{R(a,b)}$ rather than 
$\mean{z}\omega{R}$ to emphasize that $R$ is a relation. % where $\phi \defeq \lambda z. \letIn{(x,y) = z}{R\,x\,y}$. 
When $R$ is the equality relation we write 
$\mean{(x,y)}\omega{x\peq y}$. 
% Finally, we must take the infimum of $\mean{(x,y)}\omega{x\peq y}$ as $\omega$ ranges over all couplings. 
The notion of $R$-coupling can be expressed quantitatively in the logic as 
\begin{align}
    \label{def:coup}
    \coup{}{\omega}{\mu}{\nu} 
    &\defeq (\TD(\pi_1)\omega \peq \mu) \ltensor  (\TD(\pi_2)\omega \peq \nu) \,,
    \\
    \label{def:R:coup}
    \coup{R}{\omega}\mu\nu 
    &\defeq \mean{(x,y)}\omega{R(x,y)}\ltensor \coup{}{\omega}{\mu}{\nu} \,.
\end{align}
Notably, $\coup{}{\omega}{\mu}{\nu}$ evaluates to $0$ (the distinguished truth value in $\Prop$) if and only if $\omega$ is a coupling between $\mu$ and $\nu$. However, when it evaluates to a value different from $0$, $\omega$ is not necessarily a coupling. Thus, \eqref{def:coup} captures a notion that is strictly weaker and more permissive than that of an actual coupling.

Finally, using that existential quantification is interpreted as infimum, we can internalise the Kantorovich distance as follows
\begin{equation*} \label{eq:internal:Kantorovic}
 \kant\mu \nu 
 \defeq \exists \omega : \TD(A \pair11 A) 
 .\, \coup{\eq}{\omega}{\mu}{\nu} \,,
 % \ltensor \mean{(x,y)}\omega{x\!\peq \!y}
\end{equation*}
where $\eq \defeq \lambda z. \letIn{(x,y)=z}{x=y} : \Rel AA$ 
is the equality relation.

\begin{theorem} \label{thm:internal:Kantorovic}
 The predicates $\kant\mu\nu$ and $\mu\peq \nu$ are equivalent.
\end{theorem}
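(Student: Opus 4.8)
The goal is the equivalence of the two predicates $\kant\mu\nu$ and $\mu\peq\nu$, which, since these are predicates valued in $\Prop$, means proving both $\logicJ[\mu:\D A,\nu:\D A]{\kant\mu\nu}{\mu\peq\nu}$ and $\logicJ[\mu:\D A,\nu:\D A]{\mu\peq\nu}{\kant\mu\nu}$. I would treat the two directions separately, as they call on quite different parts of the logic.

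\textbf{The easy direction: $\mu\peq\nu \ts \kant\mu\nu$.} Here one exhibits a witness for the existential, namely a coupling. The natural choice is the ``diagonal'' coupling $\D(\langle\mathrm{id},\mathrm{id}\rangle)(\mu)$, i.e. the pushforward of $\mu$ along the duplication map into $A\pair11 A$; call it $\omega_\mu$. One then must check three things, corresponding to the three conjuncts of $\coup{\eq}{\omega_\mu}\mu\nu$ in~(\ref{def:R:coup}): that $\mean{(x,y)}{\omega_\mu}{x\peq y} = 0$ (which follows from (\textsc{eq-i}) together with the fact that the mean of the everywhere-true predicate $\delta y \mapsto (y\peq y)$ is true — formally by the induction principle (\textsc{ind}${}_\D$), or directly from the two mean equations and (\textsc{g-rec})), that $\TD(\pi_1)\omega_\mu \peq \mu$, and that $\TD(\pi_2)\omega_\mu\peq\nu$ — the last two being marginal conditions. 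The first marginal is a judgemental equality since $\pi_1\circ\langle\mathrm{id},\mathrm{id}\rangle = \mathrm{id}$, hence provable by (\textsc{eq-i}); the second requires rewriting $\nu$ to $\mu$ using the hypothesis $\mu\peq\nu$ via (\textsc{eq-e}) and then the same judgemental identity. Assembling the three conjuncts with ($\ltensor$-i) and then ($\exists$-i) gives the direction.

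\textbf{The hard direction: $\kant\mu\nu \ts \mu\peq\nu$.} By ($\exists$-e) — legitimate because $\mu\peq\nu$ does not mention $\omega$ and the sensitivity factor is $1<\infty$ — it suffices to show, in context extended by $\omega:\D(A\pair11 A)$, that $\coup{\eq}{\omega}\mu\nu \ts \mu\peq\nu$, i.e. that the three conjuncts together imply $\mu\peq\nu$. The plan is: first use (\textsc{eq-e}) twice with the two marginal hypotheses to reduce the goal $\mu\peq\nu$ to $\D(\pi_1)\omega \peq \D(\pi_2)\omega$; it then remains to prove
\begin{equation*}
 \mean{(x,y)}\omega{x\peq y} \ts \D(\pi_1)\omega \peq \D(\pi_2)\omega \,.
\end{equation*}
This is the crux. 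I would prove it by the distribution induction principle (\textsc{ind}${}_\D$) on $\omega$, with the predicate $\varphi[\omega/x] \defeq \big(\mean{(x,y)}\omega{x\peq y}\big) \lexp \big(\D(\pi_1)\omega\peq\D(\pi_2)\omega\big)$, noting $\omega$ occurs with finite (in fact sensitivity-$1$) weight. The base case $\omega = \delta(a,b)$: both sides unfold via the mean equation and functoriality of $\D$ to $\D(\pi_1)(\delta(a,b)) = \delta a$, $\D(\pi_2)(\delta(a,b)) = \delta b$, and the mean is $a\peq b$; so the obligation is $a\peq b \ts \delta a\peq\delta b$, which is exactly congruence of equality (Proposition~\ref{prop:eq:congruence}) applied to $\delta(-)$. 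The inductive case: for $\omega = \omega_1\oplus_p\omega_2$, the mean satisfies $\mean{(x,y)}{\omega_1\oplus_p\omega_2}{x\peq y} \jeq p\,(\mean{(x,y)}{\omega_1}{x\peq y})\ltensor(1-p)(\mean{(x,y)}{\omega_2}{x\peq y})$, while $\D(\pi_i)$ commutes with $\oplus_p$ (it is an IB-homomorphism), so the goal side becomes $(\D(\pi_1)\omega_1\oplus_p\D(\pi_1)\omega_2) \peq (\D(\pi_2)\omega_1\oplus_p\D(\pi_2)\omega_2)$; by the congruence~(\ref{eq:oplus:peq}) this follows from $p(\D(\pi_1)\omega_1\peq\D(\pi_2)\omega_1)$ and $(1-p)(\D(\pi_1)\omega_2\peq\D(\pi_2)\omega_2)$, which are exactly $p$ and $1-p$ times the two induction hypotheses $\varphi[\omega_1/x]$, $\varphi[\omega_2/x]$ after applying the respective mean premises. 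Matching the weighted logical context of (\textsc{ind}${}_\D$), $p\varphi[\omega_1/x],(1-p)\varphi[\omega_2/x]$, with these obligations is precisely the bookkeeping the principle is designed for.

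\textbf{Main obstacle.} The genuinely delicate point is the inductive case above: one must line up the $p$-weighting of the induction hypotheses with the sensitivity-annotated congruence rule~(\ref{eq:oplus:peq}) for $\oplus_p$, and simultaneously use that $\D(\pi_i)$ preserves $\oplus_p$ and that the mean of a convex combination is the corresponding convex combination of means — all while respecting that the logic is affine (hypotheses cannot be duplicated, which is why the weights must add up correctly). Everything else (the marginal rewrites, the base case, the diagonal-coupling witness) is routine once the relevant earlier results — (\textsc{eq-e}), Proposition~\ref{prop:eq:congruence}, (\ref{eq:oplus:peq}), and (\textsc{ind}${}_\D$) — are invoked. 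It is also worth double-checking that the finiteness side condition of (\textsc{ind}${}_\D$) is met, which it is since $\omega$ carries sensitivity $1$ in $\varphi$.
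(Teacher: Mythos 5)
Your proof follows essentially the same approach as the paper. Both directions match: the easy direction uses the diagonal coupling $\letIn{a=\mu}{\delta(a,a)}$ together with (\textsc{eq-e}) to rewrite the marginals (the paper phrases this as a reduction to the reflexive case $\kant\mu\mu$, but the content is identical); and the hard direction reduces, exactly as in the paper, to the lemma $\mean{(x,y)}\omega{x\peq y} \ts \D(\pi_1)\omega \peq \D(\pi_2)\omega$ proved by (\textsc{ind}${}_\D$). Your inductive case is spelled out in more detail than the paper's one-line remark, and it is correct: the step distributing $p$ over the scaled implication $p\varphi[\omega_1]$ follows from $\phi\lexp\psi,\phi\ts\psi$ by (\textsc{pr}). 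One small inaccuracy: the sensitivity of $\omega$ in $\varphi[\omega] = \mean{(x,y)}\omega{x\peq y} \lexp (\D(\pi_1)\omega\peq\D(\pi_2)\omega)$ is not $1$ but $3$ (the mean contributes $1$ and the equality of marginals $1+1=2$, combined additively by the typing rule for $\lexp$); what matters for (\textsc{ind}${}_\D$) is only finiteness, so the argument is unaffected, but the parenthetical claim should be fixed.
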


\begin{proof}
 To prove that $\mu\peq \nu$ implies $\kant\mu\nu$, by (\textsc{eq-e}) it suffices to show
 \[
 \logicJ[\mu : \TD A]{\cdot}{\kant\mu\mu}
 \]
 Define $\hastype{\mu:^2 \TD A}{\omega(\mu)}{\TD(A \pair11 A)}$ as
 $\letIn{a = \mu}{\delta(a,a)}$. Then $\TD\pi_i(\omega(\mu)) \peq \mu$ for $i=1,2$.
Moreover, $\cdot \ts \mean{(a,a')}{\omega(\mu)}{a \peq a'}$ can be proved by induction on $\mu$ as follows. If $\mu = \delta(a)$, by \eqref{eq:mean:dirac} and $\omega(\mu) \jeq \delta(a,a)$, $\mean{(a,a')}{\omega(\mu)}{a \peq a'}$ reduces to $a\peq a$, which is true. If $\mu = \mu_1 \oplus_p \mu_2$, we must show
$\mean{(a,a')}{\omega(\mu_1\oplus_p \mu_2)}{a \peq a'}$ in context
\[
 p(\mean{(a,a')}{\omega(\mu_1)}{a \peq a'}), (1-p)(\mean{(a,a')}{\omega(\mu_2)}{a \peq a'}) 
\]
which holds by \eqref{eq:mean:conv} because $\omega(\mu_1\oplus_p \mu_2) \jeq \omega(\mu_1)\oplus_p \omega(\mu_2)$.
For the other direction, it suffices to show that $\mean{(x,y)}\omega{x\!\peq \!y}$ implies
$\TD(\pi_1)(\omega) \peq \TD(\pi_2)(\omega)$, for any $\omega : \TD(A \pair11 A)$. This can be done by induction on $\omega$. 
\end{proof}

\subsection{Uniqueness of fixed points}

One might hope that the uniqueness of fixed points from the Banach fixed point theorem can be internalised in our logic as the statement
\[ 
  x\peq f(x) , y \peq f(y) \ts x \peq y
\]
whenever $f : X \lol_p X$ for $p<1$. However, this is not true. Take for example $f : \Prop \lol_{\frac12} \Prop$ to be multiplication by $\frac12$, 
$x$ to be $0$ and $y$ to be $1$. Then the semantics of the above statement evaluates to the false statement $\frac 12 \geq 1$. However, if we
assume that $x$ is a fixed point for $f$ in the global sense, then it equals the unique fixed point. 

\begin{lemma} \label{lem:just:one:fix}
 Let $f :^1 X \lol_p X$ for $p < 1$. Then, $\true \ts x \peq f(x)$ implies $\true \ts x \peq \fix y f(y)$.
\end{lemma}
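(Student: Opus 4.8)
Write $g \defeq \fix y{f(y)}$, so that $g \jeq f(g)$ by the fixed-point unfolding equation; up to judgemental equality $x \peq g$ and $x \peq f(g)$ are therefore the same formula and may be used interchangeably. The goal is $\true \ts x \peq g$, and the plan is to derive it by the guarded recursion rule (\textsc{g-rec}): since $p < 1$ it suffices to prove
\[
  p(x \peq g) \ts x \peq g, \qquad \text{equivalently} \qquad p(x \peq g) \ts x \peq f(g) \,.
\]
(When $p = 0$ the step below already produces $\true \ts x \peq f(g)$ without any use of (\textsc{g-rec}), so one may as well assume $p \in (0,1)$.) The key point is that the $p$-sensitivity of $f$ supplies exactly the contraction factor that (\textsc{g-rec}) requires.

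To establish this sub-goal I would argue as follows. In the term $f(w)$ the variable $w$ occurs with sensitivity $p$ --- this is the application rule applied to $f : X \lol_p X$ and $w :^1 X$, and it is also what makes $g$ well-typed via (\textsc{fix}) --- so congruence of equality (Proposition~\ref{prop:eq:congruence}) gives $p(x \peq g) \ts f(x) \peq f(g)$. From the hypothesis $\true \ts x \peq f(x)$ together with symmetry of equality (Proposition~\ref{prop:eq:symm:trans}) we obtain $\true \ts f(x) \peq x$. Now apply the equality elimination rule (\textsc{eq-e}) to the predicate $\varphi \defeq (w \peq f(g))$, in which $w$ has sensitivity $1$, with $t \defeq f(x)$ and $u \defeq x$: the premise $\varphi[t/w]$ is $f(x) \peq f(g)$, discharged by the congruence derivation, and the premise $1\cdot(t \peq u)$ is $f(x) \peq x$, discharged by the symmetry derivation; the conclusion $\varphi[u/w]$ is precisely $x \peq f(g)$. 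This yields $p(x \peq g) \ts x \peq f(g)$, and rewriting $f(g)$ back to $g$ and applying (\textsc{g-rec}) finishes the argument.

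I do not expect a genuine obstacle. The only thing needing care is the sensitivity bookkeeping: one must check that $w$ appears with sensitivity $p$ in $f(w)$ and with sensitivity $1$ in $w \peq f(g)$, so that the factor $p$ produced by the congruence step lines up exactly with the factor consumed by (\textsc{g-rec}), and that the hypothesis $p(x \peq g)$ is used only once (in the congruence step), as the affine logic demands. The real content of the lemma is the observation that although uniqueness of Banach fixed points fails as a bare quantitative implication --- as discussed immediately above the statement --- assuming that $x$ is a \emph{global} fixed point of $f$ turns the uniqueness argument into a well-founded guarded recursion driven by the contractivity of $f$.
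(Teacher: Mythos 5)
Your proof is correct and follows essentially the same route as the paper's: reduce via (\textsc{g-rec}) to $p(x \peq g) \ts x \peq g$ and discharge it using the congruence of equality (Proposition~\ref{prop:eq:congruence}) together with the hypothesis $\true\ts x\peq f(x)$ and fixed-point unfolding. The paper states this in one line; you fill in the bookkeeping via symmetry plus a direct (\textsc{eq-e}) application where one could equivalently cite transitivity from Proposition~\ref{prop:eq:symm:trans}, but the content is the same.
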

\begin{proof}
 By (\textsc{g-rec})% guarded recursion
 , it suffices to prove that
 $
 p(x \peq \fix y f(y)) \ts x \peq \fix y f(y)$. This follows from $\true \ts x \peq f(x)$ and transitivity of propositional equality (Proposition~\ref{prop:eq:symm:trans}) after observing that $p(x \peq \fix y f(y)) \ts f(x) \peq f(\fix y f(y))$, which is true by Proposition~\ref{prop:eq:congruence}.
\end{proof}

%\section{Examples}

\section{Case study: Markov processes}
\label{sec:markov}

Markov processes describe systems with memoryless transitions between states, governed by probabilities. Formally, they consist of a set of states $\State$, a transition function $\State \to \D(\State)$ that specifies the probabilities of moving to the next state, and a labeling function $\State \to A$ that assigns labels to states. Following~\cite{Breugel05}, we treat $A$ as a metric space and analyze the behavior of Markov processes quantitatively using distances that discount future differences in observations by means of a discount factor $\discfact \in (0,1]$. The smaller $\discfact$ is, the more the focus shifts toward short-term differences. Categorically, this corresponds to interpreting Markov processes as coalgebras $S \to A \otimes c\D(S)$ in $\CMet$.
The behaviour of a state is abstractly characterised as an element of the final coalgebra%
\footnote{The final coalgebra always exists as $\semLabels\mprod\discfact\D(-)$ is an accessible functor. See ~\cite{BreugelHMW05,Breugel05} for details.}, corresponding
to the coinductive solution $\semProc[\discfact]$ to the functorial equation $\semProc \iso \semLabels\mprod\discfact\D(\semProc)$. The behavioral distance 
is just the distance in $\Proc$ between behaviours~\cite{BreugelHMW05}. 

In order to program with $\semProc$ we extend the calculus with types $\Proc$ and $\Labels$ as well as terms 
\begin{align*}
\unfold & : \Proc \lol_1 \Labels\pair1\discfact\D(\Proc) &
\fold & : \Labels \pair1\discfact \D(\Proc) \lol_1 \Proc
\end{align*}
We will write $\proc a{m}$ for $\fold\,(a,m)$. Finally we add judgemental equalities stating that $\fold$ and $\unfold$ are inverses of each other.

As a first example, consider a process $m$ satisfying the recursive definition $m \jeq \proc a{(\delta(m) \oplus_{\frac13} \delta(z))}$ where $z$ is some other given process. 
This recursive definition is productive in the sense that it only calls itself with probability $\frac13$. Therefore it can be defined as a term of type $\Proc[1]$ 
similarly to the definition of the geometric distribution of Example~\ref{example:geo:calc}. Precisely, because 
\[
  \hastype{z :^{\frac23} \Proc[1], m :^{\frac13}\Proc[1]}{\proc a{(\delta(m) \oplus_{\frac13} \delta(z))}}{\Proc[1]}
\]
we can define $\hastype {z :^1 \Proc[1]}{m}{\Proc[1]}$ as 
\[
  m\defeq \fix{m}{\proc a{(\delta(m) \oplus_{\frac13} \delta(z))}}
\]
which then by the equality for fixed point unfolding satisfies the desired equality. 

Now, let $n$ satisfying $n \jeq \proc a{(\delta(n) \oplus_{\frac12} \delta(z))}$ be defined similarly. Using the logic we will now show that
the distance between $m$ and $n$ is at most $\frac14$, which in the logic corresponds to showing that $\frac14\false \vdash m\peq n$. 
In the following, for $r > 0$, we simply write $r$ for $r \,\false$. 
By guarded recursion (\textsc{g-rec}), it suffices to show that
\[
 \frac23\cdot \frac14, \frac13(m\peq n) \ts m\peq n
\]
Since 
\begin{align*}
 n & \peq \proc a(\delta(n) \oplus_{\frac13}(\delta(n) \oplus_{\frac14}\delta(z))) & 
 m & \peq \proc a{(\delta(m) \oplus_{\frac13}(\delta(z) \oplus_{\frac14}\delta(z)))} 
\end{align*}
by (\ref{eq:oplus:peq}) and Proposition~\ref{prop:eq:congruence} it suffices to show %\giorgio{Also Proposition~\ref{prop:eq:congruence}?}
\[
  \frac23\cdot \frac14, \frac13(m\peq n) \ts \frac13(m\peq n) \ltensor \frac23\left( \frac14(n \peq z) \ltensor \frac34(z \peq z)\right)
\]
which in turn reduces to the following three judgements, all of which are true:
\begin{align*} 
 m \peq n & \ts m \peq n &
 \false & \ts n \peq z &
 \true & \ts z \peq z
\end{align*}

\subsection{A biased coin tossing process}
\label{sec:c:markov}

The next example describes a probabilistic process generated by a coin toss with a biased coin,
where the current state remembers the result of the last coin toss.
The label space is the discrete 
set $\Labels = \{\hd, \tl\}$ and the two states should satisfy the mutually recursive equations
\begin{align*}
\begin{aligned}
 \begin{tikzpicture}[every node/.style={font=\scriptsize}]
    \node[draw,circle,fill=black,text=white,inner sep=1.1pt] (hd) at (0,0) {$\hd$}; 
    \node[draw,circle,fill=black,text=white,inner sep=2pt] (tl) at (2,0) {$\tl$};
    \path[->]
        (hd) edge [loop left] node[left]{$\frac12-\epsilon$} (hd)
             edge [bend right] node[below]{$\frac12+\epsilon$} (tl)
        (tl) edge [loop right] node[right]{$\frac12+\epsilon$} (tl)
             edge [bend right] node[above]{$\frac12-\epsilon$} (hd);
    \end{tikzpicture}
\end{aligned}
&&
\begin{aligned}
 \hdstate & \jeq \proc \hd{(\delta(\hdstate) \oplus_{\frac12-\epsilon} \delta(\tlstate))} \\
 \tlstate & \jeq \proc \tl{(\delta(\hdstate) \oplus_{\frac12-\epsilon} \delta(\tlstate))} \,.
\end{aligned}
\end{align*}
Unlike the previous example, this definition is not productive, so $\hdstate, \tlstate$ cannot be defined by guarded recursion 
as elements of $\Proc[1]$. However, they can be defined as elements of $\Proc[\discfact]$ for any $\discfact \in (0,1)$. 
We define them by mutual recursion using a fixed point of a contraction on $\Proc\times \Proc$, as
$\hdstate \defeq \pi_1(\hdtlstate)$ and $\tlstate \defeq \pi_2(\hdtlstate)$ for
\begin{align*}
\hdtlstate 
&\defeq \fix x {\tuple{\proc \hd{\flip_\epsilon(x)}, \proc \tl{\flip_\epsilon(x)}}} 
% &&  
% \hdstate &\defeq \pi_1(\hdtlstate)
% &&
% \tlstate &\defeq \pi_2(\hdtlstate) 
\end{align*}
where $\flip_\epsilon(x) \defeq \delta(\pi_1(x)) \oplus_{\frac12-\epsilon} \delta(\pi_2(x))$. Observe that $\hdtlstate$ is well-defined as 
\begin{equation*}
    \hastype{x :^\discfact \Proc\times \Proc}
    {\tuple{\proc \hd{\flip_\epsilon(x)}, \proc \tl{\flip_\epsilon(x)}}}{\Proc\times \Proc} \,.
\end{equation*}

Consider the special case of a fair coin $\hdfair \defeq \hdstate[0]$, $\tlfair \defeq \tlstate[0]$. 
We now show that the distance between $\hdfair$ and $\hdstate$ is at most 
$\frac{\discfact\epsilon}{1-\discfact+\discfact \epsilon}$. Logically, this corresponds to the statement 
\begin{equation} \label{eq:hdfair:hdstate}
 \frac{\discfact\epsilon}{1-\discfact+\discfact \epsilon} \ts \hdfair \peq \hdstate  
\end{equation}

The statement (\ref{eq:hdfair:hdstate}) must be proved simultaneously with a similar statement for the two tail
states, and by guarded recursion it suffices to prove that, for $\mathsf{d}_{c,\epsilon} \defeq \frac{\discfact\epsilon}{1-\discfact+\discfact \epsilon}$, we have
\[
 \discfact\left( \mathsf{d}_{c,\epsilon} \!\lexp\! \left(\hdfair\! \peq\! \hdstate \!\conj\! \tlfair\! \peq\! \tlstate\right) \right) 
 \ts 
 \left( \mathsf{d}_{c,\epsilon}\! \lexp\! \left(\hdfair \!\peq\! \hdstate \!\conj\! \tlfair\! \peq\! \tlstate\right) \right) \,.
\]
We show that 
\begin{equation} \label{eq:hd:hdfair:gr}
 \discfact\left( \mathsf{d}_{c,\epsilon} \lexp \left(\hdfair \peq \hdstate \conj \tlfair \peq \tlstate\right) \right),
  \mathsf{d}_{c,\epsilon} 
 \ts 
 \hdfair \peq \hdstate
\end{equation}
The case for $\tlfair \peq \tlstate$ is similar. 
By (\ref{eq:oplus:peq}) to show $\hdfair \peq \hdstate$ it suffices to show $\discfact$ times the formula % below
\begin{equation} \label{eq:hd:hdfair:split}
  %\discfact\left( 
  \left(\frac12-\epsilon\right)(\hdfair \peq \hdstate) \ltensor \epsilon(\hdfair \peq \tlstate) \ltensor \frac12(\tlfair \peq \tlstate)%\right)
\end{equation}
and so (\ref{eq:hd:hdfair:gr}) reduces by rule (\textsc{pr}) to showing (\ref{eq:hd:hdfair:split}) in context 
\begin{equation} \label{eq:hd:hdfair:reduced}
  \mathsf{d}_{c,\epsilon} \lexp \left(\hdfair \peq \hdstate \conj \tlfair \peq \tlstate\right) ,
  \frac{\epsilon}{1\!-\!\discfact\!+\!\discfact \epsilon}
\end{equation}
Since
\[
 \frac{\epsilon}{1\!-\!\discfact\!+\!\discfact \epsilon} = \left(\frac12-\epsilon\right)\mathsf{d}_{c,\epsilon} + \epsilon + \frac12 \mathsf{d}_{c,\epsilon}
\]
Using (\textsc{dup}) and (\textsc{inc})  it suffices to prove (\ref{eq:hd:hdfair:split}) in context 
\begin{equation*} %\label{eq:hd:hdfair:reduced}
  \left(\frac12-\epsilon\right)\left(\mathsf{d}_{c,\epsilon} \lexp \left(\hdfair \peq \hdstate \conj \tlfair \peq \tlstate\right)\right) ,
  \frac12\left(\mathsf{d}_{c,\epsilon} \lexp \left(\hdfair \peq \hdstate \conj \tlfair \peq \tlstate\right)\right) ,
  \left(\frac12-\epsilon\right)\mathsf{d}_{c,\epsilon},  \epsilon, \frac12 \mathsf{d}_{c,\epsilon}
\end{equation*}
which can be done using (\textsc{$\ltensor$-i}).

We remark that the upper bound shown above is tight, in the sense
that it is the actual behavioual distance between
the two states. This can be checked by direct calculations because, as observed in~\cite{BreugelW05}, it corresponds to the $\discfact$-discounted bisimilarity distance of Desharnais et al.~\cite{Desharnais00}.

\subsection{Bisimulation}

Let $\xi \colon X \to \semLabels \mprod \discfact \D(X)$ be a Markov process, and decompose $\xi$ into two maps: $\xi_1 \colon X \to \semLabels$ for labels and $\xi_2 \colon X \to \discfact \D(X)$ for probabilistic transitions. 
A probabilistic bisimulation~\cite{LarsenS91} for $\xi$ is a binary relation on $X$ such that
$R(x,y)$ implies (i) $\xi_1(x) = \xi_1(y)$, states have the same label; and (ii) that there exists an $R$-coupling $\rho$ (\ie, a
coupling $\rho$ for $\xi_2(x)$ and $\xi_2(y)$ whose support is in $R$) ensuring the probabilistic behaviors of $x$ and $y$ remain related under the bisimulation.

While exact equivalence is often too rigid, metrics offer a more flexible alternative. Next, we internalise the 
definition of bisimilarity in our logic. 
For $R : \Rel XX$, define %$\Bisim{R} : \Prop$ as 
\begin{align*}
 \Bisim{R} \defeq \forall x,y : X. R(x,y) \lexp 
    \letIn{&(l, \mu) = \xi(x), \,
    (l', \nu) = \xi(y)\\}{&
   l \peq l' \ltensor \discfact( \exists \rho . \coup{R}\rho\mu\nu}) 
\end{align*}
using the quantitative notion of $R$-coupling defined in (\ref{def:R:coup}). 

In the case where $\discfact \in (0,1)$, we can define the bisimilarity relation ${\bisim} : \Rel XX$ by guarded recursion as 
\begin{align*}
 {\bisim} \defeq \fix R \lambda x. \lambda y.  
 \letIn{(l, \mu) = \xi(x), \,
    (l', \nu) = \xi(y)}{
   l \peq l' \ltensor \discfact( \exists \rho . \coup{R}\rho\mu\nu})
\end{align*}
using that $R$ occurs with sensitivity $\discfact$ in the body of the fixed point. 
%Note that this definition uses the combination of higher-order logic and guarded recursion. 
The following can be proved using guarded recursion. 
%As an example of reasoning with this, we show the following.

\begin{proposition} \label{prop:bisim:eq}
  Bisimilarity is equivalent to equality in $\Proc$.
\end{proposition}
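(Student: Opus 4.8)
The plan is to prove the two implications of the biimplication separately, using guarded recursion for the direction that the bisimilarity relation $\bisim$ (defined as a guarded fixed point) refines equality, and using the internalised Kantorovic distance together with the equality-elimination rule for the converse. Throughout, write $\eq$ for the equality relation on $\Proc$, so that by Theorem~\ref{thm:internal:Kantorovic} the predicate $\mu \peq \nu$ on $\TD(\Proc)$ is equivalent to $\exists \omega.\, \coup{\eq}{\omega}{\mu}{\nu}$.

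For the direction $x \bisim y \ts x \peq y$, I would argue by guarded recursion: it suffices to show
\[
  \discfact(\forall x, y : \Proc.\, x \bisim y \lexp x \peq y),\ x \bisim y \ts x \peq y \,.
\]
Unfolding the fixed point definition of $\bisim$, the hypothesis $x \bisim y$ gives, after the let-binding $(l,\mu) = \xi(x)$, $(l',\nu) = \xi(y)$, the conjunction $l \peq l' \ltensor \discfact(\exists \rho.\, \coup{\bisim}{\rho}{\mu}{\nu})$. From $\exists \rho.\, \coup{\bisim}{\rho}{\mu}{\nu}$ and the guarded induction hypothesis (which lets us replace $\bisim$ by $\peq$ under the $\discfact$-scaling), I would derive $\exists \rho.\, \coup{\eq}{\rho}{\mu}{\nu}$, i.e.\ $\mu \peq \nu$ by Theorem~\ref{thm:internal:Kantorovic}. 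Combined with $l \peq l'$ and congruence of $\fold$ (using $\unfold\circ\fold$ being the identity and \eqref{eq:oplus:peq}-style congruence reasoning at the type $\Labels \pair1\discfact \TD(\Proc)$), this yields $\proc l\mu \peq \proc{l'}\nu$, and since $x \peq \proc l\mu$ and $y \peq \proc{l'}\nu$ by the $\fold/\unfold$ equalities, transitivity (Proposition~\ref{prop:eq:symm:trans}) gives $x \peq y$. The step moving the induction hypothesis under the existential and the $\discfact$-scaling is where Lemma~\ref{lem:r:exists} and the monotonicity of $\coup{(-)}{\rho}{\mu}{\nu}$ in its relation argument are needed; this is the main bookkeeping obstacle.

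For the converse $x \peq y \ts x \bisim y$, again by guarded recursion it suffices to show
\[
  \discfact(\forall x, y : \Proc.\, x \peq y \lexp x \bisim y),\ x \peq y \ts x \bisim y \,.
\]
Unfolding $\bisim$ on the right, I need to produce $l \peq l'$ and a $\bisim$-coupling of $\mu$ and $\nu$ where $(l,\mu)=\xi(x)$, $(l',\nu)=\xi(y)$. From $x \peq y$ and congruence of $\unfold$ (Proposition~\ref{prop:eq:congruence}) one gets $\unfold(x) \peq \unfold(y)$ in $\Labels \pair1\discfact\TD(\Proc)$, which by the product-metric structure of $\pair11$ splits into $l \peq l'$ and $\discfact(\mu \peq \nu)$. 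By Theorem~\ref{thm:internal:Kantorovic}, $\mu \peq \nu$ gives $\exists \omega.\, \coup{\eq}{\omega}{\mu}{\nu}$; the guarded induction hypothesis upgrades the equality relation $\eq$ inside the coupling to $\bisim$ (again under the $\discfact$-scale, using monotonicity of $\coup{(-)}{\omega}{\mu}{\nu}$ and Lemma~\ref{lem:r:exists}), yielding $\discfact(\exists \omega.\, \coup{\bisim}{\omega}{\mu}{\nu})$. Tensoring with $l \peq l'$ closes the goal.

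The two directions are essentially dual, and the single genuinely delicate point—shared by both—is justifying that one may substitute the guarded induction hypothesis ($\bisim \lexp \peq$, resp.\ $\peq \lexp \bisim$) \emph{inside} the existential over couplings and under the $\discfact$ rescaling of the second component of $\xi$. This is exactly what the finiteness side-condition of $(\textsc{$\exists$-e})$, the equivalence $r(\exists x.\phi) \equiv \exists x.\, r\phi$ of Lemma~\ref{lem:r:exists}, and monotonicity of $\mean{(x,y)}{\omega}{R(x,y)}$ in $R$ are there to supply; I expect the bulk of the formal proof to be spelling out this rewriting step carefully, with the rest being routine congruence and transitivity reasoning via Propositions~\ref{prop:eq:symm:trans} and~\ref{prop:eq:congruence}.
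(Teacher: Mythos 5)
The paper gives no explicit proof of this proposition, stating only that it ``can be proved using guarded recursion,'' so there is no detailed argument to compare against. Your two-direction argument by guarded recursion---unfolding the coalgebra via (\textsc{ind}${}_\mprod$), splitting $x\peq y$ into $l\peq l' \ltensor \discfact(\mu\peq\nu)$ using (\textsc{eq-e}) and congruence of $\fold$/$\unfold$, and reducing the coupling upgrade between $\eq$ and $\bisim$ to Theorem~\ref{thm:internal:Kantorovic} together with (\textsc{ind}${}_\TD$) on the coupling---is sound and is the natural way to discharge the paper's hint. Two small notes: the tensor in question is $\Labels\pair{1}{\discfact}\TD(\Proc)$, not $\pair{1}{1}$; and the ``delicate'' rescaling step you flag is actually routine, since after the $\ltensor$-split everything remaining sits under a uniform $\discfact$, so a single application of (\textsc{pr}) descales the context and goal simultaneously, making Lemma~\ref{lem:r:exists} unnecessary---the coupling monotonicity is then exactly the (\textsc{ind}${}_\TD$) step you already describe.
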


\begin{proof}[Proof (sketch)] 
 We just sketch the proof of bisimilarity implying equality. The proof is by guarded recursion, 
 reducing to  
 $\discfact(\forall x,y . x\bisim y \lexp x \peq y), x \bisim y \vdash x\peq y$. The assumption $x \bisim y$ unfolds to
 \[
 \letIn{(l, \mu) = \unfold(x), \,
    (l', \nu) = \unfold(y)}{
   l \peq l' \ltensor \discfact( \exists \rho . \coup{\bisim}\rho\mu\nu})
 \]
 %and by Lemma~\ref{lem:coup:imp}, 
 and using the guarded recursion hypothesis, the last part can be rewritten to 
 $\discfact(\exists \rho . \coup{\eq}\rho\mu\nu)$, which by Theorem~\ref{thm:internal:Kantorovic} is equivalent to 
 $\discfact(\mu \peq \nu)$. We conclude by Lemma~\ref{lem:ext:tensor}.
\end{proof}

Since the distance in $\Proc$ coincides with probabilistic bisimilarity distance, 
Proposition~\ref{prop:bisim:eq} shows that ${\bisim} : \Rel \Proc\Proc$ is interpreted as 
bisimilarity distance. 

\section{Case study: temporal learning}
\label{sec:temporal:learning}

The next example, adapted from Aguirre et al~\cite{BHKKM21}, showcases the expressivity of our calculus and a use of natural number induction. 

A \emph{Markov decision process} comprises a set of states $\State$, a set of actions $\Act$, a transition function $\trans \colon \State \times \Act \to \D(\State)$ and
a reward function $\reward \colon \State \times \Act \to \D([0,r])$. We will consider $\Act$ a discrete set, 
and assume that states $\State$ is a finite discrete set. Moreover, for simplicity we will assume that $\State$ is simply the set
$\State =\{0, \dots, N-1\}$ for some $N$, so that $\State \mexp X$ can be expressed as an $N$-fold product $X^N$. Finally, we will assume that 
$r=1$, so that we can give the type of $\reward$ internally in the calculus as $\State \pair11\Act \lol \TD(\Prop)$. The latter is not a 
restriction in practice, as the actual values of rewards is inessential, and so the reward function can be appropriately scaled. However, it is necessary, as we 
need the reward space to be an IB-algebra and a $1$-bounded metric space. 

When doing reinforcement learning, one must estimate a value function $V : \Prop^N$ mapping states to rewards, 
for a given policy $\policy : \Act^N$. \emph{Temporal difference}
is one approach to doing this, which works by iteratively refining the value function $V$ as follows:
For each state $i$, compute an action $a$ from the policy distribution $\policy(i)$, sample a reward
$r$ from the reward distribution  $\reward(a,i)$, and sample a transition $j$ from the transition function
$\trans(a,i)$. From this the updated value function $V'$ at $i$ can be defined as the convex combination
\[
 V'(i) = (1-\alpha)V(i) + \alpha(r + \gamma V(j))
\]
of the previous value $V(i)$ and the reward associated with the next state $j$, for fixed values $\alpha, \gamma \in (0,1)$. 
Of course, $V'(i)$ should be a distribution, since the definition above involves sampling. 

We will show that this refinement function can be defined and proved convergent in our logic. 
We first define the function ${\refine\, V}:{\TD(\Prop^N)}$ taking one step of the refinement 
 in context 
\[
 \trans :^\infty \Act \lol \D(N)^N, \reward :^\infty  \Act \lol \TD(\Prop)^N, \policy :^\infty \Act^N, V :^k \Prop^N
\]
where $k \defeq 1-\alpha + \gamma\alpha$. 
Note that all the parameters of the reinforcement learning setup are given sensitivity $\infty$, because these
are assumed to be closed terms that will be called repeatedly. 
We define $\refine\,V$ as $\strength (\refine'\,V)$ where
\[
  \strength : \TD(\Prop)^N \lol \TD(\Prop^N)
\]
is obtained in the standard way by induction on $N$, and
\begin{align*}
\refine'\,V \defeq \tuple{\letIn{r = \reward(\policy(i))(i) , \, %\\
  j = \trans(\policy(i))(i) }{
  \delta((1-\alpha)V(i) \ltensor \alpha(r \ltensor \gamma V(j)))
  }}_{i\leq N} 
\end{align*}
Since $k <1$, one can define the refinement function as the fixed point of $\refine$. 
In practice, however,
refinement is only iterated a finite number $n$ of times, defining $\TDfunc : \Prop^N \mexp \nat \mexp \TD(\Prop^N)$
by recursion on the second argument as 
\begin{align*}
 \TDfunc \, V \, 0 & \defeq V & 
 \TDfunc \, V\, (n+1) & \defeq \letIn{V' = (\TDfunc\, V \, n)}{\refine\, V'}
\end{align*}
Then, since $k(V \peq W) \ts \refine\, V \peq \refine \, W$, one can prove by induction on $n$ that 
\[
  k^n(V \peq W) \vdash \TDfunc\,V\, n \peq \TDfunc\, W\, n
\]

\section{Case study: hypercube walk}
\label{sec:hypercube}

This section shows how the internalisation of the Kantorovich distance (Theorem~\ref{thm:internal:Kantorovic})
can be used for coupling proofs in our logic. The example is a random walk on a hypercube adapted from
Aguirre et al.~\cite{BHKKM21}. Much of the example is done by reasoning in the model, as is most natural. 
Our logic is then used as an internal language of the model to apply Theorem~\ref{thm:internal:Kantorovic}
in the last step.

A position on an $N$-dimensional hypercube is an element of $\bool^N$, and we consider this a metric space with the 
normalised Hamming distance: The distance between $p$ and $q$ is $\frac 1N$ times the number of positions where $p$ 
and $q$ differ. In other words, the metric space of positions can be defined as 
\[
  \Pos \defeq \bigotimes_{i=1}^N\frac1N\bool
\]
where $\bool \defeq \unit + \unit$. 
Let $\unifZN : \TD(\nat)$ be the uniform distribution on $\{0,\dots, N\}$, 
and $\flip_i : \Pos \to \Pos$ be the operation that flips the $i$-th 
coordinate of a position if $i = 1,\dots, N$, and acts as the identity if $i=0$. Define the one-step hypercube random walk
%$\hwalk : \Pos \lol_1 \TD(\Pos)$
as
\begin{align*}
 \hwalk & \defeq \lambda p.  \letIn{i = \unifZN}{\flip_i(p)}  : \Pos \lol_1 \TD(\Pos)
\end{align*}
We show that
\begin{equation} \label{eq:hyper:cube:goal}
 \frac{N-1}{N+1}(p \peq q) \ts \hwalk\, p \peq \hwalk\,q
\end{equation}
from which one can show that repeated iteration of $\hwalk$ converges.
To prove this, first construct, for each pair of positions
$p$ and $q$, a bijection $\bijarg pq$ of $\{0,\dots, N\}$ to itself by cases:
\begin{itemize}
\item If $p$ and $q$ are equal take $\bijarg pq$ to be the identity
\item If $p$ and $q$ differ in exactly one position $i$, let $\bijarg pq$ be the permutation
that swaps $i$ and $0$
\item If $p$ and $q$ differ in positions $i_1,\dots, i_n$ for $n>1$, let $\bijarg pq$ be the permutation
that cycles $i_1,\dots, i_n$. 
\end{itemize}
Below we just write $\bij$ for $\bijarg pq$.
One can then show that 
\begin{equation} \label{eq:sum:flip}
\frac{N-1}{N+1}(p\peq q) \ts \sum_{i=0}^N\frac1{N+1}(\flip_i\,p \peq \flip_{\bij(i)}\,q)
\end{equation}
holds in the model. This is done by analysing cases of $p$ and $q$. For example, if $p$ and $q$ differ in
exactly one position $j$, then $\flip_i\,p \peq \flip_{\bij(i)}\,q$ is $0$ for $i=j$ and $i=0$, and at all 
other values it equals $p\peq q$, from which the judgement follows.  

Let $\rho : \D(\nat\pair11\nat)$ be the uniform distribution on the finite set $\{(0,\bij (0)), \dots, (N,\bij (N))\}$, 
and define 
\[
\rho' \defeq
\letIn{z = \rho}{\big(\letIn{(i,j) = z}{(\flip_i \,p, \flip_{j}\,q)}\big)}
: \D(\Pos\!\pair11 \Pos)
\]
Then 
\[
\mean{(x,y)}{\rho'}{x\peq y} = \sum_{i=0}^N\frac1{N\!+\!1}(\flip_i\,p \peq \flip_{\bij(i)}\,q)
\]
An easy argument shows that the equalities $\D(\pi_1)(\rho') \peq \hwalk\,p$ and $\D(\pi_2)(\rho') \peq \hwalk\,q$
can be proved in the empty context, so that we have shown 
\begin{align*}
 \frac{N-1}{N+1}(p \peq q) \ts 
 \kant{\hwalk\,p}{\hwalk\,q}
\end{align*}
which, by Theorem~\ref{thm:internal:Kantorovic} is equivalent to (\ref{eq:hyper:cube:goal}). 

\begin{remark} \label{rem:limitations}
We remark that while the example above can be done for any $N$, the quantification over $N$ must be external, as internalising
it would require dependent types. Similarly, the uniform distribution $\unifZN : \TD(\nat)$ on $\{0,\dots, N\}$ used in the example,
can be constructed in our language for each $N$ as externally quantified only. It is not possible to write a function mapping $N$ to 
$\unifZN$ in our language, because the sensitivity annotations of variables must be given externally.
\end{remark}

\section{Related work}
\label{sec:related}

Our calculus is closely related to Fuzz~\cite{ReedP10}. 
One difference is that Fuzz has recursive types,
and we have guarded recursion. Fuzz is not a calculus of metric spaces, since these
do not model general recursion. Indeed, de Amorim et al.~\cite{AmorimGHKC17} use metric CPOs to model Fuzz.
Interestingly, de Amorim et al.~\cite{AmorimGHKC17} note that a fixed point operation on terms encoded using 
the recursive types of Fuzz can be given a typing rule similar to our rule for fixed points. 
Fuzz also has a probability distributions monad, but with a different metric designed
for reasoning about differential privacy. 
Unlike the Kantorovich metric, it is not clear whether such a metric allows for a principle of induction.
Fuzz also has an operational semantics. We believe operational semantics could be given for our calculus as
well, but leave this to future work. 
Our calculus is also related to graded lambda calculus \cite{BrunelGMZ14, GaboardiKOBU16}.

Dagnino and Pasquali~\cite{DagninoP22} were the first to notice that the sensitivity of predicates must
be taken into account when expressing elimination principles for equality in quantitative logic.
They present an affine propositional logic for quantitative reasoning about terms written in a first-order
language where the operations carry sensitivity annotations, like e.g., $\oplus_p$ does in this paper. 
One of the rules they present for equality is our rule (\textsc{eq-e}), but transitivity is a separate axiom, and they do not study applications 
like the ones studied here. 
Instead, Dagnino and Pasquali~\cite{DagninoP22} study general categorical notions of models; we just study one
model. 

The idea of using metric spaces and guarded recursion using scaling factors $r<1$
for programming and reasoning about probabilistic processes goes back at least to the late 
1990s~\cite{VinkR99,BaierK00}.  
To our knowledge, all previous work has used ultra-metric spaces, which means that one
can use simply typed lambda calculus and a simpler type for fixed points, as explained
in the introduction. The category of complete bisected ultrametric spaces forms a subcategory
of the topos of trees~\cite{ToT}, so this line of work is connected to the recent developments
on reasoning about processes using guarded recursion~\cite{BBBG018,KristensenMV22}. 
In these works, guarded recursion is formulated for a modal operator $\rhd$, which is
not related to probabilities. Equality, therefore, is not interpreted in terms of Kantorovich distance, as it is here.

The work on quantitative equational logic~\cite{MardarePP16,MardarePP18,MardarePP21} is also related. 
However, their equational approach fundamentally differs from ours
by using a Boolean-valued logical relation $\peq_\epsilon$ to reason about distances. Scaling of propositions and 
guarded recursion as used here would not work for 
a Boolean-valued logic. Indeed, the upper bound shown
in Section~\ref{sec:c:markov} can be proven in quantitative equational logic only by using the infinitary rule (Arch)~\cite{BacciBLM18}.
Mardare et al.~\cite{MardarePP16} show how a range of monads on metric spaces can be considered 
generated by quantitative algebraic theories. These include %the Kantorovich metric as well as 
the $p$-Wasserstein metrics on distributions and the Hausdorff metric on compact subsets of a metric space. It is unclear how many of these can be captured as quantitative algebras for theories with operations equipped with sensitivity factors, in the same way we algebraically encode the Kantorovich metric in this paper. While this approach also works for the Hausdorff metric, it is unlikely that the $p$-Wasserstein metrics for $p\neq 1$ can be encoded the same way. 
\cite{MardarePP21} use quantitative equational logic to reason about terms of a first-order language with fixed points modelled using the Banach fixed point theorem. Their syntax uses `Banach patterns' as a form of sensitivity annotations on terms. Interestingly, the Banach pattern for fixed point terms is similar to our typing rule (\textsc{fix}).
We are unaware of any extensions of quantitative equational logic to higher-order. 
Dal Lago et al.~\cite{LagoHLP22} extends quantitative equational logic to weak $\lambda$-theories, and Dahlqvist and Neves~\cite{DahlqvistN22,DahlqvistN23} to linear and affine $\lambda$-calulus. Both extensions do not deal with recursion.%}

Quantitative program logics for imperative languages with probabilistic choice operators have a long history going back
to Kozen~\cite{Kozen85}. For example, Avanzini et al.~\cite{AvanziniBDG25} define a relational Hoare logic where relations on stores $s : \store$ 
are random variables $\store \times \store \to \extpreals$, so assertions are quantitative, but the interpretation of Hoare triples
is qualitative. Recent research has extended this idea to give quantitative interpretations also to Hoare triples themselves. 
For example, Approxis~\cite{GregersenAHTB24} uses error credits to define an approximate separation logic. 
In future work we will explore how our logic offers a different viewpoint on these results: 
By reading the qualitative definition of Hoare quadruple in our 
logic, one obtains a quantitative interpretation of Hoare logic. Approximate statement about Hoare quadruples 
should then be provable in our logic by showing that $\epsilon\cdot \false$ implies a Hoare quadruple similarly to how 
we reason about distances between Markov processes in Section~\ref{sec:markov}. 

\section{Conclusions} % and Future Work}
\label{sec:conclusions}

We defined an affine calculus for sensitivity and a higher-order logic for reasoning about it. The calculus includes
a form of guarded recursion, where the guards are sensitivities in the open interval $(0,1)$, and we saw how
this could be used for programming recursive processes. The logic likewise includes guarded recursion, 
which can be used, e.g., for proving upper bounds on distances between processes. We also saw how the 
principles of induction in the logic, in particular the one for induction over $\TD$ are powerful
principles. For example, we saw how they lead to proofs by coupling. 

One might ask to what extent the semantics of our logic generalises to other settings. Our goal has been
to reason about metric spaces, and $\CMet$ is essentially the largest possible category in which the entire
logic can be interpreted soundly. For example, we need to include discrete sets into the category to 
model natural numbers. This requires either a finite upper limit on distances as we do, 
or using extended metric spaces, \ie, spaces where the distance function maps into $[0, \infty]$,
and correspondingly use $[0, \infty]$ also as the set of truth values.
The latter choice, however, invalidates both the Banach fixed point theorem (fixed points might not exist or be unique) and the
guarded recursion principle. 

On the other hand, it should be possible to adapt our language and logic to extended metric spaces by introducing 
a new form of judgements for finiteness of truth values and for extended metric spaces being metric spaces (so all distances are finite). 
The logical guarded recursion
principle (\textsc{g-rec}) should then be restricted to only prove finite $\phi$, and similarly, fixed points should
only be applicable to metric spaces. This model would have the benefit that scalar multiplication
distributes over $\mprod$ and $\ltensor$, and also rules like (\textsc{assoc${}_1$}) would be invertible. 
We leave exploring such systems to future work. 

In future work, it would also be interesting to address the limitations of our language mentioned in Remark~\ref{rem:limitations}. 
It might be possible to allow types and sensitivity annotations to depend on sets such
as $\nat$, but allowing them to depend on metric spaces seems more difficult. One possible starting point could be 
DFuzz~\cite{GaboardiHHNP13}, a version of Fuzz with lightweight dependent types. 

\bibliography{biblio}

% !TEX root = hoQuant.tex

\newpage

\onecolumn
\appendix

\section{Omitted proofs of Section~\ref{sec:prelim}}

We start by reviewing some known facts about $\CMet$ as their are crucial for the technical development of the paper.

$\CMet$ is a symmetric monoidal closed category with monoidal product $X \mprod Y$ being the metric space with underlying set $X \times Y$ and 
distance function  $d_{X \mprod Y}((x,y)(x',y')) = d_X(x,x') \oplus d_Y(y,y')$.
The internal hom $X \mexp Y$ is the set of non-expansive maps from $X$ to $Y$ with point-wise supremum metric $d_{X \mexp Y}(f,g) = \sup_{x \in X} d_Y(f(x),g(x))$. For $A \in \CMet$, the adjunction $(- \mprod A) \dashv (A \mexp -)$ has counit $\mathsf{ev} \colon A \mprod (A \mexp X) \to X$ given by $\mathsf{ev}(a,f)=f(a)$ (the evaluation map).
Note that $\mprod$ is not the categorical product in $\CMet$, for which the distance function would have $\max$ in place of $\oplus$, as $\CMet$ is not Cartesian closed~\cite{Lawvere73}.

The symmetric monoidal closed structure of $\CMet$ originates from the commutative unital quantale $(\uinterval, \geq, \oplus, \lol, 0)$ over which distances take value. The tensor is truncated sum $x \oplus y = \min \{ x+y, 1\}$, the unit is $0$, and  adjoint is truncated reversed substraction $x \lol y = \max \{y - x, 0 \}$. Indeed,
for all $x,y,z \in \uinterval$
\begin{align*}
    x \oplus y \geq z 
    %\iff \min \{ x+y, 1\} \geq z %\\
    \iff x+y \geq z %\\
    \iff x \geq z - y %\\
    %\iff x \geq \max \{ z - y, 0\} %\\
    \iff x \geq y \lol z \,.
\end{align*}
Observe that $\uinterval$ is taken with reverse order: the bottom element is $1$, the top element is $0$, meet is $\sup$ and join is $\inf$.

The monoidal bifunctor $\mprod \colon \CMet \times \CMet \to \CMet$ acts as
\begin{align*}
\textbf{Objects:}\quad
(X,d_X) \mprod (Y, d_Y) = (X \times Y, d_{X \mprod Y})  &&
\textbf{Morphisms:}\quad
(f \mprod g)(x,y) = (f(x), g(y)) \,.
\end{align*}
where $d_{X \mprod Y}((x,y),(x',y')) = d_X(x,x') \oplus d_Y(y,y')$ is the truncated sum of the $1$-bounded metrics $d_X$ and $d_Y$.

\begin{proposition}
$\mprod \colon \CMet \times \CMet \to \CMet$ is well-defined.
\end{proposition}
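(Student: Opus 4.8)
The plan is to verify that $\mprod$ preserves the three defining properties of a $1$-bounded metric space on the set $X \times Y$, and then that it acts correctly on morphisms. Throughout, the key structural fact I will use is that $(\uinterval, \geq, \oplus, \lol, 0)$ is a commutative unital quantale, and in particular that truncated addition $\oplus$ is commutative, associative, monotone in both arguments, and has $0$ as unit; these are exactly the features of $\uinterval$ recalled just before the proposition. I would also note at the outset that $d_{X \mprod Y}$ is manifestly $\uinterval$-valued since $\oplus$ truncates at $1$, so $1$-boundedness is automatic.

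First I would check the metric axioms for $d_{X \mprod Y}((x,y),(x',y')) = d_X(x,x') \oplus d_Y(y,y')$. For reflexivity: if $(x,y) = (x',y')$ then $d_X(x,x') = d_Y(y,y') = 0$, so their $\oplus$-sum is $0$; conversely, $a \oplus b = 0$ in $\uinterval$ forces $a = b = 0$ (truncation only ever increases values), whence $x = x'$ and $y = y'$ by reflexivity in $X$ and $Y$. Symmetry is immediate from symmetry of $d_X$, $d_Y$ and commutativity of $\oplus$. For the triangle inequality, given a third point $(x'',y'')$, I would chain the triangle inequalities in $X$ and $Y$ with monotonicity of $\oplus$:
\begin{align*}
 d_{X\mprod Y}((x,y),(x'',y'')) &= d_X(x,x'') \oplus d_Y(y,y'')
 \leq (d_X(x,x') \oplus d_X(x',x'')) \oplus (d_Y(y,y') \oplus d_Y(y',y'')) \\
 &= (d_X(x,x') \oplus d_Y(y,y')) \oplus (d_X(x',x'') \oplus d_Y(y',y''))
 = d_{X\mprod Y}((x,y),(x',y')) \oplus d_{X\mprod Y}((x',y'),(x'',y''))
\end{align*}
using commutativity and associativity of $\oplus$ in the middle step, and the fact that $\oplus$ is monotone (so the truncation in $d_{X\mprod Y}$ on the left is bounded by the nested truncated sum on the right).

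Next I would argue completeness: a Cauchy sequence $(x_n,y_n)$ in $X \mprod Y$ has both coordinate projections Cauchy, since $d_X(x_n,x_m) \leq d_{X\mprod Y}((x_n,y_n),(x_m,y_m))$ whenever the right-hand side is below $1$ (and the distances are eventually small), and similarly for $Y$; by completeness of $X$ and $Y$ the coordinates converge, and their limit $(x,y)$ is a limit of the pair since $d_{X\mprod Y}((x_n,y_n),(x,y)) \leq d_X(x_n,x) \oplus d_Y(y_n,y) \to 0$. Non-emptiness is clear as $X$ and $Y$ are non-empty. Finally, for morphisms: given non-expansive $f \colon X \to X'$ and $g \colon Y \to Y'$, the map $f \mprod g$ satisfies $d_{X'\mprod Y'}((f(x),g(y)),(f(x'),g(y'))) = d_{X'}(f(x),f(x')) \oplus d_{Y'}(g(y),g(y')) \leq d_X(x,x') \oplus d_Y(y,y') = d_{X\mprod Y}((x,y),(x',y'))$ by non-expansiveness of $f$, $g$ and monotonicity of $\oplus$, so $f \mprod g$ is a morphism of $\CMet$; functoriality (preservation of identities and composition) is checked pointwise and is immediate. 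The only mild subtlety — and the closest thing to an obstacle — is being careful that the truncation in $\oplus$ does not break the triangle inequality or the coordinatewise Cauchy/convergence estimates; this is handled uniformly by the observation that $\oplus$ is monotone and that $a \leq a \oplus b$ always, so all the needed inequalities survive truncation.
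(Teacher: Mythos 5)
Your proof is correct and follows essentially the same approach as the paper's: verify the metric axioms using commutativity, associativity, and monotonicity of $\oplus$, then check non-expansiveness of $f \mprod g$. One small difference worth noting: the paper explicitly records the observation that for $1$-bounded metrics the usual triangle inequality (with $+$) is equivalent to the $\oplus$-version, and then reduces to that; you use this equivalence implicitly when you ``chain the triangle inequalities in $X$ and $Y$'' to get an $\oplus$-bound, which is fine but slightly less transparent. On the other hand, you are more thorough than the paper in one respect: since $\CMet$ consists of non-empty, $1$-bounded, \emph{complete} spaces, ``well-defined on objects'' also requires that $X \mprod Y$ be complete and non-empty, and you check both of these (via coordinatewise Cauchy projections and the obvious argument, respectively) while the paper's proof omits them. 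So your argument is not only aligned with the paper's but actually fills a small gap left there.
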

\begin{proof}
Let $X, Y \in \CMet$. We shall prove that $d_{X \mprod Y}$ is a metric. %Let $x,x',x'' \in X$, $y,y',y'' \in Y$. Then,
\begin{align*}
 d_{X \mprod Y}((x,y)(x',y')) = 0 
 &\iff d_X(x,x') \oplus d_Y(y,y') = 0 \\
 &\iff d_X(x,x') = 0 \text{ and } d_Y(y,y') = 0 \\
 &\iff x = x' \text{ and } y = y' \\
 &\iff (x,y) = (x',y') \,.
\end{align*}
Symmetry follows by commutativity of $\oplus$. Observe that for $1$-bounded metrics triangular inequality is equivalent to $\forall a,b,c.\; d(a,c) \leq d(a,b) \oplus d(b,c)$. Then, by associativity and commutativity of $\oplus$ we have that
\begin{align*}
 d_{X \mprod Y}((x,y)(x'',y'')) 
 &= d_X(x,x'') \oplus d_Y(y,y'')  \\
 &\leq (d_X(x,x') \oplus d_X(x',x'')) \oplus (d_Y(y,y') \oplus d_Y(y',y'')) \\
 &= d_{X \mprod Y}((x,y)(x',y')) \oplus d_{X \mprod Y}((x',y')(x'',y'')) \,.
\end{align*}

On morphisms we shall prove that, if $f \colon X \to Z$ and $g \colon Y \to W$ are non-expansive maps, so is $f \mprod g$. Let $x,x' \in X$ and $y,y' \in Y$. Then, by monotonicity of $\oplus$, we have 
\begin{align*}
d_{X \mprod Y}((x,y),(x',y')) %= \\
&= d_X(x,x') \oplus d_Y(y,y')  \\
&\geq d_Z(f(x),f(x')) \oplus d_W(g(y),g(y')) \\
&= d_{Z \mprod W}((f(x),g(y)),(f(x'),g(y')))  \\
&= d_{Z \mprod W}((f \mprod g)(x,y),(f \mprod g)(x',y')) \,. 
\end{align*}

Functoriality of $\mprod$ (\ie, $id_X \mprod id_Y = id_{X \mprod Y}$ and 
$(f \circ g) \mprod (h \circ k) = (f \mprod h) \circ (g \mprod k))$ follows by routine calculations.
\end{proof}

Associativity and symmetry of the monoidal structure on $\CMet$ is a direct consequence of the associativity and commutativity of truncated sum $\oplus$ in $\uinterval$.

The monoidal structure just described on $\CMet$ is closed, that is, for any $A \in \CMet$ the endofunctor 
$(- \mprod A)$ has right adjoint $(A \mexp -) \colon \CMet \to \CMet$ defined by 
\begin{align*}
\textbf{Objects:}\quad
A \mexp (X, d_X) = (\CMet(A,X), d_{A \mexp X})
&&
\textbf{Morphisms:}\quad
(A \mexp f)(g) = f \circ g \,.
\end{align*}
where $\CMet(A,X)$ is the homset of the arrows $A \to X$ and 
%\begin{equation*}
  $d_{A \mexp X}(g,g') = \sup_{a \in A} d_X(g(a), g'(a))$.
%\end{equation*}
\begin{proposition}
$(A \mexp -) \colon \CMet \to \CMet$ is well-defined.
\end{proposition}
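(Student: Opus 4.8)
The plan is to verify the three things required for $(A \mexp -)$ to be a well-defined endofunctor on $\CMet$: that it sends objects of $\CMet$ to objects of $\CMet$, morphisms to morphisms, and preserves identities and composition.

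First I would check that for $X \in \CMet$ the pair $(A \mexp X, d_{A\mexp X})$ is again a non-empty $1$-bounded complete metric space. Boundedness is immediate: $d_{A\mexp X}(g,g') = \sup_{a\in A} d_X(g(a),g'(a)) \leq 1$ because $d_X$ is $1$-bounded, so in particular the supremum exists in $\uinterval$ and $d_{A\mexp X}$ is a well-defined function. For the metric axioms, $d_{A\mexp X}(g,g')=0$ iff $d_X(g(a),g'(a))=0$ for all $a$, iff $g=g'$; symmetry is inherited pointwise from $d_X$; and the triangle inequality follows by taking suprema over $a$ in the pointwise chain $d_X(g(a),g''(a)) \leq d_X(g(a),g'(a)) \oplus d_X(g'(a),g''(a)) \leq d_{A\mexp X}(g,g') \oplus d_{A\mexp X}(g',g'')$, using monotonicity of $\oplus$. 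Non-emptiness holds since $X$ is non-empty: any constant map $A \to X$ is non-expansive, so $A\mexp X$ has at least one point.

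Completeness is the only point needing a genuine argument. Given a Cauchy sequence $(g_n)$ in $A \mexp X$, for each fixed $a$ the sequence $(g_n(a))$ is Cauchy in $X$ since $d_X(g_n(a),g_m(a)) \leq d_{A\mexp X}(g_n,g_m)$, and hence converges to some $g(a)$ by completeness of $X$; this defines a function $g\colon A \to X$. I would then check $g$ is non-expansive by passing to the limit in $d_X(g_n(a),g_n(a')) \leq d_A(a,a')$, and that $g_n \to g$ in the sup metric: given $\varepsilon>0$, choose $\mathcal{N}$ so that $d_{A\mexp X}(g_n,g_m) \leq \varepsilon$ for $n,m \geq \mathcal{N}$; then for each $a$ and each $n \geq \mathcal{N}$, letting $m\to\infty$ gives $d_X(g_n(a),g(a)) \leq \varepsilon$, whence $d_{A\mexp X}(g_n,g) \leq \varepsilon$.

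On morphisms, for $f\colon X \to Y$ in $\CMet$, the map $(A\mexp f)(g) = f\circ g$ is non-expansive as a composite of non-expansive maps, so $A\mexp f$ does land in $A \mexp Y$; and $A\mexp f$ is itself non-expansive because for all $a$ we have $d_Y(f(g(a)),f(g'(a))) \leq d_X(g(a),g'(a)) \leq d_{A\mexp X}(g,g')$, so taking the supremum over $a$ yields $d_{A\mexp Y}(f\circ g,f\circ g') \leq d_{A\mexp X}(g,g')$. Preservation of identities and composition, $A\mexp \mathrm{id}_X = \mathrm{id}_{A\mexp X}$ and $A\mexp(f'\circ f) = (A\mexp f')\circ(A\mexp f)$, is immediate from unitality and associativity of function composition. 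The main obstacle, modest as it is, is the completeness argument; everything else is a routine unfolding of definitions.
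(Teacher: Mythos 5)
Your proof is correct and follows the same direct-verification strategy as the paper, but it is actually more complete: the paper's own proof only checks that $d_{A\mexp X}$ is a metric and that $(A\mexp -)$ is functorial, while silently omitting the checks that $A\mexp X$ is non-empty (your constant-map observation), $1$-bounded, and complete — all of which are required for membership in $\CMet$. Your Cauchy-sequence argument for completeness (pointwise limits, non-expansiveness in the limit, uniform convergence) is exactly the missing piece, and your use of monotonicity of $\oplus$ in the triangle inequality is cleaner than the paper's appeal to continuity of sum, which as written even asserts a spurious equality $\sup_a(f(a)+h(a)) = \sup_a f(a) + \sup_a h(a)$ where only $\leq$ holds. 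In short: same approach, with the gaps filled.
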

\begin{proof}
Let $X \in \CMet$. We shall prove that $d_{A \mexp X}$ is a metric. 
Reflexivity and symmetry are trivial. Triangular inequality follows
by (Scott-)continuity of sum
\begin{align*}
    d_{A \mexp X}(g,h) 
    = \sup_{a \in A} d_X(g(a), h(a)) %\\
    &\leq \sup_{a \in A} d_X(g(a), k(a)) + d_X(k(a), h(a)) \\
    & = \sup_{a \in A} d_X(g(a), k(a)) + \sup_{a \in A} d_X(k(a), h(a)) \,.
\end{align*}
$(A \mexp -)$ is well-defined on morphisms because non-expansive functions are closed under composition.  
Functoriality follows routinely by definition of $(A \mexp -)$ on morphisms.
\end{proof}
\begin{proposition}
For all $A \in \CMet$, $(- \mprod A)  \dashv  (A \mexp -)$.
\end{proposition}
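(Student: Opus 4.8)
The plan is to establish the adjunction $(- \mprod A) \dashv (A \mexp -)$ by exhibiting a natural isomorphism of hom-sets
\[
  \CMet(X \mprod A, Y) \iso \CMet(X, A \mexp Y) \,,
\]
as is standard for a symmetric monoidal closed structure. Given a non-expansive $f \colon X \mprod A \to Y$, I define $\curry(f) \colon X \to (A \mexp Y)$ by $\curry(f)(x)(a) = f(x,a)$; conversely, given non-expansive $g \colon X \to (A \mexp Y)$, I define $\uncurry(g) \colon X \mprod A \to Y$ by $\uncurry(g)(x,a) = g(x)(a)$. These are visibly mutually inverse as functions between sets, and the bijection is natural in $X$ and $Y$ by a routine diagram chase using the definitions of $\mprod$ and $A \mexp -$ on morphisms. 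All the content is therefore in checking that $\curry$ and $\uncurry$ land in $\CMet$, i.e. preserve non-expansiveness, and the only subtlety there is the interaction with the truncation in $\oplus$.

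For $\curry$: fixing $x$, the map $\curry(f)(x) \colon A \to Y$ is non-expansive since $d_Y(f(x,a), f(x,a')) \leq d_{X \mprod A}((x,a),(x,a')) = d_X(x,x) \oplus d_A(a,a') = d_A(a,a')$, so indeed $\curry(f)(x) \in A \mexp Y$; and $\curry(f)$ itself is non-expansive because
\[
  d_{A \mexp Y}(\curry(f)(x), \curry(f)(x')) = \sup_{a} d_Y(f(x,a), f(x',a)) \leq \sup_{a} \bigl( d_X(x,x') \oplus d_A(a,a) \bigr) = d_X(x,x') \,,
\]
using $1$-boundedness of $d_X$ so that $d_X(x,x') \oplus 0 = d_X(x,x')$. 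For $\uncurry$: given non-expansive $g$, the triangle inequality in $Y$, non-expansiveness of $g(x)$, and non-expansiveness of $g$ give
\[
  d_Y(g(x)(a), g(x')(a')) \leq d_Y(g(x)(a), g(x)(a')) + d_Y(g(x)(a'), g(x')(a')) \leq d_A(a,a') + d_{A \mexp Y}(g(x), g(x')) \leq d_A(a,a') + d_X(x,x') \,,
\]
and since $d_Y$ is itself bounded by $1$, the left-hand side is bounded by $\min\{ d_X(x,x') + d_A(a,a'), 1 \} = d_{X \mprod A}((x,a),(x',a'))$, so $\uncurry(g)$ is non-expansive.

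Finally, the counit of the adjunction is evaluation $\eval = \uncurry(id_{A \mexp Y}) \colon (A \mexp Y) \mprod A \to Y$ with $\eval(g,a) = g(a)$, and the unit at $X$ is $\curry(id_{X \mprod A}) \colon X \to A \mexp (X \mprod A)$; once the natural hom-set bijection is in place the triangle identities hold automatically, though they can also be checked by direct computation. I expect no genuine obstacle in this proof: it is the usual currying/uncurrying argument, and the only place needing care is the bookkeeping of the truncated sum $\oplus$ — which always works out, because truncation only decreases distances while every relevant inequality bounds some $d_Y(\cdot,\cdot)$ from above and $d_Y$ never exceeds $1$.
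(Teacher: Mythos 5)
Your proof is correct and follows essentially the same route as the paper: both arguments come down to defining currying/uncurrying (the paper phrases this via the counit $\eval$ and its universal property, you via the natural hom-set bijection) and then checking non-expansiveness of the resulting maps, with the same elementary estimates involving the truncated sum and $1$-boundedness. If anything, your verification of non-expansiveness of $\uncurry$ is a little more careful than the paper's corresponding computation for $\eval$, which silently drops the $d_A(a,a')$ contribution before invoking the triangle inequality.
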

\begin{proof}
For $A,B \in \CMet$, the counit $\mathsf{ev} \colon A \mprod (A \mexp B) \to B$ is defined as the \emph{evaluation map}: $\mathsf{ev}(a, g) = g(a)$, for 
$a \in A$ and $g \in A \mexp B$. Non-expansiveness follows by
\begin{align*}
    d_{A \mprod (A \mexp B)}((a,g),(a',g')) 
    &= d_A(a,a') \oplus \sup_{b \in A} d_B(g(b), g'(b)) \\
    &\geq \sup_{b \in A} d_B(g(b), g'(b)) \\
    %&\geq d_B(g(a), g'(a')) \\
    &= d_B(\mathsf{ev}(a, g), \mathsf{ev}(a', g')) \,.
\end{align*}
Naturality of $ev^A_B$ on $B$ follows by standard arguments.

We shall now prove the universal property of the adjunction. 
Let $f \colon A \mprod X \to B$ a morphism in $\CMet$ (\ie, a non-expansive map) then the map $h \colon X \to (A \mexp B)$, defined as $h(x)(a) = f(a,x)$. It easy to show that $h$ is the only map such that $f = \mathsf{ev}\circ A\mprod h = f$.
We are left to show that $h$ is non-expansive:
\begin{equation*}
%\begin{align*}
    d_X(x,x') 
    %&= \sup_{a \in A} d_A(a,a) \oplus d_X(x,x') \\
    = \sup_{a \in A} d_{A \mprod X}((a,x),(a,x')) %\\
    \geq \sup_{a \in A} d_{B}(f(a,x),f(a,x')) %\\
    %&= \sup_{a \in A} d_{B}(h(x)(a),h(x')(a)) \\
    = d_{A \mexp B}(h(x),h(x')) \,.
    \qedhere
%\end{align*}    
\end{equation*}
\end{proof}

\subsection*{Proof of Theorem~\ref{thm:graded:comonad}}
The maps $n_{r}$ and $w_A$ are clearly non-expansive 
as $\unit$ is a discrete space and $0 \scaling A = \unit$.
Non-expansiveness of $\kappa_{r,s,A}$ follows by monotonicity
of multiplication.

The map $m_{r,A,B}$ is non-expansive because 
\begin{align*}
 d_{r(A\mprod B)} ((a,b), (a',b')) 
 & = \min\{rd_A(a,a') + rd_B(b,b'), r, 1\} \\
 & \leq \min\{rd_A(a,a') + rd_B(b,b'), 1\} \\
 & = d_{rA \mprod rB}((a,b), (a',b'))
\end{align*}
If $r \geq 1$ the middle inequality is an equality.

For $c_{r,s,A}$, note that 
\begin{align*}
d_{(r+s)A}(a,a')
    &= \min\{(r+s)d_A(a,a'), 1 \} \\
    &= \min \{ rd_A(a,a')+ sd_A(a,a'), 1 \} \\
    &\geq \min \{ d_{rA}(a,a') + d_{sA}(a,a') ,1\} \\
    &= d_{rA\mprod sA}((a,a),(a',a')) 
\end{align*}

Finally, for $\asmap_{r,s,A}$, note that 
\begin{align*}
 d_{(rs)A}(a,a') 
 &=\min\{ rsd_A(a,a'), 1 \} \\
 & \geq \min \{ r \min \{ sd_A(a,a'),1 \},1 \} \\
 &= d_{r(sA)}(a,a')
\end{align*}
and if $s\leq 1$ or $r\geq 1$, then the inequality is an equality. 
%\end{proof}

\section{Omitted proofs of Section~\ref{sec:fixed:points}}

\subsection*{Proof of Proposition~\ref{prop:fp}}
%\begin{proof}[Proof of Proposition~\ref{prop:fp}]
 Let $f,f' \colon cY \to Y$. Since $f(\semfix(f)) = \semfix(f)$,
\begin{align*}
 d(\semfix (f), f(\semfix (f')) & \leq cd(\semfix(f), \semfix(f')) \\
 d(f(\semfix(f')), \semfix(f')) & \leq d(f,f') 
\end{align*}
so that $d(\semfix (f), \semfix (f')) \leq cd(\semfix(f), \semfix(f')) + d(f,f')$, and %implying 
\[
 (1-c)d(\semfix (f), \semfix (f')) \leq d(f,f')
\]
as desired. For the second statement, by currying $f$ and composing with $\semfp$ we see that
\[
  \semfp(f) \colon (1-c)X \to (1-c) Y
\]
which implies $\semfp(f) \colon X \to Y$ since $\frac1{1-c}((1-c)X) = X$.  
%\end{proof}

\section{Omitted proofs of Section~\ref{sec:prob:meas}}

The \emph{Radon probability monad} $\TD$ on $\CMet$, 
takes objects $X \in \CMet$ to the space of Randon probability distributions over $X$ and morphisms $f\colon X \to Y$ to $\D f \colon \D X \to \D Y$, defined as $
\D f (\mu)(E) = \mu (f^{-1}(E))$, for $\mu\in\D X$ and $E$ a Borel set in $Y$ (a.k.a., the pushforward measure along $f$). The unit $\delta_X \colon X \to \D X$ and multiplication $m_X \colon \D\D X \to \D X$, are defined as follows, for $x \in X$, $\Phi \in \D \D X $, 
and Borel set $E \subseteq X$
\begin{align*}
\delta_X(x) = \delta_x \,,
&&
m_X(\Phi)(E) = \int \, \mu(E) \; {\Phi(\mathrm{d}\mu)} \,,
\end{align*}
where $\delta_x$ is the Dirac measure at $x$.

In~\cite[Theorem~3.8]{MardarePP18}, it is shown that the Randon probability monad can be characterized as the free IB algebra over complete metric spaces. This provides us with the following universal property for $\D X$: For any IB algebra $Y$ and non-expansive map $X \to Y$, there exists a unique non-expansive map $f^\sharp \colon \D X \to Y$ making the following diagrams commute, for all $p \in (0,1)$.
\[
 \begin{tikzcd}
  p\D X \mprod \bar{p} \D X \ar{r} \arrow[d, "pf^\sharp \mprod \bar{p} f^\sharp"'] 
  & \D X \arrow[d, "f^\sharp"'] & X \arrow[l,"\delta_X"'] \arrow[dl, "f"]\\
  pY \mprod \bar{p}Y \ar{r} & Y
 \end{tikzcd}
\]
where $\bar p = (1-p)$ and the maps without labels are the interpretations of $\oplus_p$ in the respective IB algebras. The map $f^\sharp$ is called \emph{homomorphic extension} of $f$.

The proof of Proposition~\ref{prop:D:free:IB} follows from the universal
property above.
\subsection{Proof of Proposition~\ref{prop:D:free:IB}}
%\begin{proof}[Proof of Proposition~\ref{prop:D:free:IB}]
We first show the case where $\Gamma$ is $\unit$ (so can be ignored), and prove the cases $r=0$, $0 < r \leq 1$, and $1 < r < \infty$ separately.
\begin{description}[wide, font=\sc]
\item[Case $r = 0$:] 
Follows by defining $\bar{f} = f$ as $0 X = \unit = 0 \D X$. Uniqueness of the map also follows trivially.

\item[Case $0 < r \leq 1$:] 
Observe that in this cases, the spaces $r\D X$ and $\D(rX)$ are equal.
Indeed, $X$ and $rX$ have same topology and Borel $\sigma$-algebra. Therefore $\D X$ and $\D (rX)$ have same underlying set of Radon measures. Moreover, by linearity of the integral:
\begin{equation*}
    d_{r\D X}(\mu,\nu) 
    = r \left(\min_{\omega} \int \, d_X \; {\omega}\right)
    = \min_{\omega} \int \, r d_X \; {\omega} 
    = d_{\D(rX)}(\mu,\nu) \,.
\end{equation*}
Let $f^\sharp \colon \D (rX) \to Y$ be the homomorphic extension of $f \colon rX \to Y$. By the equality $r\D X = \D(rX)$, a good candidate for $\bar{f} \colon r\D X \to Y$ is $f^\sharp$. So let $\bar{f} = f^\sharp$. Clearly, $\bar{f}$ is non-expansive. As $X$ and $rX$ have some Borel sets, $r\delta_X(x) = \delta_x = \delta_{rX}(x)$, therefore 
$f = \bar{f} \circ r\delta_X$. Furthermore, the commutativity of the diagram
\[
 \begin{tikzcd}[column sep=10ex]
  (pr)\D X \mprod (\bar pr) \D X \ar{r}{(\asmap\mprod\asmap)\circ m} \arrow[d, "\asmap\mprod\asmap"'] & r(p\D X \mprod \bar p\D X) \ar{r} & r\D X \ar{d}{\bar{f}}  \\
  p(r\D X) \mprod \bar p(r\D X) \ar{r}{p\bar{f} \mprod \bar p \bar{f}} & pY \mprod \bar p Y \ar{r} & Y
 \end{tikzcd}
\] 
can be easily established by direct calculation using the fact that $f^\sharp$ is an homomorphism of IB algebras and the maps from Theorem~\ref{thm:graded:comonad} are identities.

\item[Case $1 < r < \infty$:] 
Observe that, for $1 \leq r < \infty$, there is an adjunction $r \scaling (-) \dashv \frac1r\scaling(-)$ on $\CMet$ between scaling functor. Thus, define
$\tilde f\colon X \to \frac1r Y$ the adjoint of $f\colon rX \to Y$. Note that $f$ and $\tilde f$ have the same underlying set map. As $\frac1r \leq 1$, by Lemma~\ref{lem:IB:mexp:r}, $\frac1rY$ is a IB algebra, with algebra structure 
$p (\frac1rY) \mprod \bar{p} (\frac1rY) =  \frac1r(pY) \mprod \frac1r(\bar{p} Y) \xrightarrow{m} \frac1r (p Y \mprod \bar{p} Y) \xrightarrow{\frac1r \oplus_p} \frac1r Y$. Note that the algebra structure of $B$ and $\frac1rB$ have the same underlying set-maps.
Then, it make sense to define $\bar f \colon r\D X \to Y$ as the adjoint of  $(\tilde{f})^\sharp \colon \D X \to \frac1r Y$, which is obtained as the unique homomorphic extension of $\tilde{f}$. The commutativity of the diagram
\[
 \begin{tikzcd}[column sep=10ex]
  (pr)\D X \mprod (\bar pr) \D X \ar{r}{(\asmap\mprod\asmap)\circ m} \arrow[d, "\asmap\mprod\asmap"'] & r(p\D X \mprod \bar p\D X) \ar{r} & r\D X \ar{d}{\bar{f}}  \\
  p(r\D X) \mprod \bar p(r\D X) \ar{r}{p\bar{f} \mprod \bar p \bar{f}} & pY \mprod \bar p Y \ar{r} & Y
 \end{tikzcd}
\] 
can be established by direct calculation using the fact that $(\tilde{f})^\sharp$ is an homomorphism of IB algebras and $f$ has the same underlying map of $(\tilde{f})^\sharp$, since the maps from Theorem~\ref{thm:graded:comonad} are identities.
% \qedhere
\end{description}
The more general case, where $\Gamma$ is not assumed to be $\unit$, follows from the above by using the IB-algebra structure on $\Gamma \mexp Y$ obtained as in Lemma~\ref{lem:IB:mexp:r}.
Indeed, by uncurrying $\tilde f \colon \mexp \D X \to (\Gamma \mexp Y)$, we obtain the desired map. 
%\end{proof}

\begin{proof}[Proof of Lemma~\ref{lem:IB:mexp:r}]
The operation on $X \mprod X'$ can be defined from those of $X$ and $X'$ as the following isomorphisms holds, since $p < 1$,
\[ p(X \mprod X') \mprod (1-p)(X \mprod X') \iso
 (pX \mprod (1-p)X) \mprod (pX' \mprod (1-p)X') \,. \]
 
The operation on $Y \mexp X$ can be defined as the 
adjoint correspondent to the operation 
\[Y\mprod p(Y \mexp X)\mprod (1-p)(Y \mexp X) \to X\] 
defined as a composition using the maps 
\begin{align*}
 c_{p,1-p,Y} & : Y \to pY \mprod (1-p)Y \\
 m_{p,Y, Y \mexp X} & : pY \mprod p(Y \mexp X) \to p(Y\mprod (Y \mexp X)) 
\end{align*}
function evaluation and the IB algebra operation on $X$. 

For the algebra structure on $rB$, note that since $r\leq 1$, 
\[ p(rB) \mprod (1-p)(rB) \iso r(pB) \mprod r(1-p)B
\]
The algebra structure can therefore be defined using the map $m$ from Theorem~\ref{thm:graded:comonad}. %$m_{r,A,B}$.
\end{proof}

\section{Omitted proofs of Section~\ref{sec:calculus}}

\subsection*{Proof of Lemma~\ref{lm:weakening} (1)}
%\begin{proof}[Proof of Lemma~\ref{lm:weakening} (1)] 
We show that $\hastype{\Gamma,\Gamma'}{t}{A}$ implies $\hastype{\Gamma,\Delta,\Gamma'}{t}{A}$.
The proof is a simple induction on the typing derivation of $\hastype{\Gamma,\Gamma'}{t}{A}$. We show only a few interesting cases:

\begin{description}[wide, font=\sc]
\item[Case (var):] Assume $\hastype{\Gamma,\Gamma'}{x}{A}$ was derived with an application of (\textsc{var}). This can happen in two cases: either $\Gamma = \Gamma_1,x:^r A,\Gamma_2$ or  $\Gamma' = \Gamma'_1,x:^r A,\Gamma'_2$, for some $r\geq 1$. Then also the following are derivable via an application of the same typing rule:
\begin{mathpar}
\infrule[var]{
    r \geq 1}{
\hastype{\Gamma_1,x:^r A,\Gamma_2,\Delta,\Gamma'}{x}{A} }
\and
\infrule[var]{
    r \geq 1}{
\hastype{\Gamma, \Delta, \Gamma'_1,x:^r A,\Gamma'_2}{x}{A} }
\end{mathpar}

\item[Case (abs):] Assume $\hastype{\Gamma,\Gamma'}{\lambda x. t}{A \lol_r B}$ was derived with an application of (\textsc{abs}). 
Then, necessarily $\hastype{\Gamma,\Gamma', x:^rA}{t}{B}$ is also derivable. By inductive hypothesis so is $\hastype{\Gamma,\Delta,\Gamma', x:^rA}{t}{B}$ (note that we can assume $x \notin \Delta$, otherwise apply $\alpha$-renaming to $t$). 
Then the following is derivable:
\begin{mathpar}
\infrule[abs]{
    \hastype{\Gamma,\Delta,\Gamma', x:^rA}{t}{B} }{
\hastype{\Gamma,\Delta,\Gamma'}{\lambda x. t}{A \lol_r B} }
\end{mathpar}

\item[Case (app):] Assume $\hastype{\Gamma,\Gamma'}{tu}{A}$ was derived with an application of (\textsc{app}). Then, necessarily, $\hastype{\Gamma_1,\Gamma'_1}{t}{B \lol_r A}$ and $\hastype{\Gamma_2,\Gamma'_2}{u}{B}$ are also derivable, for $\Gamma = \Gamma_1+r\Gamma_2$ and $\Gamma' = \Gamma'_1+r\Gamma'_2$.
Observe that any $\Delta$ can always be described as the sum $\Delta = \Delta_1 +r\Delta_2$, for appropriate $\Delta_1, \Delta_2$.
By inductive hypothesis, then the following is derivable:
\begin{mathpar}
\infrule[app]{
    \hastype{\Gamma_1,\Delta_1,\Gamma'_1}{t}{B \lol_p A} &
    \hastype{\Gamma_2,\Delta_2,\Gamma'_2}{u}{B} }{
\hastype{(\Gamma_1,\Delta_1,\Gamma'_1)+r(\Gamma_2',\Delta_2,\Gamma'_2)}{t\,u}{B} }
\end{mathpar}
and $\Gamma,\Delta,\Gamma' = (\Gamma_1,\Delta_1,\Gamma'_1)+r(\Gamma_2',\Delta_2,\Gamma'_2)$.

\item[Case ($\mprod$):] Assume $\hastype{\Gamma,\Gamma'}{(t,u)}{A \pair{r}{s} B}$ was derived with an application of ($\mprod$). Then, necessarily, $\hastype{\Gamma_1,\Gamma_2}{t}{A}$ and $\hastype{\Gamma'_1, \Gamma'_2}{u}{B}$ are also derivable, for
$\Gamma = r\Gamma_1+s\Gamma_2+\Gamma_3$ and $\Gamma' = r\Gamma'_1+s\Gamma'_2+\Gamma'_3$. Observe that $\Delta$ can described as the sum $\Delta= r\Delta_1+s\Delta_2+\Delta_3$, for 
appropriate $\Delta_1,\Delta_2,\Delta_3$. By inductive hypothesis, the following is derivable:
\begin{mathpar}
\infrule[$\mprod$]{
    \hastype{\Gamma_1,\Delta_1,\Gamma_2}{t}{A} &
    \hastype{\Gamma'_1,\Delta_2,\Gamma'_2}{u}{B} }{
\hastype{r(\Gamma_1,\Delta_1,\Gamma'_1) + s(\Gamma_2,\Delta_2,\Gamma'_2)+
(\Gamma_3,\Delta_3,\Gamma'_3)}{(t,u)}{A \pair{r}{s} B} }
\end{mathpar}
and $\Gamma,\Delta,\Gamma' = r(\Gamma_1,\Delta_1,\Gamma'_1) + s(\Gamma_2,\Delta_2,\Gamma'_2)+
(\Gamma_3,\Delta_3,\Gamma'_3)$.

\item[Case ($\oplus_p$):] Assume $\hastype{\Gamma,\Gamma'}{t \oplus_p u}{\TD A}$ was derived with an application of ($\oplus_p$).
Then, necessarily, $\hastype{\Gamma_1,\Gamma'_1}{t}{\TD A}$ and 
$\hastype{\Gamma_2, \Gamma'_2}{u}{\TD A}$ are also derivable, for
$\Gamma = p\Gamma_1 + (1-p)\Gamma_2$ and $\Gamma' = \Gamma'_1, \Gamma'_2$.
By inductive hypothesis, then the following is derivable:
\begin{mathpar}
\infrule[$\oplus_p$]{
    \hastype{\Gamma_1,\Delta,\Gamma'_1}{t}{\TD A} &
    \hastype{\Gamma_2,\Delta,\Gamma'_2}{u}{\TD A}
    & p\in(0,1) }{
\hastype{p(\Gamma_1,\Delta,\Gamma'_1) + (1-p)(\Gamma_2,\Delta,\Gamma'_2)}{t \oplus_p u}{\TD A} }
\end{mathpar}
and $\Gamma,\Delta,\Gamma' = p(\Gamma_1,\Delta,\Gamma'_1) + (1-p)(\Gamma_2,\Delta,\Gamma'_2)$.

\item[Case (fix):] Assume $\hastype{\Gamma,\Gamma'}{\fix{x}{t}}{A}$ was derived with an application of (\textsc{fix}) and that $x \notin \Delta$ (otherwise apply $\alpha$-renaming to $t$).
Then, necessarily, $\hastype{(1-p)(\Gamma,\Gamma'), x:^pA}{t}{A}$
is also derivable for $p < 1$. Then, by inductive hypothesis,
%$\hastype{(1-p)\Gamma,\Delta,(1-p)\Gamma', x:^pA}{t}{A}$ 
the following is also derivable
\begin{mathpar}
\infrule[fix]{
    \hastype{(1-p)(\Gamma,\Delta,\Gamma'), x:^pA}{t}{A} & 
    p < 1 }{
\hastype{\Gamma,\Delta,\Gamma'}{\fix{x}{t}}{A} }
\end{mathpar}
\end{description}
The induction follows similarly also for the terms of
the extended calculus presented in Section~\ref{sec:logic}.

\subsection*{Proof of Lemma~\ref{lm:weakening} (2)}
%\begin{proof}[Proof of Lemma~\ref{lm:weakening} (2)] 
We show that $\hastype{\Gamma}{t}{A}$ implies $\hastype{\Gamma+\Delta}{t}{A}$, for any $\Delta$ compatible with
$\Gamma$.
The proof is by induction on the derivation of $\hastype{\Gamma}{t}{A}$. We show only a few interesting cases:

\begin{description}[wide, font=\sc]

\item[Case (var):] Assume $\hastype{\Gamma}{x}{A}$ was derived with an application of (\textsc{var}). Then, $\Gamma = \Gamma_1, x:^rA, \Gamma_2$ for $r \geq 1$. As $\Delta$ is compatible with $\Gamma$, it can be split as $\Delta = \Delta_1, x:^s A, \Delta_2$, for appropriate $\Delta_1,\Delta_2$ and $s\in[0,\infty]$. By an application of (\textsc{var}) we derive 
\begin{mathpar}
\infrule[var]{
    r+s \geq 1}{
\hastype{(\Gamma_1+\Delta_1), x:^{r+s}A, (\Gamma_2+\Delta_2)}{x}{A} }
\end{mathpar}
and $\Gamma + \Delta = (\Gamma_1+\Delta_1), x:^{r+s}A, (\Gamma_2+\Delta_2)$.

\item[Case (abs):] Assume $\hastype{\Gamma}{\lambda x. t}{A \lol_r B}$ was derived with an application of (\textsc{abs}). Then, necessarily, $\hastype{\Gamma, x:^rA}{t}{B}$. By compatibility with $\Gamma$, we have $x\notin \Delta$. Then, the following is also derivable 
\begin{mathpar}
\infrule[abs]{
    \hastype{(\Gamma+\Delta), x:^rA}{t}{B} }{
\hastype{\Gamma+\Delta}{\lambda x. t}{A \lol_r B} }
\end{mathpar}
because the premise is equivalent to $\hastype{(\Gamma, x:^rA)+\Delta'}{t}{B}$ for $\Delta' = \Delta, x:^0 A$, which is derivable by inductive hypothesis.

\item[Case (app):] Assume $\hastype{\Gamma}{t\,u}{A}$ was derived with an application of (\textsc{app}). Then, necessarily, $\hastype{\Gamma_1}{t}{B \lol_r A}$ and $\hastype{\Gamma_2}{u}{B}$,
for $\Gamma = \Gamma_1 + r\Gamma_2$. Then, by inductive hypothesis, the following is also derivable:
\begin{mathpar}
\infrule[app]{
    \hastype{\Gamma_1+\Delta}{t}{B \lol_p A} &
    \hastype{\Gamma_2}{u}{B} }{
\hastype{(\Gamma_1+\Delta) + r\Gamma_2}{t\,u}{A} }
\end{mathpar}
and $\Gamma+\Delta = (\Gamma_1+\Delta) + r\Gamma_2$.

\item[Case ($\mprod$):] Assume $\hastype{\Gamma}{(t,u)}{A \pair rs B}$ was derived with an application of ($\mprod$). Then, necessarily, $\hastype{\Gamma_1}{t}{A}$, and $\hastype{\Gamma_3}{u}{B}$, for 
$\Gamma = r\Gamma_1 + s\Gamma_2+\Gamma_3$.
Thus, the following is also derivable
\begin{mathpar}
\infrule[$\mprod$]{
    \hastype{\Gamma_1}{t}{A} &
    \hastype{\Gamma_3}{u}{B} }{
\hastype{\Gamma+\Delta}{(t,u)}{A \pair{r}{s} B} }
\end{mathpar}
Note that, there was no need of using the inductive hypothesis in this cases, as weakening is already built in the rule. (Note that if the rule did not allow weakening in the conclusion, the theorem would fail in this case $s=r=0$.) 

\item[Case ($\oplus_p$):] Assume $\hastype{\Gamma}{t \oplus_p u}{\TD A}$ was derived with an application of ($\oplus_p$). Then, necessarily, $\hastype{\Gamma_1}{t}{\TD A}$ and $\hastype{\Gamma_2}{u}{\TD A}$ for $\Gamma = p\Gamma_1 + (1-p)\Gamma_2$. As $p\in(0,1)$, the context $\frac{1}{p}\Delta$ is well defined. Thus, by inductive hypothesis, the following is derivable:
\begin{mathpar}
\infrule[$\oplus_p$]{
    \hastype{\Gamma_1+\frac{1}{p}\Delta}{t}{\TD A} &
    \hastype{\Gamma_2}{u}{\TD A}
    & p\in(0,1) }{
\hastype{p(\Gamma_1+\frac{1}{p}\Delta) + (1-p)\Gamma_2}{t \oplus_p u}{\TD A} }
\end{mathpar}
and $\Gamma+\Delta = p(\Gamma_1+\frac{1}{p}\Delta) + (1-p)\Gamma_2$.

\item[Case (fix):] Assume $\hastype{\Gamma}{\fix{x}{t}}{A}$ was derived with an application of (\textsc{fix}). Then, we must have that $\hastype{(1-p)\scaling\Gamma, x:^pA}{t}{A}$ is derivable and 
$p<1$. By compatibility with $\Gamma$, also for $\Delta$ we have
that $x\notin \Delta$. Then, the following is derivable
\begin{mathpar}
\infrule[fix]{
    \hastype{(1-p)(\Gamma+\Delta), x:^pA}{t}{A} & 
    p < 1 }{
\hastype{\Gamma+\Delta}{\fix{x}{t}}{A} }
\end{mathpar}
as the premise is equivalent to $\hastype{((1-p)\Gamma, x:^pA)+\Delta'}{t}{A}$ for $\Delta' = (1-p)\Delta, x:^0 A$, which is derivable by inductive hypothesis. 
\end{description}
The induction follows similarly also for the terms of
the extended calculus presented in Section~\ref{sec:logic}.
%\end{proof}

\subsection*{Proof of Lemma~\ref{lm:substitution}}
%\begin{proof}[Proof of Lemma~\ref{lm:substitution}]
We show that $\hastype{\Gamma, x:^rA, \Gamma'}{t}{B}$
and $\hastype{\Delta}{u}{A}$ implies the derivability of $\hastype{(\Gamma,\Gamma')+r\Delta}{t[u/x]}{B}$ whenever $(\Gamma,\Gamma')+r\Delta$ is well-defined. The proof is by induction on the derivation of $\hastype{\Gamma, x:^rA,\Gamma'}{t}{B}$. We show only a few interesting cases.

\begin{description}[wide, font=\sc]

\item[Case (var):] Assume the derivation of $\hastype{\Gamma, x:^rA,\Gamma'}{t}{B}$ ends with an application of (\textsc{var}). Then $t = y$ for some variable $y$. We distinguish two cases:
\begin{itemize}
    \item If $y=x$, then $t[u/x] = u$, $r\geq 1$, and $B = A$. Then, the thesis follows from $\hastype{\Delta}{u}{A}$ by Lemma~\ref{lm:weakening}(2) (weakening), as we obtain $\hastype{(\Gamma,\Gamma')+r\Delta}{u}{A}$.

    \item If $y\neq x$, then $t[u/x] = y$ and either $\Gamma = \Gamma_1, y:^s B, \Gamma_2$ or $\Gamma' = \Gamma'_1, y:^s B, \Gamma'_2$, for some $s\geq 1$. In both cases,
    apply (\textsc{var}) and then Lemma~\ref{lm:weakening}(2) (weakening) to obtain 
    $\hastype{(\Gamma,\Gamma')+r\Delta}{y}{B}$.
\end{itemize}

\item[Case (abs):] Assume the derivation of $\hastype{\Gamma, x:^r A,\Gamma'}{t}{B}$ ends with an application of (\textsc{abs}). Then the situation is as follows:
\begin{mathpar}
\infrule[abs]{
    \hastype{\Gamma, x:^rA, \Gamma', y:^sC}{v}{D} }{
\hastype{\Gamma,x:^rA,\Gamma'}{\underbrace{\lambda y. v}_{t}}{\underbrace{C \lol_s D}_{B} } }
\end{mathpar}
Note that $y \notin \Delta$, since $\Delta$ contains the same variables as $\Gamma,x:^rA,\Gamma'$. By Lemma~\ref{lm:weakening}(1) (weakening), 
from $\hastype{\Delta}{u}{A}$, we have $\hastype{\Delta,y:^0C}{u}{A}$. Then, the following is derivable:
\begin{mathpar}
\infrule[abs]{
    \hastype{((\Gamma,\Gamma')+p\Delta), y:^sC}{v[u/x]}{D} }{
\hastype{(\Gamma,\Gamma')+r\Delta}{\underbrace{\lambda y. v[u/x]}_{t[u/x]}}{\underbrace{C \lol_s D}_{B} } }
\end{mathpar}
as the premise follows by inductive hypothesis on 
$\hastype{\Gamma, x:^rA, \Gamma', y:^sC}{v}{D}$ with $\hastype{\Delta,y:^0C}{u}{A}$.

\item[Case (app):] Assume the derivation of $\hastype{\Gamma, x:^r A,\Gamma'}{t}{B}$ ends with an application of (\textsc{app}). Then the situation is as follows:
\begin{mathpar}
\infrule[app]{
    \hastype{\Gamma_1,x:^{r_1}A,\Gamma'_1}{v}{C \lol_s B} &
    \hastype{\Gamma_2,x:^{r_2}A,\Gamma'_2}{w}{C} }{
\hastype{\Gamma, x:^r A, \Gamma'}{\underbrace{v\,w}_t}{B} }
\end{mathpar}
for $\Gamma = \Gamma_1+s\Gamma_2$, $\Gamma' = \Gamma'_1+s\Gamma'_2$, and $r = r_1+sr_2$.
Then, by a direct application of the inductive hypothesis on
the premises of the above, the following is
also derivable:
\begin{mathpar}
\infrule[app]{
    \hastype{(\Gamma_1,\Gamma'_1)+r_1\Delta}{v[u/x]}{C \lol_s B} &
    \hastype{(\Gamma_2,\Gamma'_2)+r_2\Delta}{w[u/x]}{C} }{
\hastype{\underbrace{((\Gamma_1,\Gamma'_1)+r_1\Delta) + s((\Gamma_2,\Gamma'_2)+r_2\Delta)}_{(\Gamma,\Gamma')+r\Delta}}{\underbrace{v[u/x]\,w[u/x]}_{t[u/x]}}{B} }
\end{mathpar}

\item[Case (fix):] Assume the derivation of $\hastype{\Gamma, x:^r A,\Gamma'}{t}{B}$ ends with an application of (\textsc{fix}). Then the situation is as follows:
\begin{mathpar}
\infrule[fix]{
    \hastype{(1-p)(\Gamma, x:^rA, \Gamma'), y:^pB}{v}{B} & 
    p < 1 }{
\hastype{\Gamma, x:^rA, \Gamma'}{\underbrace{\fix{y}{v}}_t}{B} }
\end{mathpar}
%W.l.o.g.\ we can assume $y\notin \Delta$ (otherwise apply $\alpha$-renaming to $t$). 
By Lemma~\ref{lm:weakening}(1) (weakening), from $\hastype{\Delta}{u}{A}$, we have $\hastype{\Delta,y:^0B}{u}{A}$. Then, the following is derivable
\begin{mathpar}
\infrule[fix]{
    \hastype{(1-p)((\Gamma,\Gamma')+r\Delta), y:^pB}{v[u/x]}{B} & 
    p < 1 }{
\hastype{(\Gamma,\Gamma')+r\Delta}{\underbrace{\fix{y}{v[u/x]}}_{t[u/x]}}{B} }
\end{mathpar}
as the premise is derivable by inductive hypothesis on 
$\hastype{(1-p)(\Gamma, x:^rA, \Gamma'), y:^pB}{v}{B}$ with $\hastype{\Delta,y:^0C}{u}{A}$.
\end{description}
The induction follows similarly also for the terms of
the extended calculus presented in Section~\ref{sec:logic}.
%\end{proof}

\subsection{Omitted details about the denotation of terms}

We start by providing the definitions of the morphisms $\splt$, $\distribute$, and $\semweak$.
\begin{proposition}
For typing contexts $\Gamma,\Gamma',\Delta$ and $r \in \extpreals$ 
there exist morphisms %natural transformations
\begin{align*}
\splt_{\Gamma,\Delta} &\colon \den{\Gamma+\Delta} 
    \to \den{\Gamma} \mprod \den{\Delta}
&
\proj_{\Gamma,\Delta,\Gamma'} &\colon \den{\Gamma,\Delta,\Gamma'} 
    \to \den{\Gamma,\Gamma'}
\\
\distribute_{r,\Gamma} &\colon \den{r\scaling\Gamma}
    \to r\scaling\den{\Gamma}
&
\semweak_{\Gamma,\Delta} &\colon \den{\Gamma+\Delta} 
    \to \den{\Gamma}
\end{align*}
\end{proposition}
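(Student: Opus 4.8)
The plan is to construct each of the four families of morphisms by induction on the structure of the relevant typing contexts, using the natural transformations supplied by Theorem~\ref{thm:graded:comonad} together with the symmetric monoidal structure of $\CMet$ and the non-expansive projections of $\mprod$. Non-expansiveness will then be automatic in every case, since each morphism is built as a composite of non-expansive maps.

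For $\splt_{\Gamma,\Delta}$ I would induct on $\Gamma$ (equivalently on $\Delta$, since $\Gamma+\Delta$ is defined only when $\Gamma$ and $\Delta$ are compatible, hence of the same length and shape). When $\Gamma = \Delta = \emptyctx$ both sides denote $\unit$, and I take the inverse of the unitor $\unit \mprod \unit \iso \unit$. When $\Gamma = \Gamma_0, x :^r A$ and $\Delta = \Delta_0, x :^s A$, we have $\den{\Gamma + \Delta} = \den{\Gamma_0 + \Delta_0} \mprod (r+s)\scaling\den A$; applying $\splt_{\Gamma_0,\Delta_0} \mprod c_{r,s,\den A}$ followed by the evident rearrangement of tensor factors built from the symmetry and associativity isomorphisms of $\mprod$ gives a map into $(\den{\Gamma_0} \mprod r\scaling\den A) \mprod (\den{\Delta_0} \mprod s\scaling\den A) = \den\Gamma \mprod \den\Delta$. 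This works uniformly regardless of whether $r$ or $s$ is $0$: in those cases $c_{r,s,\den A}$ is a unitor rather than the diagonal, but its source and target are unchanged. The map $\semweak_{\Gamma,\Delta}$ is then obtained by postcomposing $\splt_{\Gamma,\Delta}$ with the projection $\den\Gamma \mprod \den\Delta \to \den\Gamma$.

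For $\proj_{\Gamma,\Delta,\Gamma'}$ I would induct on $\Gamma'$. If $\Gamma' = \emptyctx$ I induct on $\Delta$, taking the identity when $\Delta = \emptyctx$ and, for $\Delta = \Delta_0, x :^r A$, composing the projection $\den{\Gamma,\Delta_0} \mprod r\scaling\den A \to \den{\Gamma,\Delta_0}$ with the inductive hypothesis; if $\Gamma' = \Gamma'_0, x :^r A$ I take $\proj_{\Gamma,\Delta,\Gamma'_0} \mprod \mathrm{id}_{r\scaling\den A}$. For $\distribute_{r,\Gamma}$ I would induct on $\Gamma$: when $\Gamma = \emptyctx$ the source is $\unit$ and I take $n_r \colon \unit \to r\scaling\unit$; when $\Gamma = \Gamma_0, x :^s A$, the source is $\den{r\scaling\Gamma_0} \mprod (rs)\scaling\den A$, and I compose $\distribute_{r,\Gamma_0} \mprod \asmap_{r,s,\den A}$ (landing in $r\scaling\den{\Gamma_0} \mprod r\scaling(s\scaling\den A)$) with $m_{r,\,\den{\Gamma_0},\,s\scaling\den A} \colon r\scaling\den{\Gamma_0} \mprod r\scaling(s\scaling\den A) \to r\scaling(\den{\Gamma_0} \mprod s\scaling\den A)$ to land in $r\scaling\den\Gamma$.

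The only point that requires attention is the bookkeeping of the coherence isomorphisms when rearranging tensor factors in the inductive step for $\splt$; since these are routine instances of Mac Lane coherence, I would not spell them out. I do not expect a genuine obstacle here: the components of Theorem~\ref{thm:graded:comonad} were designed precisely to supply the maps needed above, and they are available for all values of the scaling parameters, including $0$ and $\infty$.
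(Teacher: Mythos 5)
Your construction of $\splt$ and $\distribute$ matches the paper's proof essentially step for step: induction on context length with $c_{r,s,A}$ plus a rearrangement (the paper's $\mathsf{match}$) in the $\splt$ step, and $m\circ(\distribute\mprod\asmap)$ in the $\distribute$ step (the paper's base case $\mathrm{id}_\unit$ coincides with your $n_r$ since $r\scaling\unit=\unit$). The only small divergences are that the paper defines $\semweak$ by its own induction with $\kappa$ rather than as $\splt$ followed by a projection, and defines $\proj$ directly as $\den\Gamma\mprod(\lambda^{-1}\circ(!_1\mprod\den{\Gamma'}))$ rather than by your induction on $\Delta$ and $\Gamma'$; both of your variants give equal morphisms, so these are stylistic rather than substantive differences.
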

\begin{proof}
We define the maps $\splt$, $\distribute$, and $\semweak$ by induction on the size of the typing contexts as follows
\begin{align*}
\splt_{\emptyctx,\emptyctx} = \lambda_\unit \,, &\quad
\splt_{(\Gamma, x:^rA)+(\Delta,x:^sA)} = \mathsf{match} \circ 
    (\splt_{\Gamma,\Delta} \mprod c) \,;
\\
\distribute_{r,\emptyctx} = id_{\unit} \,, &\quad
\distribute_{r,(\Gamma,x:^sA)} =  m \circ (\distribute_{r,\Gamma} \mprod \asmap) \,;
\\
\semweak_{\emptyctx,\emptyctx} = id_1 \,, &\quad
\semweak_{(\Gamma+\Delta, x:^{r+s}A),(\Delta,x:^sA)} =  
    \semweak_{\Gamma,\Delta} \mprod \kappa \,;
\end{align*}
where $c$, $\asmap$, $m$, and $\kappa$ are from
Theorem~\ref{thm:graded:comonad}, $\lambda_A \colon A \xrightarrow{\iso} \unit \mprod A$ is the left unitor, and $\mathsf{match}_{A,B,C,D} \colon (A \mprod B)\mprod (C \mprod D) \xrightarrow{\iso} (A \mprod C)\mprod (B \mprod D)$ is obtained from the associator and the swap morphisms coming from
the symmetric monoidal structure on $\CMet$ for $\mprod$.
As for the map $\proj$, this is defined by
\[
\proj_{\Gamma,\Delta,\Gamma'} = \den\Gamma\mprod ( \lambda^{-1} \circ ({!_1}\mprod\den{\Gamma'}))  \,;
\]

(In the above, the subscripts are omitted for the sake of readability, but can be easily
inferred from the types of the maps.)
\end{proof}

Observe that $\semweak$ factorizes through itself 
\begin{equation*}
\newcommand{\GD}{\den{\Gamma+\Delta}}
\newcommand{\GDT}{\den{(\Gamma+\Delta)+\Theta}}
\begin{tikzcd}[column sep=1cm, row sep=0.7cm]
    \GDT \arrow[r, "\semweak"] 
          \arrow[dr, "\semweak"']
    & \GD \arrow[d, "\semweak"] \\
    & \den{\Gamma}
\end{tikzcd}
\end{equation*}
and commutes with $\splt$ and $\distribute$ in the following sense
\begin{equation*}
\newcommand{\GGDD}{\den{(\Gamma+\Gamma')+(\Delta+\Delta')}}
\newcommand{\GD}{\den{\Gamma+\Delta}}
\newcommand{\GGxDD}{\den{\Gamma+\Gamma'}\mprod\den{\Delta+\Delta'}}
\newcommand{\GxD}{\den{\Gamma}\mprod\den{\Delta}}
\newcommand{\rGD}{\den{r(\Gamma+\Delta)}}
\newcommand{\rG}{\den{r\Gamma}}
\newcommand{\G}{\den{\Gamma}}
\begin{tikzcd}[column sep=1cm, row sep=0.7cm]
    \GGDD \arrow[r, "\splt"] 
              \arrow[d, "\semweak"']
    & \GGxDD \arrow[d, "\semweak\mprod\semweak"] \\
    \GD \arrow[r, "\splt"] & \GxD
\end{tikzcd}
\quad
\begin{tikzcd}[column sep=1cm, row sep=0.7cm]
    \rGD \arrow[r, "\distribute"] 
              \arrow[d, "\semweak"']
    & r\GD \arrow[d, "r\semweak"] \\
    \rG \arrow[r, "\distribute"] & r\G
\end{tikzcd}
\end{equation*}
This follows by an easy induction as per definition of $\semweak$, from the laws that characterizes scaling as a $\extpreals$-graded comonad (see remark before Theorem~\ref{thm:graded:comonad}).

\medskip
Next we provide the omitted details about the interpretation of
the terms of the calculus. Recall that only well-types terms have
an interpretation, which is given as a morphism in $\CMet$ of type
\begin{equation*}
    \den{\hastype{\Gamma}{t}{A}} \colon \den{\Gamma} \to \den{A}
\end{equation*}
The map above is defined by induction on the derivation of the typing judgment $\hastype{\Gamma}{t}{A}$, build according to the typing rules in Figure~\ref{fig:typingrules}. 
The exercise in giving the definition of the map above is ensuring it is non-expansive. The proof of this fact will be implicit as we define the maps by composing non-expansive maps. 
Once non-expansiveness is established, it will be often more convenient to work directly with their underlying set-maps, for which we give an explicit characterization in Figure~\ref{fig:setmap-semantics}.

%%%%%%%%%%%%%
\newcommand{\denSet}[1]{\llparenthesis\, #1 \,\rrparenthesis}
\begin{figure}[tb]
\centering
\begin{align*}
\denSet{\hastype{\Gamma,x:^rA,\Gamma'}{x}{A}}(\gamma,a,\gamma') 
    &= a \\
\denSet{\hastype{\Gamma}{\lambda x.t}{A \lol_r B}}\gamma 
    &= \curry(\denSet{\hastype{\Gamma,x:^rA}{t}{B}})\gamma \\
\denSet{\hastype{\Gamma+r\Gamma'}{t\,u}{B}}\gamma 
    &= 
    \begin{cases}
    \eval(\denSet{\hastype{\Gamma}{t}{A \lol_r B}}\gamma, 
        \denSet{\hastype{\Gamma'}{u}{A}}\gamma) &\text{if $r>0$}\\
    \eval(\denSet{\hastype{\Gamma}{t}{A \lol_0 B}}\gamma, *) &\text{otherwise}
    \end{cases}
    \\
\denSet{\hastype{\Gamma}{()}{\unitT}}\gamma &= * \\
\denSet{\hastype{\Gamma}{\tuple{t,u}}{A\times B}}\gamma 
    &= (\denSet{\hastype{\Gamma}{t}{A}}\gamma,\denSet{\hastype{\Gamma}{u}{B}}\gamma) \\
\denSet{\hastype{\Gamma}{\pi_i t}{A_i}}\gamma 
    &= \pi_i(\denSet{\hastype{\Gamma}{t}{A_1\times A_2}}\gamma) \\
\denSet{\hastype{\Gamma}{\inj_i t}{A_1+A_2}}\gamma 
    &= \inj_i (\denSet{\hastype{\Gamma}{t}{A_i}}\gamma) \\
\denSet{\hastype{\Gamma+r\Gamma'}{\case{t}{x}{u}{y}{v}}{C}} 
    &= 
    \begin{cases}
    \mathsf{k}(\gamma,
        \den{\hastype{\Gamma'}{t}{A+B}}\gamma) &\text{if $r>0$}
    \\
    \mathsf{k}(\gamma,*) &\text{otherwise}
    \end{cases}    
    \\
    &\text{where $\mathsf{k} = [\denSet{\hastype{\Gamma,x:^rA}{u}{C}}, \denSet{\hastype{\Gamma,y:^rB}{v}{C}}]$} \\
\denSet{\hastype{r\Gamma+s\Gamma'+\Gamma''}{(t,u)}{A\pair rs B}}\gamma 
    &= 
    \begin{cases}
    (\denSet{\hastype{\Gamma}{t}{A}}\gamma, \denSet{\hastype{\Gamma'}{u}{B}}\gamma) &\text{if $r,s > 0$}\\
    (*, \denSet{\hastype{\Gamma'}{u}{B}}\gamma) &\text{if $r=0$, $s>0$}\\
    (\denSet{\hastype{\Gamma}{t}{A}}\gamma,*) &\text{if $r>0$, $s=0$}\\
    (*,*) &\text{otherwise}
    \end{cases}
    \\
\denSet{\hastype{\Gamma+\Gamma'}{\letIn{(x,y)=u}{t}}{C}}\gamma 
    &= \denSet{\hastype{\Gamma,x:^rA,y:^sA}{t}{C}}(\gamma, 
        \denSet{\hastype{\Gamma'}{u}{A \pair rs B}}\gamma ) \\
\denSet{\hastype{\Gamma}{\delta(t)}{\TD A}}\gamma 
    &= \delta_{\den{A}}(\denSet{\hastype{\Gamma}{t}{A}}\gamma) \\
\denSet{\hastype{p\Gamma+(1-p)\Gamma'}{t\oplus_p u}{\TD A}}\gamma 
    &= \denSet{\hastype{\Gamma}{t}{\TD A}}\gamma 
    \oplus_p \denSet{\hastype{\Gamma'}{u}{\TD A}}\gamma \\
\denSet{\hastype{\Gamma+r\Gamma'}{\letIn{x=u}{t}}{E}}\gamma 
    &= 
    \begin{cases}
    \overline{\denSet{\hastype{\Gamma,x:^rA}{t}{E}}}(\gamma, 
        \denSet{\hastype{\Gamma'}{u}{\TD A}}\gamma) &\text{if $r>0$}\\
    \overline{\denSet{\hastype{\Gamma,x:^0A}{t}{E}}}(\gamma, *) &\text{otherwise}
    \end{cases}\\
\denSet{\hastype{\Gamma}{\Zero}{\nat}}\gamma &= 0 \\
\denSet{\hastype{\Gamma}{\Succ{t}}{\nat}}\gamma 
    &= \denSet{\hastype{\Gamma}{t}{\nat}}\gamma +1 \\
\denSet{\Rec{z}{(x,y).s}{n}}\gamma 
    &= 
    \iterate[](\denSet z\gamma,
               (\curry \circ \curry)(\denSet s\gamma),
               \denSet{n}\gamma ) \\
    &\text{where } \iterate[](a,f,0) = a \\
    &\phantom{\text{where }} \iterate[](a,f,n+1) = f(\iterate[](a,f,n))(n) \\
\denSet{\hastype{\Gamma}{\fix{x}{t}}{A}}\gamma &= \semfp(\denSet{\hastype{(1-p)\Gamma,x:^pA}{t}{A}})(\gamma)
\end{align*}

\caption{Set-theoretic denotation of terms. We denote by $\denSet{\hastype{\Gamma}{t}{A}}$ the underlying set map of the denotation
$\den{\hastype{\Gamma}{t}{A}}$ in $\CMet$. Note that the sensitivities of terms play a role only when they are $0$ as scaling by zero has the effect of collapsing a metric space to the singleton $\unit$, with $*$ being its unique element.}
\label{fig:setmap-semantics}
\end{figure}
%%%%%%%%%%%%%

Note that, since the definition of the map above depends on the derivation of the judgment $\hastype{\Gamma}{t}{A}$, different derivations could, in principle, yield different maps. However, as we will see, this ambiguity is avoided by assuming that terms are annotated with sufficient typing information. We provide more details on this later.

Now we turn to the actual definition. This is given by cases, for each typing rule. For readability, we use $\den{t}$ as a shorthand for the more precise $\den{\hastype{\Gamma}{t}{A}}$, and assume that the types of terms and their subterms match exactly those shown in the rules of Figure~\ref{fig:typingrules}. 
(We will often use this shorthand elsewhere as well, whenever the types are clear from context.)

\begin{description}[font=\sc,wide]
\item[Case (var):]
\begin{equation*}
    \den{x} \colon 
    \den{\Gamma, x:^rA, \Gamma'} 
        \xrightarrow{\pi_{r\den{A}}} r\scaling\den{A}
        \xrightarrow{\kappa} 1\scaling\den{A}
        \xrightarrow{\epsilon} \den{A} \,.
\end{equation*}
where $\pi_{r\den{A}}\colon \den{\Gamma} \mprod r\den{A} \mprod \den{\Gamma'} \to r\den{A}$ is the projection on the argument corresponding to the variable $x$, and $\kappa$, $\epsilon$ are the maps from Theorem~\ref{thm:graded:comonad}. Note that $r\geq 1$.

\item[Case (abs):] 
\begin{equation*}
    \den{\lambda x.u} \colon 
    \den{\Gamma} 
        \xrightarrow{\curry(\den{u})} 
    \den{B \lol_r C} \,.
\end{equation*}
where $\curry$ denotes currying.

\item[Case (app):] 
\begin{equation*}
    \den{t\,u} \colon
    \den{\Gamma + r\scaling \Gamma'} 
        \xrightarrow{(\den\Gamma\mprod \distribute) \circ \splt} 
    \den\Gamma\mprod r \den{\Gamma'} 
        \xrightarrow{\den t\mprod r\den u}
    \den{A \lol_r B} \mprod r\den A
        \xrightarrow{\mathsf{ev}} 
    \den{B} \,.
\end{equation*}
where $\eval$ denotes function application.

\item[Case (unit):]
\begin{equation*}
    \den{()} \colon \den{\Gamma} \xrightarrow{!} \unit
\end{equation*}
where $!$ is the final map to $\unit$.

\item[Case (pair):] 
\begin{equation*}
    \den{\tuple{t,u}} \colon \den{\Gamma}
        \xrightarrow{\tuple{\den{t},\den{u}}}
    \den{A \times B}
\end{equation*}

\item[Case ($\pi_i$):]
\begin{equation*}
    \den{\pi_it} \colon \den{\Gamma}
        \xrightarrow{\den{t}}
    \den{A_1} \times \den{A_2}
        \xrightarrow{\pi_i}
    \den{A_i}
\end{equation*}

\item[Case ($\inj_i$):]
\begin{equation*}
    \den{\inj_it} \colon \den{\Gamma}
        \xrightarrow{\den{t}}
    \den{A_i}
        \xrightarrow{\inj_i}
    \den{A_1+A_2}
\end{equation*}

\item[Case (case):]
\begin{multline*}
    \den{\case{t}{x}{u}{y}{v} } \colon 
    \den{\Gamma+r\Gamma'}
        \xrightarrow{(\den{\Gamma}\mprod\distribute)\circ \splt}
    \den{\Gamma} \mprod r\den{\Gamma'} \\
    \dots    \xrightarrow{\den{\Gamma}\mprod r\den t} 
    \den{\Gamma} \mprod r\den{A+B} 
        \xrightarrow{\iso}
    (\den{\Gamma}\mprod r\den{A}) + (\den{\Gamma}\mprod r\den{B})
        \xrightarrow{[\den{u}, \den{v}]}
    \den{C}
\end{multline*}
where we observe that the isomorphism above follows as both 
$(\den{\Gamma} \mprod -)$ and $(r \cdot -)$ (for $r\geq 1$) commutes with coproducts.

\item[Case ($\mprod$):] 
\begin{equation*}
    \den{(t,u)}\colon 
    \den{r\Gamma+s\Gamma'+\Gamma''}
        \xrightarrow{\semweak}
    \den{r\Gamma+s\Gamma'}
        \xrightarrow{(\distribute\mprod\distribute)\circ\splt}
    r \den{\Gamma} \mprod s\den{\Gamma'}
        \xrightarrow{r\den{t}\mprod s\den{u}}
    \den{A \pair{r}{s} B}
\end{equation*}

\item[Case (let-$\mprod$):]
\begin{equation*}
    \den{\letIn{(x,y)=u}{t}}\colon 
    \den{\Gamma+\Gamma'}
        \xrightarrow{\splt}
    \den{\Gamma} \mprod \den{\Gamma'}
        \xrightarrow{\den{\Gamma}\mprod\den{u}}
    \den{\Gamma} \mprod \den{A \pair{r}{s} B}
        \xrightarrow{\den{t}}
    \den{C}
\end{equation*}

\item[Case ($\delta$):]
\begin{equation*}
    \den{\delta(t)} \colon 
    \den{\Gamma}
        \xrightarrow{\den{t}}
    \den{A}
        \xrightarrow{\delta}
    \den{\TD A}
\end{equation*}
where $\delta$ is the unit of the monad $\D$.

\item[Case ($\oplus_p$):]
\begin{equation*}
    \den{t \oplus_p u} \colon 
    \den{p\Gamma+\overline{p}\Gamma'}
        \xrightarrow{(\distribute\mprod\distribute)\circ\splt}
    p\den{\Gamma}\mprod \overline{p}\den{\Gamma'}
        \xrightarrow{p\den{t} \mprod \overline{p}\den{u}}
    p\D\den{A}\mprod \overline{p}\D\den{A}
        \xrightarrow{\oplus_p}
    \den{\TD A}
\end{equation*}
where $\overline{p} = 1-p$ and $\oplus_p$ is the IB algebra operation
on $\D \den{A}$.

\item[Case (let):]
\begin{equation*}
    \den{\letIn{x=u}{t}} \colon \den{\Gamma+r\Gamma'}
        \xrightarrow{(\den{\Gamma}\mprod\distribute)\circ\splt}
    \den{\Gamma}\mprod r\den{\Gamma'}
        \xrightarrow{\den{\Gamma}\mprod r\den{u}}
    \den{\Gamma}\mprod r\D\den{A}
        \xrightarrow{\overline{\den{t}}}
    \den{E}
\end{equation*}
where from $\overline{\den{t}}$ is obtained from $\den{t}$ by Proposition~\ref{prop:D:free:IB} as $\den{E}$ is an IB algebra.

\item[Case (zero):]
\begin{equation*}
    \den{\Zero} \colon \den{\Gamma} 
        \xrightarrow{\semweak}
    \unit
        \xrightarrow{0}
    \den{\nat}
\end{equation*}

\item[Case (succ):]
\begin{equation*}
    \den{\Succ{t}} \colon \den{\Gamma}
        \xrightarrow{\den{t}}
    \nat
        \xrightarrow{-+1}
    \den{\nat}
\end{equation*}   

\item[Case (rec):]
To define the interpretation, we will first define the auxiliary map
$\iterate[]$ and argue why it is non-expansive.
Let $F = \den{A}\mexp\nat\mexp\den{A}$.
Define the morphisms $\iterate \colon \den{A}\mprod \infty\scaling F \to \den{A}$ by induction on $m\in \nat$ as follows, 
\begin{align*}
    \iterate[0] &\colon \den{A}\mprod \infty\scaling F \xrightarrow{\pi_1} \den{A} \\
    \iterate[m+1] &\colon \den{A}\mprod \infty\scaling F 
        \xrightarrow{(\den{A}\mprod\cp)} 
    \den{A} \mprod \infty\scaling F \mprod \infty\scaling F \\
    &\phantom{\colon{}} \dots
        \xrightarrow{\iterate\mprod\kappa}
    \den{A} \mprod F 
        \xrightarrow{\eval}
    (\nat \mexp \den{A})
        \xrightarrow{m\mprod (\nat \mexp \den{A})}
    \nat \mprod (\nat \mexp \den{A})
        \xrightarrow{\eval}
    \den{A}
\end{align*}
where $m\colon \unit \to \nat$ is the constant $m$-map and 
$\cp \colon \infty\scaling F \to \infty\scaling F \mprod \infty\scaling F$ is
the diagonal function, which is non-expansive as $\infty\scaling F$ is discrete.
Observe that for discrete spaces $X$, the isomorphism $Y \mprod X \iso \coprod_{x\in X} Y$ holds. Thus, as $\nat$ is discrete, we can combine the morphisms above, into a single one defined as follows
\begin{equation*}
    \iterate[] \colon 
    \den{A} \mprod \infty\scaling F \mprod \nat 
        \xrightarrow{\iso}
    \coprod_{m \in \nat} (\den{A} \mprod \infty\scaling F)
        \xrightarrow{\coprod_{n}\iterate[m]} \den{A} 
\end{equation*}

Now we are ready to define the interpretation of $\Rec{z}{(x,y).s}{n}$
\begin{multline*}
    \den{\Rec{z}{(x,y).s}{n}} \colon 
    \den{\Gamma+\infty\Gamma'+\Gamma''}
        \xrightarrow{
        (\den{z} \mprod \infty s' \mprod \den{n}) \circ 
        (\den\Gamma \mprod \distribute \mprod \den{\Gamma''}) 
        \circ \splt
        }
    \den{A} \mprod \infty\scaling F \mprod \nat
        \xrightarrow{\iterate[]}
    \den{A}
\end{multline*}
where $s' = (\curry \circ \curry)(\den{s})$ is the doubly-curried version of $\den{s}$.
    
\item[Case (fix)] 
\begin{equation*}
    \den{\fix{x}{u}} \colon \Gamma
        \xrightarrow{\semfp\big((\distribute \mprod p\den{A})\circ \den{u}\big)}
    \den{A}
\end{equation*}
where the map $\semfp$ is from Proposition~\ref{prop:fp}. Note that
as $\semfp$ is applied properly as $p<1$.
\end{description}

\subsection*{Proof of Theorem~\ref{thm:SemIndependence} (Independence of the choice of derivation)}
% \begin{proof}[Proof of Theorem~\ref{thm:SemIndependence} (Independence of the choice of derivation)]
The non-expansiveness of the maps is clear from the above definitions. 
However, as discussed in Remark~\ref{rem:independence}, to formally prove that the semantics is independent of the choice of derivation, 
we must be able to infer the types of subterms in a judgment $\hastype{\Gamma}{t}{A}$ solely from the information in $\Gamma$, $t$, $A$. 
This is not possible with the informal syntax presented in Section~\ref{sec:calculus}. For instance, as illustrated by the two derivations below, 
the subterm $u$ in $\hastype{\Gamma}{\pi_1\tuple{t,u}}{A}$ can have any type:
\begin{mathpar}
\infrule[$\pi_1$]{
    \infrule[pair]{
        \hastype{\Gamma}{t}{A} &
        \hastype{\Gamma}{u}{B} }{
    \hastype{\Gamma}{\tuple{t,u}}{A \times B}}
}{\hastype{\Gamma}{\pi_1\tuple{t,u}}{A} }
\and 
\infrule[$\pi_1$]{
    \infrule[pair]{
        \hastype{\Gamma}{t}{A} &
        \hastype{\Gamma}{u}{C} }{
    \hastype{\Gamma}{\tuple{t,u}}{A \times C}}
}{\hastype{\Gamma}{\pi_1\tuple{t,u}}{A}}
\end{mathpar}  
This is a well-known issue in semantics, and the standard solution is to annotate terms with additional information to prevent ambiguities like the one above. The \emph{official} syntax of the calculus, which avoids such problems, is presented below (the typing rules in Figure~\ref{fig:typingrules} are implicitly updated accordingly) 
---the syntax for logical predicates presented in Section~\ref{sec:logic} does not require modifications.
\newcommand{\app}[4]{\mathsf{app}_{#1,#2}(#3,#4)}
\begin{align*}
 t, u, v &{} ::= {}
  x \mid \lambda x. t \mid \app rAtu \mid () \\
 &\mid \tuple{t,u} \mid \pi^{A,B}_1 t \mid \pi^{A,B}_2 t \\
 &\mid \inj_1 t \mid \inj_2 t \mid \case{t}{(x:A)}{u}{(y:B)}{v}
 \\
 &\mid  (t,u) \mid \letIn{(x{:^r}A,y{:^s}B) = u }{t} %\\&
 \mid \delta t \mid  t \oplus_p u \mid \letIn{x:A = u}{t} \\
 &\mid \Zero \mid \Succ{t} \mid \Rec{u}{(x,y).t}{v} %\\&
 \mid \fix{x}{t}
\end{align*}

Note, however, that although the syntax has been enriched with type information, this is not sufficient to ensure unique derivations. The issue arises because a context 
$\Gamma$ can be split as a sum $\Gamma_1+\Gamma_2$ in multiple ways. The two derivations below illustrate this ambiguity clearly:
\begin{mathpar}
\small
\infrule[app]{
    \infrule[abs]{
        \infrule[var]{ }{\hastype{y:^3A,x:^1A}{x}{A}}
    }{\hastype{y:^3A}{\lambda x.x }{A \times B}}
    &
    \infrule[var]{ }{\hastype{y:^1A}{x}{A}}
}{\hastype{y:^4A}{ \app{1}{A}{(\lambda x.x)}{y} }{A} }
\and 
\infrule[app]{
    \infrule[abs]{
        \infrule[var]{ }{\hastype{y:^1A,x:^1A}{x}{A}}
    }{\hastype{y:^1A}{\lambda x.x }{A \times B}}
    &
    \infrule[var]{ }{\hastype{y:^3A}{x}{A}}
}{\hastype{y:^4A}{ \app{1}{A}{(\lambda x.x)}{y} }{A} }
\end{mathpar} 
Next we show that this ambiguity does not affect
the denotation. 

Formally, we show that, whenever a judgement $\hastype{\Gamma}{t}{A}$ has two different derivations, say $\nabla$ and $\nabla'$,
then the two denotations obtained from them are the same. We visualize this as
\begin{equation*}
    \denD{\frac{\nabla}{\hastype{\Gamma}{t}{A}}} = 
    \denD{\frac{\nabla'}{\hastype{\Gamma}{t}{A}}}
\end{equation*}
to show explicitly the use of different derivations in computing the denotations.

The proof is by induction on the term $t$. We show only 
the cases that are representative of the techniques that can be used to solve in a similar fashion all the others.
For the proof we will need to use Lemma~\ref{lem:sem:weak}(2). That lemma therefore needs to be proved first, 
as a statement on the interpretation of derivations, before we can prove the present theorem. This is not a problem,
because the proof of Lemma~\ref{lem:sem:weak} proves that the interpretation of the weakening of a derivation 
constructed in Lemma~\ref{lm:weakening} is the semantic weakening of the interpretation of the original derivation. 
\begin{description}[font=\sc,wide]
\item[Case ($x$):] This case is clear as the derivation is unique.

\item[Case ($\lambda x.t$):] Both derivations end with an application of (\textsc{abs}) and the situation is as follows:
\begin{mathpar}
\infrule[abs]{
    \infrule{\nabla}{
    \hastype{\Gamma, x:^rA}{t}{B}} }{
\hastype{\Gamma}{\lambda x. t}{A \lol_r B} }
\and
\infrule[abs]{
    \infrule{\nabla'}{
    \hastype{\Gamma, x:^rA}{t}{B}} }{
\hastype{\Gamma}{\lambda x. t}{A \lol_r B} }
\end{mathpar}
In this case, from the information in the conclusion the choice of the premise is unique. Thus, independence follows by inductive hypothesis on $u$.

\item[Case ($\app{r}{A}{t}{u}$):] %\rasmus{New version} 
Both derivations end with an application of (\textsc{app}) and the situation is as follows:
\begin{mathpar}
\infrule[app]{
    \infrule{\nabla_1}{
    \hastype{\Gamma_1}{t}{A \lol_r B}} 
    &
    \infrule{\nabla_2}{
    \hastype{\Gamma_2}{u}{A}} }{
\hastype{\Gamma_1 + r\Gamma_2}{\app{r}{A}{t}{u}}{B} }
\and
\infrule[app]{
    \infrule{\nabla'_1}{
    \hastype{\Gamma'_1}{t}{A \lol_r B}} 
    &
    \infrule{\nabla'_2}{
    \hastype{\Gamma'_2}{u}{A}} }{
\hastype{\Gamma'_1 + r\Gamma'_2}{\app{r}{A}{t}{u}}{B} }
\end{mathpar}
where $\Gamma= \Gamma_1 + r\Gamma_2 = \Gamma'_1 + r\Gamma'_2$. 
Let $\Gamma^\infty$ be the version of $\Gamma$ where all sensitivities are set to $\infty$. 
Then from the sub-derivations of $\hastype{\Gamma_1}{t}{A \lol_r B}$ and $\hastype{\Gamma'_1}{t}{A \lol_r B}$ occurring above, by Lemma~\ref{lem:sem:weak}, we obtain new derivations such that the following holds:
\begin{mathpar}
\denD{\frac{\nabla_1}{\hastype{\Gamma_1}{t}{A \lol_r B}}} 
\circ \semweak_{\Gamma_1,\Gamma^\infty} = 
\denD{\frac{\tilde\nabla_1}{\hastype{\Gamma^\infty}{t}{A \lol_r B}}}
\and
\denD{\frac{\nabla'_1}{\hastype{\Gamma'_1}{t}{A \lol_r B}}} 
\circ \semweak_{\Gamma'_1,\Gamma^\infty} = 
\denD{\frac{\tilde\nabla'_1}{\hastype{\Gamma^\infty}{t}{A \lol_r B}}}
\end{mathpar}
As the derivations on the right-hand side of the equation above have the same conclusion, by inductive hypothesis on $t$, we get 
$\den{\tilde\nabla_1} = \den{\tilde\nabla'_1}$.
Similarly we get 
\begin{mathpar}
\denD{\frac{\nabla_2}{\hastype{\Gamma_2}{u}{A}}} 
\circ \semweak_{\Gamma_2,\Gamma^\infty} = 
\denD{\frac{\tilde\nabla_2}{\hastype{\Gamma^\infty}{u}{A}}}
\and
\denD{\frac{\nabla'_2}{\hastype{\Gamma'_2}{u}{A}}} 
\circ \semweak_{\Gamma'_2,\Gamma^\infty} = 
\denD{\frac{\tilde\nabla'_2}{\hastype{\Gamma^\infty}{u}{A}}}
\end{mathpar}
such that $\den{\tilde\nabla_2} = \den{\tilde\nabla'_2}$. 
We must show that 
\begin{equation} \label{eq:lem:indep:deriv:app:goal}
\mathsf{ev} \circ (\den {\nabla_1}\mprod r\den{\nabla_2}) \circ (\den{\Gamma_1}\mprod \distribute) \circ \splt
= \mathsf{ev} \circ (\den {\nabla_1'}\mprod r\den{\nabla_2'}) \circ (\den{\Gamma_1'}\mprod \distribute) \circ \splt
\end{equation}
Since $\semweak$ commutes with $\splt$ and $\distribute$, it is easy to see that 
\begin{align*}
 \mathsf{ev} \circ (\den {\nabla_1}\mprod r\den{\nabla_2}) \circ (\den{\Gamma_1}\mprod \distribute) \circ \splt \circ \semweak_{\Gamma,\Gamma^\infty}
 & = \mathsf{ev} \circ (\den {\tilde\nabla_1}\mprod r\den{\tilde\nabla_2}) \circ (\den{\Gamma^\infty}\mprod \distribute) \circ \splt 
\end{align*} 
where the $\splt$ on the right hand side has type $\den{\Gamma^\infty} \to \den{\Gamma^\infty} \mprod \den{r \Gamma^\infty}$. 
Similarly
\begin{align*}
\mathsf{ev} \circ (\den {\nabla_1'}\mprod r\den{\nabla_2'}) \circ (\den{\Gamma_1'}\mprod \distribute) \circ \splt \circ \semweak_{\Gamma,\Gamma^\infty}
 & = \mathsf{ev} \circ (\den {\tilde\nabla_1'}\mprod r\den{\tilde\nabla_2'}) \circ (\den{\Gamma^\infty}\mprod \distribute) \circ \splt 
\end{align*}
So, since the underlying map of $\semweak_{\Gamma,\Gamma^\infty}$ is surjective we conclude (\ref{eq:lem:indep:deriv:app:goal}).
\end{description}
The induction follows similarly also for the terms of the extended calculus
presented in Section~\ref{sec:logic}.
%\end{proof}

\subsection*{Proof of Theorem~\ref{thm:Soundness:Term:Interpr} (Soundness)}
% \begin{proof}[Proof of Theorem~\ref{thm:Soundness:Term:Interpr} (Soundness)]
As the underlying set-maps of the interpretation of terms is for most parts the usual set-theoretic one, we will avoid showing soundness of the standard judgmental equalities, such as $\beta$ and $\eta$ rules function types, and those for product and sum types, and the one for the unit.
\begin{description}[font=\sc,wide]
\item[Case (let-$\mprod$):] We prove soundness of the judgmental equalities corresponding to the $\beta$ and $\eta$ rules for tensor type. 
We will assume $r,s>0$, and avoid discussing the other cases as they can be
obtained via simple adaptations of the one shown below:

The first one is: $\letIn{(x,y) = (u,v)}t \jeq t[u/x,v/y]$.
\begin{align*}
    \denSet{\hastype{\Delta&+r\Gamma+s\Gamma}{t[u/x,v/y]}{C}}\gamma = {} \\
    &= \denSet{\hastype{\Delta, x:^rA, y:^sB}{t}{C}}
        (\gamma,\den{\hastype{\Gamma}{u}{A}}\gamma,\den{\hastype{\Gamma'}{v}{B}}\gamma)
        \tag{Lemma~\ref{lem:sem:subst}} \\
    &= \den{\hastype{\Delta, x:^rA, y:^sB}{t}{C}}
        (\gamma,\den{\hastype{r\Gamma+s\Gamma'}{(u,v)}{A \pair{r}{s} B}}
            \gamma) \\
    &= \denSet{\hastype{\Delta+r\Gamma+s\Gamma}{\letIn{(x,y)=(u,v)}{t}}{C}}\gamma
\end{align*}
    
The second one is: $u[t/z] \jeq \letIn{(x,y) = t}{u[(x,y)/z]}$.
\begin{align*}
    &\denSet{\hastype{\Delta+\Gamma}{{}\letIn{(x,y)=t}{u[(x,y)/z]}}{C}}\gamma \\
    &= \denSet{\hastype{\Delta,x:^rA,y:^sB}{u[(x,y)/z]}{C}}(\gamma, \denSet{\hastype{\Gamma}{t}{A \pair{r}{s} B}}\gamma) \\
    &= \denSet{\hastype{\Delta,z:^1 A \pair{r}{s} B}{u}{C}}
        (\gamma, \denSet{\hastype{x:^rA, y:^sB}{(x,y)}{A \pair{r}{s} B}}(\denSet{\hastype{\Gamma}{t}{A \pair{r}{s} B}}\gamma)) 
    \tag{Lemma~\ref{lem:sem:subst}} \\
    &= \denSet{\hastype{\Delta,z:^1 A \pair{r}{s} B}{u}{C}}(\gamma,\denSet{\hastype{\Gamma}{t}{A \pair{r}{s} B}}\gamma) 
    \tag{*} \label{eq:Jeq:Soundness:Let:Tensor} \\
    &= \denSet{\hastype{\Delta+\Gamma}{u[t/z]}{C}}\gamma 
    \tag{Lemma~\ref{lem:sem:subst}}
\end{align*}
where \eqref{eq:Jeq:Soundness:Let:Tensor} is justified by the following equality, where $(a,b) = \denSet{\hastype{\Gamma}{t}{A \pair{r}{s} B}}(\gamma)$
\begin{align*}
    \denSet{\hastype{x:^rA, y:^sB}{(x,y)}{A \pair{r}{s} B}}(a,b)
    = (\denSet{\hastype{x:^1A}{x}{A}}a, \denSet{\hastype{y:^1B}{y}{B}}b)
    = (a,b)
\end{align*}

\item[Case (let):] We prove soundness of the judgmental equalities 
regarding the let-binding for $\TD A$.

Unit law: $\letIn{x = \delta(t)}{u} \jeq u[t/x]$:
\begin{align*}
    \denSet{\hastype{\Gamma+r\Gamma'}{&\letIn{x=\delta(t)}{u}}{B}}\gamma = {} \\
    &=\overline{\denSet{\hastype{\Gamma,x:^rA}{u}{B}}} 
        (\gamma,\denSet{\hastype{\Gamma'}{\delta(t)}{\TD A}}\gamma) \\
    &=\overline{\denSet{\hastype{\Gamma,x:^rA}{u}{B}}}
        (\gamma,\delta_{\den{A}}(\denSet{\hastype{\Gamma'}{t}{A}}\gamma)) \\
    &=\denSet{\hastype{\Gamma,x:^rA}{u}{B}}
        (\gamma,\denSet{\hastype{\Gamma'}{t}{A}}\gamma)
    \tag{Proposition~\ref{prop:D:free:IB}} \\
    &= \denSet{\hastype{\Gamma+r\Gamma'}{u[t/x]}{B}}\gamma
    \tag{Lemma~\ref{lem:sem:subst}}
\end{align*}
    
Associativity law: $(\letIn{x = (\letIn{y=v}{t})}{u}) \jeq (\letIn{y=v}{(\letIn{x=t}{u})})$:
\begin{align*}
    \denSet{\hastype{\Gamma+r(\Delta+s\Delta')}{{}&\letIn{x = (\letIn{y=v}{t})}{u}}{A}}\gamma \\
    &=\overline{\denSet{\hastype{\Gamma,x:^rA}{u}{C}}}
    (\gamma,
    \denSet{\hastype{\Delta+s\Delta'}{\letIn{y=v}{t}}{\TD A}}\gamma) \\
    &=\overline{\denSet{\hastype{\Gamma,x:^rA}{u}{C}}}
    (\gamma,
    \overline{\denSet{\hastype{\Delta,y:^sB}{t}{\TD A}}}
    (\gamma, \denSet{\hastype{\Delta'}{v}{\TD B}}\gamma)
    ) \\
    &=\overline{
        \overline{\denSet{\hastype{\Gamma,x:^rA}{u}{C}}}
            (\gamma,\denSet{\hastype{\Delta,y:^sB}{t}{\TD A}}}
                (\gamma,\denSet{\hastype{\Delta'}{v}{\TD B}}\gamma)
            ) 
    \tag{*} \label{eq:Jeq:Soundness:Let:TD:Assoc}\\
    &=\overline{\denSet{\hastype{(\Gamma+r\Delta),y:^{rs}B}{\letIn{x=t}{u}}{C}}}
    (\gamma,\denSet{\hastype{\Delta'}{v}{\TD B}}\gamma) \\
    &=\denSet{\hastype{\Gamma+r(\Delta+s\Delta')}{\letIn{y=v}{(\letIn{x=t}{u})}}{C}}\gamma
\end{align*}
where \eqref{eq:Jeq:Soundness:Let:TD:Assoc} is justified by the fact that 
$\overline{\den{\hastype{\Gamma,x:^rA}{u}{C}}} 
\circ (\den{\Gamma}\mprod r\overline{\den{\hastype{\Delta,y:^sB}{t}{\TD A}}}\circ m)$
 is the unique homomorphic extension of 
$\overline{\den{\hastype{\Gamma,x:^rA}{u}{C}}}
\circ(\den{\Gamma}\mprod r\den{\hastype{\Delta,y:^sB}{t}{\TD A}} \circ m)$
 in the sense of Proposition~\ref{prop:D:free:IB}
as the following commuting diagram shows and the fact that the
natural transformation $m$ from Theorem~\ref{thm:graded:comonad} has the identity as underlying set-map:

\begin{equation*}
\newcommand{\B}{\den{B}}
\newcommand{\Gx}{\den{\Gamma}\mprod}
\newcommand{\GpDxpq}{\Gx r\den{\Delta} \mprod rs}
\newcommand{\GxpDxq}[1]{\Gx r(\den{\Delta} \mprod s #1)}
\begin{tikzcd}[column sep=4cm, row sep=0.7cm]
    \GpDxpq\B \arrow[r, "\GpDxpq\delta_{\B}"] 
              \arrow[d, "\Gx m"']
              \arrow[dr, phantom, "\text{(naturality)}"]
    & \GpDxpq\TD\B \arrow[d, "\Gx m"] \\
    \GxpDxq\B \arrow[r, "\GxpDxq{\delta_{\B}}"]
              \arrow[rd, bend right=10, "\Gx r\den{t}"']
              \arrow[dr, phantom, "\text{(Prop.~\ref{prop:D:free:IB})}"]
    & \GxpDxq{\TD\B} \arrow[d, "\Gx r\overline{\den{t}}"] \\
    & \Gx r\TD\den{A} \arrow[d, "\overline{\den{u}}"] \\
    & \den{C}
\end{tikzcd}
\end{equation*}

Homomorphism: 
$\letIn{x = v\oplus_p t}u \jeq (\letIn{x = v}u)\oplus_p(\letIn{x=t}u)$:
\begin{align*}
    \denSet{\hastype{\Gamma+r&(p\Delta+(1-p)\Delta')}{\letIn{x = v\oplus_p t}u}{\TD A}}\gamma \\
    &=\overline{\mathsf{u}}
        (\gamma,\denSet{\hastype{(p\Delta+(1-p)\Delta')}{v \oplus_p t}{\TD B}}\gamma) \\
    &=\overline{\mathsf{u}}(\gamma,
        \denSet{\hastype{\Delta}{v}{\TD B}}(\gamma)
        \oplus_p
        \denSet{\hastype{\Delta'}{t}{\TD B}}(\gamma)) \\
    &= \overline{\mathsf{u}}
        (\gamma, \denSet{\hastype{\Delta}{v}{\TD B}}\gamma)
       \oplus_p
       \overline{\mathsf{u}}
        (\gamma, \denSet{\hastype{\Delta'}{t}{\TD B}}\gamma) 
    \tag{$\overline{\mathsf{u}}$ homo} \\
    &= \denSet{\hastype{\Gamma+r\Delta}{\letIn{x=v}{u}}{\TD A}}\gamma
       \oplus_p
       \denSet{\hastype{\Gamma+r\Delta'}{\letIn{x=t}{u}}{\TD A}}\gamma \\
    &=\denSet{\hastype{\Gamma+r(p\Delta+(1-p)\Delta')}{(\letIn{x = v}u)\oplus_p(\letIn{x=t}u)}{\TD A}}\gamma
\end{align*}
where $\mathsf{u} = \denSet{\hastype{\Gamma,x:^r\TD B}{u}{\TD A}}$ and, by Proposition~\ref{prop:D:free:IB}, $\overline{\mathsf{u}}$ is an homomorphism of IB algebras on its second component.

    \item[Case rec:] The judgmental equalities regarding recursion on natural numbers 
    \begin{align*}
        \Rec{z}{(x,y).s}{\Zero} & \jeq z 
        &  
        \Rec z{(x,y).s}{\Succ n} & \jeq s[\Rec z{(x,y).s}n/x, n/y] 
    \end{align*}
    follow by definition of the $\iterate[]$ map
    and Lemma~\ref{lem:sem:subst}.
    The only thing to observe is that both sides of the equalities
    are well typed in the same context thanks to weakening (Lemma~\ref{lem:sem:subst}).

    \item[Case fix:] The judgmental equality regarding fixed points
    \begin{align*}
        \fix xt & \peq t[\fix xt/x]
    \end{align*}
    follows directly by definition of the fixed point operator $\semfp$
    and Lemma~\ref{lem:sem:subst}.
    \qedhere
\end{description}
%\end{proof}

\subsection*{Proof of Lemma~\ref{lem:sem:weak} (1)}
%\begin{proof}[Proof of Lemma~\ref{lem:sem:weak} (1)]
Two arrows in $\CMet$ are equal if so are the underlying set-maps.
Thus, the statement to prove simplifies to 
\begin{equation*}
    \denSet{\hastype{\Gamma,\Delta,\Gamma}tA}(\gamma,\delta,\gamma') = \denSet{\hastype{\Gamma,\Gamma}tA}(\gamma,\gamma') \,.
\end{equation*}
where $\denSet{\hastype{\Gamma}{t}{A}}$ denotes the underlying set-map of $\den{\hastype{\Gamma}{t}{A}}$, as per Figure~\ref{fig:setmap-semantics}.
The proof of the above is by induction on the derivation of $\hastype{\Gamma,\Gamma'}tA$. We will closely follow the steps of the proof of Lemma~\ref{lm:weakening}(1) as the derivations used here are the same. We select only a few interesting cases:
\begin{description}[font=\sc,wide]
\item[Case (var):] Assume $\hastype{\Gamma,\Gamma'}{x}{A}$ was derived with an application of (\textsc{var}). This can happen in two cases. We consider one of them as the other is similar. Let $\Gamma = \Gamma_1,x:^r A,\Gamma_2$. Then
\begin{align*}
\denSet{\hastype{\Gamma_1,x:^r A,\Gamma_2,\Gamma'}xA}(\gamma_1,a,\gamma_2,\gamma') 
= a
= \denSet{\hastype{\Gamma_1,x:^r A,\Gamma_2,\Delta,\Gamma'}xA}(\gamma_1,a,\gamma_2,\delta,\gamma')\,.
\end{align*}

\item[Case (abs):] Assume $\hastype{\Gamma,\Gamma'}{\lambda x. t}{A \lol_r B}$ was derived with an application of (\textsc{abs}). 
Then
\begin{align*}
\denSet{\hastype{\Gamma,\Gamma'}{\lambda x. t}{A \lol_r B}}(\gamma,\gamma')
&= \curry(\denSet{\hastype{\Gamma,\Gamma',x:^rA}{t}{B}})(\gamma,\gamma') \\
&= \curry(\denSet{\hastype{\Gamma,\Delta,\Gamma',x:^rA}{t}{B}})(\gamma,\delta,\gamma')
    \tag{*} \\
&= \denSet{\hastype{\Gamma,\Delta,\Gamma'}{\lambda x. t}{A \lol_r B}}(\gamma,\delta,\gamma')
\end{align*}
where (*) is justified by universality of the adjunction $(-\mprod X \dashv X \mexp -)$ and the following equality 
\begin{equation*}
\eval \circ ((\curry(\den{\hastype{\Gamma,\Gamma',x:^rA}{t}{B}}) \circ \proj) \mprod r\den A)
= \den{\hastype{\Gamma,\Delta,\Gamma',x:^rA}{t}{B}}
\end{equation*}
which easily follows by inductive hypothesis.

\item[Case (app):] Assume $\hastype{\Gamma,\Gamma'}{tu}{A}$ was derived with an application of (\textsc{app}). Then, for the case when $r>0$, by inductive hypothesis we have that
\begin{align*}
\denSet{\hastype{\Gamma,\Gamma'}{tu}{A}}(\gamma,\gamma')
&= \eval( \denSet{\hastype{\Gamma_1,\Gamma'_1}{t}{B \lol_r A}}(\gamma,\gamma') , 
         \denSet{\hastype{\Gamma_2,\Gamma'_2}{u}{B}}(\gamma,\gamma') ) \\
&= \eval( \denSet{\hastype{\Gamma_1,\Delta_1,\Gamma'_1}{t}{B \lol_r A}}(\gamma,\delta,\gamma') , 
         \denSet{\hastype{\Gamma_2,\Delta_2,\Gamma'_2}{u}{B}}(\gamma,\delta,\gamma') ) \\
&= \denSet{\hastype{\Gamma,\Delta,\Gamma'}{tu}{A}}(\gamma,\delta,\gamma') \,.
\end{align*}
where $\Gamma = \Gamma_1+r\Gamma_2$, $\Gamma' = \Gamma'_1+r\Gamma'_2$, and $\Delta = \Delta_1 +r\Delta_2$. The case when $r=0$ is similar.

\item[Case ($\mprod$):] Assume $\hastype{\Gamma,\Gamma'}{(t,u)}{A \pair{r}{s} B}$ was derived with an application of ($\mprod$). 
Then, for the case when $r,s>0$, by inductive hypothesis we get 
\begin{align*}
\denSet{\hastype{\Gamma,\Gamma'}{(t,u)}{A \pair{r}{s} B}}(\gamma,\gamma')
&= (\denSet{\hastype{\Gamma_1,\Gamma_2}{t}{A}}(\gamma,\gamma'),
  \denSet{\hastype{\Gamma'_1, \Gamma'_2}{u}{B}}(\gamma,\gamma') ) \\
&= (\denSet{\hastype{\Gamma_1,\Delta_1\Gamma_2}{t}{A}}(\gamma,\delta,\gamma'),
  \denSet{\hastype{\Gamma'_1,\Delta_2,\Gamma'_2}{u}{B}}(\gamma,\delta,\gamma') ) \\
&= \denSet{\hastype{\Gamma,\Delta,\Gamma'}{(t,u)}{A \pair{r}{s} B}}(\gamma,\delta,\gamma')
\end{align*}
where $\Gamma = r\Gamma_1+s\Gamma_2+\Gamma_3$, $\Gamma' = r\Gamma'_1+s\Gamma'_2+\Gamma'_3$, and $\Delta=r\Delta_1+s\Delta_2+\Delta_3$. The other cases follow similarly.

\item[Case (let):] Assume $\hastype{\Gamma,\Gamma'}{ \letIn{x=u}{t} }{E}$ was derived with an application of (\textsc{let}). Then, for the case when $r>0$, we get that 
\begin{align*}
\denSet{\hastype{\Gamma,\Gamma}{\letIn{x=u}{t}}{E}}(\gamma,\gamma')
&= \overline{\denSet{\hastype{\Gamma_1,\Gamma'_1,x:^rA}{t}{E}}}
    (\gamma,\gamma', 
    \denSet{\hastype{\Gamma_2,\Gamma'_2}{u}{\TD A}}(\gamma,\gamma')) \\
&= \overline{\denSet{\hastype{\Gamma_1,\Delta_1,\Gamma'_1,x:^rA}{t}{E}}}
    (\gamma,\delta,\gamma', 
    \denSet{\hastype{\Gamma_2,\Delta_2,\Gamma'_2}{u}{\TD A}}(\gamma,\delta,\gamma')) \\
&= \denSet{\hastype{\Gamma,\Delta,\Gamma}{\letIn{x=u}{t}}{E}}(\gamma,\delta,\gamma')
\end{align*}
where 
$\Gamma = \Gamma_1+r\Gamma_2$, $\Gamma' = \Gamma'_1+r\Gamma'_2$, and $\Delta=\Delta_1+r\Delta_2$ and the second equality follows by inductive hypothesis and the equality
$
  \overline{\den{\hastype{\Gamma_1,\Gamma'_1,x:^rA}{t}{E}}} \circ \proj
  = 
  \overline{\den{\hastype{\Gamma_1,\Delta,\Gamma'_1,x:^rA}{t}{E}}}
$, which is a consequence of the uniqueness of the homomorphic extension in the sense of  Proposition~\ref{prop:D:free:IB}, as the following diagram commutes
    
    \begin{equation*}
    \newcommand{\A}{\den{A}}
    \newcommand{\GDGxr}{\den{\Gamma_1,\Delta,\Gamma'_1}\mprod r}
    \newcommand{\GGxr}{\den{\Gamma_1,\Gamma'_1}\mprod r}
    \begin{tikzcd}[column sep=4cm, row sep=0.7cm]
        \GDGxr\A \arrow[r, "\GDGxr\delta"] 
                  \arrow[d, "\proj"']
                  \arrow[dr, phantom, "\text{(naturality)}"]
        & \GDGxr\TD\A \arrow[d, "\proj"] \\
        \GGxr\A \arrow[r, "\GGxr\delta"]
                  \arrow[rd, near start, bend right=15, "\den{t}"']
                  \arrow[dr, phantom, "\text{(Prop.~\ref{prop:D:free:IB})}"]
        & \GGxr\TD\A \arrow[d, "\overline{\den{t}}"] \\
        & \den{E}
    \end{tikzcd}
    \end{equation*}
The case when $r=0$ follows similarly.

\item[Case (rec):] Assume $\hastype{\Gamma,\Gamma'}{ \Rec{z}{(x,y).s}{n} }{A}$ was derived with an application of (\textsc{rec}) and that $x,y \notin \Delta$ (otherwise apply $\alpha$-renaming to $s$).
Then, by inductive hypothesis we get
\begin{align*}
\denSet{\hastype{\Gamma,\Gamma'}{\Rec{z}{(x,y).s}{n}}{A}}(\gamma,\gamma')
&= \iterate[](\mathsf{z}(\gamma,\gamma'),\mathsf{s}(\gamma,\gamma'),\mathsf{n}(\gamma,\gamma')) \\
&= \iterate[](\mathsf{zw}(\gamma,\delta,\gamma'),\mathsf{sw}(\gamma,\delta,\gamma'),\mathsf{nw}(\gamma,\delta,\gamma')) \\
&= \denSet{\hastype{\Gamma,\Delta,\Gamma'}{\Rec{z}{(x,y).s}{n}}{A}}(\gamma,\delta,\gamma') \,,
\end{align*}
where $\Gamma = \Gamma_1+\infty\Gamma_2+\Gamma_3$, 
$\Gamma' = \Gamma'_1+\infty\Gamma'_2+\Gamma'_3$, $\Delta=\Delta_1+\infty\Delta_2+\Delta_3$, and 
\begin{align*}
\mathsf{z} &= \denSet{\hastype{\Gamma_1,\Gamma'_1}{z}{A}} \\
\mathsf{zw} &= \denSet{\hastype{\Gamma_1,\Delta_1,\Gamma'_1}{z}{A}} \\
\mathsf{s} &= (\curry\circ\curry)(\denSet{\hastype{\Gamma_2,\Gamma'_2,x:^1A,y:^1\nat}{s}{\nat}}) \\
\mathsf{sw} &= (\curry\circ\curry)(\denSet{\hastype{\Gamma_2,\Delta_2,\Gamma'_2,x:^1A,y:^1\nat}{s}{\nat}}) \\
\mathsf{n} &= \denSet{\hastype{\Gamma_3,\Gamma'_3}{n}{\nat}} \\
\mathsf{nw} &= \denSet{\hastype{\Gamma_3,\Delta_3,\Gamma'_3}{n}{\nat}}
\end{align*}
Note that the equality $\mathsf{s}(\gamma,\gamma') = \mathsf{sw}(\gamma,\delta,\gamma')$ is justified by the universality of the
adjunction $(- \mprod X \dashv X \mexp -)$ and the fact that
$\den{\hastype{\Gamma,\Delta,\Gamma',x:^1A,y:^1\nat}{s}{A}}$ is equal
to 
\begin{equation*}
\eval \circ ((\curry\circ\curry)(\den{\hastype{\Gamma,\Gamma',x:^1A,y:^1\nat}{s}{A}}) \circ \proj) \mprod \den A \mprod \nat)
\end{equation*}
which easily follows by inductive hypothesis.

\item[Case (fix):] Assume $\hastype{\Gamma,\Gamma'}{\fix{x}{t}}{A}$ was derived with an application of (\textsc{fix}). Then
\begin{align*}
\denSet{\hastype{\Gamma,\Gamma'}{\fix{x}{t}}{A}}(\gamma,\gamma')
&=\semfix(\den{\hastype{(1-p)(\Gamma,\Gamma'),x^p:A}{t}{A}})(\gamma,\gamma') \\
&=\semfix(\den{\hastype{(1-p)(\Gamma,\Delta,\Gamma'),x^p:A}{t} {A}})(\gamma,\delta,\gamma') \\
&= \denSet{\hastype{\Gamma,\Delta,\Gamma'}{\fix{x}{t}}{A}}(\gamma,\delta,\gamma')
\end{align*}
where the second equality follows by inductive hypothesis and unicity of the fixed point, as per definition of the map $\semfix$.
\end{description}
The induction follows similarly also for the terms of the extended 
calculus presented in Section~\ref{sec:logic}.
%\end{proof}

\subsection*{Proof of Lemma~\ref{lem:sem:weak}(2)}
%\begin{proof}[Proof of Lemma~\ref{lem:sem:weak}(2)]
The proof is by induction on the derivation of $\hastype{\Gamma}tA$. We closely follow the steps of the proof of Lemma~\ref{lm:weakening}(2), using the same derivation obtained for $\hastype{\Gamma+\Delta}{t}{A}$ of which we use its parts implicitly. 
We select only a few interesting cases:
\begin{description}[wide, font=\sc]

\item[Case (var):] Assume $\hastype{\Gamma}{x}{A}$ was derived with an application of (\textsc{var}). Then, $\Gamma = \Gamma_1, x:^rA, \Gamma_2$ for $r \geq 1$. Therefore,
\begin{align*}
\den{\hastype{\Gamma+\Delta}{x}{A}}
&= \epsilon_{\den A} \circ \kappa_{(r+s),1,\den A} \circ \pi_{(r+s)\den A} \\
&= \epsilon_{\den A} \circ \kappa_{r,1,\den A} \circ \kappa_{(r+s),r,\den A} \circ \pi_{(r+s)\den A} \\
&= \epsilon_{\den A} \circ \kappa_{r,1,\den A} \circ \pi_{r\den A}\circ \semweak_{\Gamma,\Delta} \tag{*}\\
&= \den{\hastype{\Gamma}{x}{A}}\circ \semweak_{\Gamma,\Delta}
\end{align*}
for $\Delta = \Delta_1, x:^sA, \Delta_2$, 
where (*) follows the fact 
that the projections $\pi_i \colon A_1\mprod A_2 \to A_i$ in $\CMet$ can be defined via the unitors: $\pi_1 = \lambda_A \circ (A \mprod !)$ and $\pi_2 = \rho_B \circ (! \mprod B)$.

\item[Case (abs):] Assume $\hastype{\Gamma}{\lambda x.t}{A \lol_r B}$ was derived with an application of (\textsc{abs}). Then
\begin{align*}
\den{\hastype{\Gamma}{\lambda x. t}{A \lol_r B}} \circ \semweak_{\Gamma,\Delta}
&= \curry(\denSet{\hastype{\Gamma,x:^rA}{t}{B}}) \circ \semweak_{\Gamma,\Delta} \\
&= \curry(\denSet{\hastype{\Gamma+\Delta,x:^rA}{t}{B}}) \tag{*} \\
&= \den{\hastype{\Gamma+\Delta}{\lambda x. t}{A \lol_r B}}
\end{align*}
where (*) is justified by universality of the adjunction $(-\mprod X \dashv X \mexp -)$ and the following equality 
\begin{equation*}
\eval \circ ((\curry(\den{\hastype{\Gamma,x:^rA}{t}{B}}) \circ \semweak_{\Gamma,\Delta}) \mprod r\den A)
= \den{\hastype{\Gamma+\Delta,x:^rA}{t}{B}}
\end{equation*}
which easily follows by inductive hypothesis.

\item[Case (app):] Assume $\hastype{\Gamma}{t\,u}{A}$ was derived with an application of (\textsc{app}). Then, by inductive hypothesis we get that
\begin{align*}
\den{\hastype{\Gamma}{t\,u}{A}} &\circ \semweak_{\Gamma,\Delta} = {} \\
&= \eval \circ (\den{\hastype{\Gamma_1}{t}{A \lol_r B}} \mprod (\distribute\circ r\den{\hastype{\Gamma_2}{u}{A}})) \circ \splt_{\Gamma_1,r\Gamma_2} \circ \semweak_{\Gamma,\Delta} \\
&= \eval \circ ((\den{\hastype{\Gamma_1}{t}{A \lol_r B}}\circ \semweak_{\Gamma_1,\Delta}) \mprod (\distribute\circ r\den{\hastype{\Gamma_2}{u}{A}})) \circ \splt_{(\Gamma_1+\Delta),r\Gamma_2} \\
&= \eval \circ (\den{\hastype{\Gamma_1+\Delta}{t}{A \lol_r B}} \mprod 
 (\distribute\circ r\den{\hastype{\Gamma_2}{u}{A}})) \circ \splt_{(\Gamma_1+\Delta),r\Gamma_2} \\
&=\den{\hastype{\Gamma+\Delta}{t\,u}{A}}
\end{align*}
for $\Gamma = \Gamma_1 + r\Gamma_2$ and where the second equality follows as $\semweak$ commutes with $\splt$ and $\semweak_{\Gamma,\emptyctx}$ is the identity.

\item[Case ($\mprod$):] Assume $\hastype{\Gamma}{(t,u)}{A \pair sr B}$ was derived with an application of ($\mprod$). Then,
\begin{align*}
\den{\hastype{\Gamma}{(t,u)}{A \pair sr B}} \circ \semweak_{\Gamma,\Delta}
&= \mathsf{pre} \circ \semweak_{(r\Gamma_1+s\Gamma_2),\Gamma_3} \circ \semweak_{\Gamma,\Delta} \\
&= \mathsf{pre} \circ \semweak_{(r\Gamma_1+s\Gamma_2),(\Gamma_3+\Delta)}\\
&=\den{\hastype{\Gamma+\Delta}{(t,u)}{A \pair sr B}}
\end{align*}
for $\Gamma = r\Gamma_1+s\Gamma_2+\Gamma_3$ and $\mathsf{pre} = (r\den{\hastype{\Gamma_1}{t}{A}} \mprod r\den{\hastype{\Gamma_2}{u}{B}}) \circ
(\distribute\mprod\distribute)\circ\splt$, where the second equality follows as $\semweak$ factorizes through itself.
Note that we did not use the inductive hypothesis in this step, for the same reason as in the proof of Lemma~\ref{lem:weak}.

\item[Case ($\oplus_p$):] Assume $\hastype{\Gamma}{t \oplus_p u}{\TD A}$ was derived with an application of ($\oplus_p$). Then, by inductive hypothesis we get that
\begin{align*}
&\den{\hastype{\Gamma}{t \oplus_p u}{\TD A}} \circ \semweak_{\Gamma,\Delta} = {} \\
&= \oplus_p \circ 
(p\den{\hastype{\Gamma_1}{t}{\TD A}} \mprod 
 \overline{p}\den{\hastype{\Gamma_1}{t}{\TD A}}) \circ
 (\distribute_{p,\Gamma_1}\mprod\distribute_{\overline{p},\Gamma_2})
    \circ\splt_{p\Gamma_1,\overline p\Gamma_2} \circ \semweak_{\Gamma,\Delta} \\
&= \oplus_p \circ 
(p(\den{\hastype{\Gamma_1}{t}{\TD A}}\circ\semweak_{\Gamma_1,\Delta}) \mprod 
 \overline{p}\den{\hastype{\Gamma_1}{t}{\TD A}}) \circ
 (\distribute_{p,(\Gamma_1+\Delta)} \mprod \distribute_{\overline{p},\Gamma_2})
    \circ\splt_{p(\Gamma_1+\Delta),\overline p\Gamma_2} \\
&= \oplus_p \circ 
(p(\den{\hastype{\Gamma_1+\Delta}{t}{\TD A}}) \mprod 
 \overline{p}\den{\hastype{\Gamma_1}{t}{\TD A}}) \circ
 (\distribute_{p,(\Gamma_1+\Delta)} \mprod \distribute_{\overline{p},\Gamma_2})
    \circ\splt_{p(\Gamma_1+\Delta),\overline p\Gamma_2} \\
&=\den{\hastype{\Gamma+\Delta}{t \oplus_p u}{\TD A}}
\end{align*}
where $\overline p = 1-p$, $\Gamma = p\Gamma_1 + \overline{p}\Gamma_2$ and the second equality follows as weak commutes with $\splt$ and $\distribute$, and $\semweak_{\overline{p}\Gamma_2,\emptyctx}$ is the identity.
\end{description}
The induction follows similarly also for the terms of the extended calculus 
presented in Section~\ref{sec:logic}.
%\end{proof}

\subsection*{Proof of Lemma~\ref{lem:sem:subst} (Semantic Substitution)}
%\begin{proof}[Proof of Lemma~\ref{lem:sem:subst} (Semantic Substitution)]
By using the underlying set-maps given in Figure~\ref{fig:setmap-semantics}, the statement to prove simplifies to:
\begin{align*}
\denSet{\hastype{(\Gamma,\Gamma')+r\Delta}{t[u/x]}{B}}(\gamma,\gamma')
&= \denSet{\hastype{\Gamma,x:^rA,\Gamma'}tB}(\gamma,\denSet{\hastype{\Delta}{u}{A}}(\gamma,\gamma'),\gamma') \tag{for $r>0$} \\
\denSet{\hastype{\Gamma,\Gamma'}{t[u/x]}{B}}(\gamma,\gamma')
&= \denSet{\hastype{\Gamma,x:^0A,\Gamma'}tB}(\gamma,*,\gamma') \tag{for $r=0$}
\end{align*}

Note that the formulation of the lemma assumes implicitly that  statement above is $(\Gamma,\Gamma')+r\Delta$ is well-defined, meaning that $(\Gamma,\Gamma')$ 
and $r\Delta$ have the same variables of the same types in the same order. This does not, however, completely justify the notation above which 
assumes that  $(\gamma,\gamma')$ is an element in both $\den{\Gamma,\Gamma'}$ and $\den{r\Delta}$, because some sensitivities could be zero in one context, but not the other.
In that case, some elements in $(\gamma, \gamma')$ should be replaced by the unique element in $1$. % One such extreme case is when $r=0$. 
Strictly speaking, all such applications should therefore be precomposed with appropriate weakenings. For simplicity of notation, we will not do that however,
and also note that since we have proved weakening (Lemma~\ref{lem:sem:weak} (2)) we know that such weakenings do not essentially change the interpretation of terms. 

The proof is by induction on the derivation of $\hastype{\Gamma, x:^rA,\Gamma'}{t}{B}$. 
We closely follow the steps of the proof of Lemma~\ref{lm:substitution}, and use the derivations therein implicitly.
We show a few interesting cases.
\begin{description}[font=\sc,wide]
\item[Case (var):] Assume the derivation of $\hastype{\Gamma, x:^rA,\Gamma'}{t}{B}$ ends with an application of (\textsc{var}). Then $t = y$ for some variable $y$. We distinguish two cases:
\begin{itemize}
    \item If $y=x$, then $t[u/x] = u$, $r\geq 1$, and $B = A$. Therefore,
    \begin{align*}
    \denSet{\hastype{(\Gamma,\Gamma')+\Delta}{x[u/x]}{B}}(\gamma,\gamma')
    &= \denSet{\hastype{r\Delta}{u}{B}}(\gamma,\gamma') 
    \tag{Lemma~\ref{lem:sem:weak}(2)} \\
    &=\denSet{\hastype{(\Gamma,x:^rA,\Gamma')}{x}{A}}(\gamma,\denSet{\hastype{\Delta}{u}{B}}(\gamma,\gamma'),\gamma')
    \end{align*}

    \item If $y\neq x$, then $t[u/x] = y$ and either $\Gamma = \Gamma_1, y:^q B, \Gamma_2$ or $\Gamma' = \Gamma'_1, y:^q B, \Gamma'_2$, for some $q\geq 1$. 
    Consider the first situation, where $\gamma = (\gamma_1,b,\gamma_2)$
    \begin{align*}
    \denSet{\hastype{(\Gamma,\Gamma')+r\Delta}{y[u/x]}{B}}(\gamma,\gamma') 
    &= \denSet{\hastype{(\Gamma_1,y:^qB,\Gamma_2,\Gamma')+r\Delta}{y}{B}}(\gamma_1,b,\gamma_2,\gamma') \\
    &=b \\
    &=\denSet{\hastype{(\Gamma_1,y:^qB,\Gamma_2,x:^rA,\Gamma')}{y[u/x]}{B}}(\gamma_1,b,\gamma_2,a,\gamma')
    \end{align*}
    Where $a$ is $*$ for $r=0$ and $\denSet{\hastype{\Delta}{u}{A}(\gamma,\gamma')}$ otherwise.
\end{itemize}    

\item[Case (abs):] Assume the derivation of $\hastype{\Gamma, x:^r A,\Gamma'}{t}{B}$ ends with an application of (\textsc{abs}). Then the situation is as described in the corresponding step of the proof of Lemma~\ref{lm:substitution},
in particular $t=\lambda y.v$ and $B = C \lol_s D$, and $y\notin\Delta$.
Then we have
\begin{align*}
\denSet{\hastype{(\Gamma,\Gamma')&+r\Delta}{t[u/x]}{B}}(\gamma,\gamma') = {}\\
&= \denSet{\hastype{(\Gamma,\Gamma')+r\Delta}{\lambda y. v[u/x]}{C \lol_s D}}(\gamma,\gamma') \\
&=\curry(\denSet{\hastype{((\Gamma,\Gamma')+r\Delta),y:^sC}{v[u/x]}{D}})(\gamma,\gamma') \\
&=\curry(\denSet{\hastype{\Gamma,x:^rA,\Gamma',y:^sB}{v}{D}})(\gamma,
\denSet{\hastype{\Delta,y:^0C}{u}{A}}(\gamma,\gamma',*)
,\gamma') \tag{*} \\
&=\denSet{\hastype{\Gamma,x:^rA,\Gamma'}{\lambda y.v}{C \lol_s D}}(\gamma,
\denSet{\hastype{\Delta,y:^0C}{u}{A}}(\gamma,\gamma',*)
,\gamma') \\
&=\denSet{\hastype{\Gamma,x:^rA,\Gamma'}{t}{B}}(\gamma,
\denSet{\hastype{\Delta}{u}{A}}(\gamma,\gamma')
,\gamma') \tag{Lemma~\ref{lem:sem:weak}(1)}
\end{align*}
where (*) follows by inductive hypothesis and the universality of the counit
$\eval$ of the adjunction $(- \mprod X) \dashv (X \lol -)$.

\item[Case (app):] Assume the derivation of $\hastype{\Gamma, x:^p A,\Gamma'}{t}{B}$ ends with an application of (\textsc{app}). Then, the situation is as described in the corresponding step of the proof of Lemma~\ref{lm:substitution},
in particular $t=v w$. Assume $r,s>0$. Then, by inductive hypothesis, we have
\begin{align*}
\denSet{\hastype{(\Gamma,\Gamma')&+r\Delta}{t[u/x]}{B}}(\gamma,\gamma') = {} \\
&= \denSet{\hastype{(\Gamma,\Gamma')+r\Delta}{v[u/x] w[u/x]}{B}}(\gamma,\gamma') \\
&=\eval(\denSet{\hastype{(\Gamma_1,\Gamma'_1)+r_1\Delta}{v[u/x]}{C \lol_s B}}(\gamma,\gamma'), \denSet{\hastype{(\Gamma_2,\Gamma'_2)+r_2\Delta}{w[u/x]}{C}}(\gamma,\gamma')) \\
&=\eval(
\denSet{\hastype{\Gamma_1,x:^{r_1}A,\Gamma'_1}{v}{C \lol_s B}}
(\gamma,a, \gamma'), 
\denSet{\hastype{\Gamma_2,x:^{r_2}A,\Gamma'_2}{w}{C}}
(\gamma,a,\gamma')) \\
&=\denSet{\hastype{\Gamma,x:^r,\Gamma'}{t}{B}}
(\gamma,\denSet{\hastype{\Delta}{u}{A}}(\gamma,\gamma'),\gamma')
\end{align*}
where $a = \denSet{\hastype{\Delta}{u}{A}}(\gamma,\gamma')$.
The cases ($r=0<s$), ($s=0<r$), and ($r=s=0$) are obtained as an adaptation of the above.

\item[Case (fix):] Assume the derivation of $\hastype{\Gamma, x:^p A,\Gamma'}{t}{B}$ ends with an application of (\textsc{fix}). Then, the situation is as described in the corresponding step of the proof of Lemma~\ref{lm:substitution},
in particular $t=\fix{y}{v}$, $q<1$, and $y\notin \Delta$.
Assume $r>0$. Then, we have
\begin{align*}
\denSet{\hastype{(\Gamma,\Gamma')+&r\Delta}{t[u/x]}{B}}(\gamma,\gamma') = {} \\
&= \denSet{\hastype{(\Gamma,\Gamma')+r\Delta}{\fix{y}{v[u/x]}}{B}}(\gamma,\gamma') \\
&=\semfp(\denSet{\hastype{(1-q)((\Gamma,\Gamma')+r\Delta),y:^qB}{v[u/x]}{B}})(\gamma,\gamma')\\
&=\semfp(\denSet{\hastype{(1-q)(\Gamma,x:^rA,\Gamma'),y:^qB}{v}{B}})
(\gamma,\denSet{\hastype{\Delta,y:^0B}{u}{A}}(\gamma,\gamma',*),\gamma') 
\tag{*} \\
&=\semfp(\denSet{\hastype{(1-q)(\Gamma,x:^rA,\Gamma'),y:^qB}{v}{B}})
(\gamma,\denSet{\hastype{\Delta}{u}{A}}(\gamma,\gamma'),\gamma') 
\tag{Lemma~\ref{lem:sem:weak}(1)} \\
&= \denSet{\hastype{\Gamma,\Gamma'}{\fix{y}{v}}{B}}(\gamma,\denSet{\hastype{\Delta}{u}{A}}(\gamma,\gamma'),\gamma')
\end{align*}
where (*) is justified by Proposition~\ref{lem:just:one:fix} and the 
equality below, which holds by inductive hypothesis:
\begin{align*}
\denSet{\hastype{(1-q)((&\Gamma,\Gamma')+r\Delta),y:^qB}{v[u/x]}{B}}(\gamma,\gamma',b) = {} \\
&=\denSet{\hastype{(1-q)(\Gamma,x:^rA,\Gamma'),y:^qB}{v}{B}}
(\gamma,\denSet{\hastype{\Delta,y:^0B}{u}{A}}(\gamma,\gamma',*),\gamma',b) \,.
\end{align*}
The case for $r=0$ is a simple adaptation of the above.

\end{description}
We remark that the proof follows similarly also for the terms of the calculus extended as per Section~\ref{sec:logic}.
%\end{proof}

\section{Omitted proofs of Section~\ref{sec:logic}}

\begin{lemma}
The following are non-expansive maps
\begin{align*}
 {\oplus}, {\lol} &\colon  \Prop \mprod \Prop \to \Prop \\
 \min\{p\cdot -, 1\} &\colon p \scaling \Prop \to \Prop   \quad (\text{for $p \in (0,\infty]$})\\
 {\sup}, {\inf} &\colon (\infty \scaling X \mexp \Prop) \to \Prop \\
 \max, \min &\colon \Prop \times \Prop \to \Prop \\
 d_X &\colon X \mprod X \to \Prop
\end{align*}
\end{lemma}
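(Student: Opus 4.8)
The plan is to verify each map directly against the definition of non-expansiveness, recalling the relevant metrics: $\Prop = \uinterval$ carries the Euclidean metric $d(a,b) = |a-b|$; the distance on $p\scaling\Prop$ is $\min\{p|a-b|,1\}$; on $\Prop \mprod \Prop$ it is $\min\{|a-a'|+|b-b'|,1\}$; on $\Prop \times \Prop$ it is $\max\{|a-a'|,|b-b'|\}$; and on $\infty\scaling X \mexp \Prop$ it is $\sup_{x}|f(x)-g(x)|$ (recall every map out of the discrete space $\infty\scaling X$ is non-expansive, so this is the full function space). Two elementary facts carry most of the work: the truncations $t \mapsto \min\{t,1\}$ and $t \mapsto \max\{t,0\}$ are non-expansive as self-maps of $\mathbb{R}$, and for any reals one has $|\max\{a,c\}-\max\{b,d\}| \leq \max\{|a-b|,|c-d|\}$, and likewise with $\min$ in place of $\max$.

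For $\oplus$ and $\lol$, write $\oplus(a,b) = \min\{a+b,1\}$ and $a\lol b = \max\{b-a,0\}$. Since $|(a+b)-(a'+b')| \leq |a-a'|+|b-b'|$ and truncation is non-expansive, $|\oplus(a,b)-\oplus(a',b')| \leq |a-a'|+|b-b'|$; as the left-hand side also lies in $[0,1]$ it is bounded by $\min\{|a-a'|+|b-b'|,1\}$, which is exactly the distance on $\Prop\mprod\Prop$. The argument for $\lol$ is identical, using $|(b-a)-(b'-a')| \leq |a-a'|+|b-b'|$ and non-expansiveness of $t\mapsto\max\{t,0\}$. For $\min\{p\cdot-,1\}$ with $p \in (0,\infty)$, non-expansiveness out of $p\scaling\Prop$ asks for $|\min\{pa,1\}-\min\{pa',1\}| \leq \min\{p|a-a'|,1\}$; the truncation gives the bound $p|a-a'|$, and the value again lies in $[0,1]$. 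For $p=\infty$ the domain $\infty\scaling\Prop$ is discrete, so the map is non-expansive trivially.

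For $\max$ and $\min$ on $\Prop \times \Prop$, the elementary lattice inequality above is precisely the required $|\max\{a,b\}-\max\{a',b'\}| \leq \max\{|a-a'|,|b-b'|\}$, and symmetrically for $\min$. For $\sup$ on $\infty\scaling X \mexp \Prop$: for each $x$ we have $f(x) \leq g(x) + \sup_{z}|f(z)-g(z)| \leq \sup_{z}g(z) + d(f,g)$, so taking the supremum over $x$ gives $\sup_{x}f(x) \leq \sup_{x}g(x) + d(f,g)$; by symmetry $|\sup f - \sup g| \leq d(f,g)$, and the dual computation handles $\inf$. Finally, for $d_X \colon X \mprod X \to \Prop$, two applications of the triangle inequality give $d_X(x,y) \leq d_X(x,x') + d_X(x',y') + d_X(y',y)$, hence $d_X(x,y) - d_X(x',y') \leq d_X(x,x') + d_X(y,y')$, and symmetrically, so $|d_X(x,y)-d_X(x',y')| \leq d_X(x,x')+d_X(y,y')$; since the left-hand side lies in $[0,1]$ it is bounded by $\min\{d_X(x,x')+d_X(y,y'),1\}$, the distance on $X \mprod X$.

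Every step is immediate, so there is no genuine obstacle here; the only point worth attention is that wherever the codomain or a monoidal-product domain involves $\Prop$, one must exploit that the relevant quantity already lies in $[0,1]$ to absorb the truncation to $1$ on the right-hand side — this is exactly why all these operations are well-defined as maps into $\Prop$ in the first place.
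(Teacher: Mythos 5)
Your proof is correct and takes essentially the same route as the paper's: a direct, case-by-case verification of non-expansiveness from the definitions. The only differences are organizational — you systematically invoke $1$-Lipschitzness of the truncation $t\mapsto\min\{t,1\}$ where the paper repeats an ad hoc case split on whether the argument exceeds $1$, and your argument for $\sup$ (pointwise bound, then take suprema) is more direct than the paper's two preliminary facts about $|\sup A|\leq\sup|A|$ and subadditivity of $\sup$ — but the underlying calculations are the same.
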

\begin{proof}
We prove non-expansiveness separately for each function:
\begin{description}[font=\sc,wide]
    \item[Case $\oplus$:]
    Let $a,a',b,b'\in \Prop$. We need to prove the following two inequalities:
    \begin{align*}
        (|a - a'| \oplus |b-b'|) + (a\oplus b) \geq a' \oplus b'
        &&\text{and}&&
        (|a - a'| \oplus |b-b'|) + (a'\oplus b') \geq a \oplus b \,.
    \end{align*}
    We prove only the latter, as the former follows similarly as $|x - y| = |y - x|$.
    
    If $|a - a'| + |b-b'| + a'+b' \geq 1$ we are done. Otherwise 
    \begin{align*}
        (|a - a'| \oplus |b-b'|) + (a' \oplus b')
        &= |a - a'| + |b-b'| + a'+b' \\
        &\geq (a - a') + (b-b') + a'+b' \\
        &=  a+b \\
        &\geq a \oplus b \,.
    \end{align*}   

    \item[Case $\lol$:]
    Let $a,a',b,b'\in \Prop$. We need to prove the following two inequalities:
    \begin{align*}
        (|a - a'| \oplus |b-b'|) + (a\lol b) \geq a' \lol b'
        &&\text{and}&&
        (|a - a'| \oplus |b-b'|) + (a'\lol b') \geq a \lol b \,.
    \end{align*}
    We prove only the former, as the latter follows.

    If $|a - a'| + |b-b'| \geq 1$ or $b'- a' \leq 0$ we are done, as both sides of the inequalities are in $[0,1]$. Otherwise 
    \begin{align*}
        (|a - a'| \oplus |b-b'|) + (a \lol b)
        &= |a - a'| + |b'-b| + (a \lol b) \\
        &\geq  (a - a') + (b'-b) + (b - a) \\
        &=  b'- a' \\
        &= a' \lol b' \,.
    \end{align*}

    \item[Case $\min\{ p\cdot -, 1\}$:]
    Let $a,b\in \Prop$ and $p \in (0,\infty]$. We need to prove the following two
    inequalities
    \begin{align*}
        \min \{ p|a-b|,1\} + \min\{pa, 1\} \geq \min\{pb, 1\}
        &&\text{and}&&
        \min \{ p|a-b|, 1\} + \min\{pb, 1\} \geq \min\{pa, 1\}
    \end{align*}
    We prove only the latter, as the other follows similarly using the fact
    that $|x-y| = |y-x|$.

    If $p|a-b| + pb \geq 1$ we are done. Otherwise,
    \begin{align*}
        \min \{ p|a-b|, 1\} + \min\{pb, 1\}
        &= p|a-b| + pb \\
        &\geq p(a-b) + pb \\
        &= pa \\
        &\geq \min\{pa, 1\} \,.
    \end{align*}
    
    \item[Case $\sup$:]
    Let $f,g \colon X \to [0,1]$ be set-maps (thus, elements of  the metric space $\infty\scaling X \mexp \Prop$). We need to prove that
    \begin{equation*}
        \sup_{x} |f(x)-g(x)| \geq |\sup_x f(x) - \sup_x g(x)| \,.
    \end{equation*}
    First notice that 
    \begin{enumerate}[leftmargin=3\parindent, labelindent=\parindent]
        \item For any $A\subseteq [-1,1]$, $|\sup A| \leq \sup |A|$, where $|A| = \{|a| \mid a \in A\}$. 
        
        Indeed, because $a \leq |a|$, we have $\sup A \leq \sup |A|$. 
        If $\sup A \geq 0$ we are done. 
        If $\sup A < 0$, let $- A \defeq \{ -a \mid a \in A \}$.
        Then $|A| = -A$. Moreover,
        $|\sup A| = - \sup A = \inf(-A) = \inf|A| \leq \sup |A|$.
        \item For any $h,i\colon X \to [0,1]$, $\sup_x (h(x)+i(x)) \leq \sup_x h(x) + \sup_x i(x)$.

        Indeed, $\sup_x (h(x)+i(x)) = \sup B$ and $\sup_x h(x) + \sup_x i(x) = \sup B'$
        where $B \defeq \{h(x)+i(y) \mid x,y\in X \text{ and } x=y \}$ is a subset of $B' \defeq \{h(x)+i(y) \mid x,y\in X \}$.
    \end{enumerate}
    By instantiating (1) with $A = \{f(x) - g(x) \mid x \in X\}$, we get $\sup_x |f(x) - g(x)| \geq |\sup_x (f(x) - g(x))|$; and by instantiating (2) with $h = f-g$ and $i = g$
    we get $\sup_x (f(x)-g(x)) \geq \sup_x f(x) - \sup_x g(x)$.
    
    From the above we conclude that
    \begin{equation} \label{eq:sup:nonexp}
        \sup_{x} |f(x)-g(x)| 
        \geq 
        |\sup_{x} (f(x)-g(x))|
        \geq
        \sup_{x} (f(x)-g(x))
        \geq 
        \sup_x f(x) - \sup_x g(x) \,.
    \end{equation}
    If $\sup_x f(x) \geq \sup_x g(x)$ we are done. Otherwise, if $\sup_x f(x) < \sup_x g(x)$, we have that 
    \begin{equation*}
        \sup_{x} |f(x)-g(x)| 
        = \sup_{x} |g(x)-f(x)|
        \geq \sup_x g(x) - \sup_x f(x) 
        = |\sup_x f(x) - \sup_x g(x)|
    \end{equation*}
    where the inequality follows as for \eqref{eq:sup:nonexp} by inverting the roles of $f$ and $g$.
    
    \item[Case $\inf$:] 
    Let $f,g \colon X \to [0,1]$ be set-maps (thus, elements of  the metric space $\infty\scaling X \mexp \Prop$). Non-expansiveness follows by non-expansiveness of $\sup$ (shown above) by noticing the following equalities
    \begin{align*}
        \sup_{x} |f(x)-g(x)|
        &= \sup_{x} |(1-f)(x)-(1-g)(x)| \\
        &\geq |\sup_x (1-f)(x) - \sup_x (1-g)(x)| 
        \tag{$\sup$ non-exp} \\
        &= |(1-\inf_x f(x)) - (1 - \inf_x g(x))| \\
        &= |\inf_x f(x) - \inf_x g(x)| \,.
    \end{align*}

    \item[Case $\max$:]
    Let $a,a',b,b'\in \Prop$. We need to prove the following inequalities:
    \begin{align*}
        \max\{|a - a'|, |b-b'|\} + \max\{a, b\} &\geq \max\{a', b'\}
        \\%&&\text{and}&&
        \max\{|a - a'|, |b-b'|\} + \max\{a', b'\} &\geq \max\{a, b\} \,.
    \end{align*}
    We prove only the latter, as the former follows similarly by using the fact
    that $|x - y| = |y - x|$.
    Notice that 
    \begin{align*}
        \max\{|a - a'|, |b-b'|\} + \max\{a', b'\}
        &\geq \max\{|a - a'|, |b-b'|\} + a' \\
        &\geq (a-a') + a' \\
        &= a
    \end{align*}
    and similarly, $\max\{|a - a'|, |b-b'|\} + \max\{a', b'\} \geq b$.
    Thus, conclude $\max\{|a - a'|, |b-b'|\} + \max\{a', b'\} \geq \max\{a, b\}$.

    \item[Case $\min$:]
    It follows from non-expansiveness of $\max$ since
    $\min\{a,b\} = 1-\max\{1-a,1-b\}$ and  $|a-b| = |(1-a)-(1-b)|$.

    \item[Case $d_X$:]
    Let $a,a',b,b'\in \Prop$. We need to prove the following two inequalities:
    \begin{align*}
        (d_X(a,a') \oplus d_X(a,b')) + d_X(a,b) &\geq d_X(a',b')
        //%&&\text{and}&&
        (d_X(a,a') \oplus d_X(a,b')) + d_X(a',b') &\geq d_X(a,b) \,.
    \end{align*}
    We prove only the former, as the latter follows similarly.
    If $d_X(a,a') \oplus d_X(a,b') \geq 1$, we are done. Otherwise,
    \begin{align*}
        (d_X(a,a') \oplus d_X(a,b')) + d_X(a,b)
        &= d_X(a,a') + d_X(a,b') + d_X(a,b) \\
        &= d_X(a',a) + d_X(a,b) + d_X(a,b') \tag{symmetry} \\
        &\geq d_X(a',b') \,. \tag{triangular inequality}
    \end{align*}
    
\end{description}
\end{proof}

Next we provide the explicit definition for the underlying set maps for the interpretations of logical predicates. These equations should be added to those
in Figure~\ref{fig:setmap-semantics}.
\begin{align*}
    \denSet{\hastype{\Gamma}{\true}{\Prop}}\gamma &= 0 \\
    \denSet{\hastype{\Gamma}{\false}{\Prop}}\gamma &= 1 \\
    \denSet{\hastype{\Gamma+\Gamma'}{t \peq_A u}{\Prop}}\gamma 
        &= d_{\den A}(\denSet{\hastype{\Gamma}{t}{A}}\gamma, \denSet{\hastype{\Gamma'}{u}{A}}\gamma) \\
    \denSet{\hastype{\Gamma+\Gamma'}{\varphi \ltensor \psi}{\Prop}}\gamma 
        &=  
        \denSet{\hastype{\Gamma}{\varphi}{\Prop}}\gamma \oplus \denSet{\hastype{\Gamma'}{\psi}{\Prop}}\gamma \\
    \denSet{\hastype{\Gamma+\Gamma'}{\varphi \lexp \psi}{\Prop}}\gamma 
        &= 
        \denSet{\hastype{\Gamma}{\varphi}{\Prop}}\gamma \lol \denSet{\hastype{\Gamma'}{\psi}{\Prop}}\gamma \\
    \denSet{\hastype{r\Gamma+\Gamma'}{r\varphi}{\Prop}}\gamma 
        &= \min \{r \cdot \denSet{\hastype{\Gamma}{\varphi}{\Prop}}\gamma, 1\} \\
    \denSet{\hastype{\Gamma}{\neg\varphi}{\Prop}}\gamma 
        &= 1-\denSet{\hastype{\Gamma}{\varphi}{\Prop}}\gamma \\
    \denSet{\hastype{\Gamma}{\varphi \wedge \psi}{\Prop}}\gamma 
        &= 
        \max \{ \denSet{\hastype{\Gamma}{\varphi}{\Prop}}\gamma, \denSet{\hastype{\Gamma}{\psi}{\Prop}}\gamma \} \\
    \denSet{\hastype{\Gamma}{\varphi \vee \psi}{\Prop}}\gamma 
        &= 
        \min \{ \denSet{\hastype{\Gamma}{\varphi}{\Prop}}\gamma, \denSet{\hastype{\Gamma}{\psi}{\Prop}}\gamma \} \\
    \denSet{\hastype{\Gamma}{\exists x: A. \varphi}{\Prop}}\gamma 
        &= 
        \inf_{a\in \den{A}} \denSet{\hastype{\Gamma,x:^\infty A}{\varphi}{\Prop}}(\gamma,a) \\
    \denSet{\hastype{\Gamma}{\forall x: A. \varphi}{\Prop}}\gamma 
        &= 
        \sup_{a\in \denSet{A}} \denSet{\hastype{\Gamma,x:^\infty A}{\varphi}{\Prop}}(\gamma,a) %\\
    % \denSet{\hastype{p\Gamma+(1-p)\Gamma'}{\phi \oplus_p \psi}{\Prop}}\gamma 
    %     & = p\denSet{\hastype{\Gamma}{\phi}{\Prop}}\gamma \oplus_p (1-p)\denSet{\hastype{\Gamma'}{\psi}{\Prop}}\gamma
\end{align*} 

\subsection*{Proof of Theorem~\ref{thm:Soundness:Logic} (Soundness)}
%\begin{proof}[Proof of Theorem~\ref{thm:Soundness:Logic} (Soundness)]
We need to show that all the rules of inference in Figure~\ref{fig:logicrules} preserve soundness.
Observe that the rules (\textsc{true}), (\textsc{false}), (\textsc{ex}),
(\textsc{$\ltensor$-i}), (\textsc{$\ltensor$-e}), (\textsc{$\lexp$-i}), 
(\textsc{$\lexp$-e}), (\textsc{$\wedge$-i}), (\textsc{$\wedge$-el/r}), 
(\textsc{$\vee$-i/r}), and (\textsc{$\vee$-e}) are sound with the expected interpretation on a generic commutative quantale. Thus they are sound also in $\Prop$.
The remaining cases are considered below. 

As for most cases, the typing context is understood, we simply write
$\denSet{\varphi}$ for $\denSet{\hastype{\Delta}{\varphi}{\Prop}}$. Also, for $\Psi = \psi_1, \dots, \psi_n$ a list of predicates, we define 
$\denSet{\Psi}\delta \defeq \bigoplus_i\denSet{\psi_i}\delta$.

\begin{description}[font=\sc,wide]
    \item[Rule (pr):] Assume $\denSet{\Psi}\delta \geq \denSet{\varphi}\delta$, for all $\delta \in \denSet\Delta$.
    Let $\delta \in \Delta$.
    If $\denSet{r\Psi}\delta = 1$ then $\denSet{r\Psi}\delta \geq \denSet{r\varphi}\delta$ holds trivially. Let $\Psi = \psi_1, \dots, \psi_n$ and assume $\denSet{r\Psi}\delta < 1$. Clearly, $\denSet{r\psi_i}\delta < 1$, for all $1\leq i \leq n$. Moreover,
    \begin{align*}
        \denSet{r\Psi}\delta 
        = \bigoplus_i \min\{ r\cdot \denSet{\psi_i}\delta, 1\} 
        = r\cdot \sum_i \denSet{\psi_i}\delta 
        \geq% r \cdot \bigoplus_i\denSet{\psi_i}\delta \\
        %= 
        r \cdot \denSet\Psi\delta 
        \geq r \cdot \denSet\varphi\delta 
        \geq %\min\{p \cdot \denSet\varphi\delta, 1\}
        \denSet{r\varphi}\delta\,.
    \end{align*}

    \item[Rule (dup):] Soundness of both directions of the rule follows by
    the following equality
    \begin{align*}
        \denSet{r\varphi}\delta \oplus \denSet{s\varphi}\delta 
        &= \min \{ r\denSet\varphi\delta,1\} \oplus \min \{ s\denSet\varphi\delta,1\} \\
        &= \min \{ r\denSet\varphi\delta+s\denSet\varphi\delta,1\} \\
        &= \min \{ (r+s)\denSet\varphi\delta,1\} \\
        &= \denSet{(r+s)\varphi}\delta \,.
    \end{align*}

    \item[Rule (der):] Soundness of both directions follows by
     \begin{align*}
        \denSet{\psi}\delta
        = \min \{ \denSet{\psi}\delta,1\}
        = \min \{ 1 \cdot \denSet{\psi}\delta,1\}
        = \denSet{1\psi}\delta
     \end{align*}

    \item[Rule (inc):] Soundness follows by monotonicity of $\oplus$
    (in both arguments) and the fact that, whenever $r \leq s$, we have
    \begin{align*}
        \denSet{r\psi}\delta
        = \min \{ r \cdot \denSet{\psi}\delta,1\}
        \leq \min \{ s \cdot \denSet{\psi}\delta,1\}
        = \denSet{s\psi}\delta \,.
    \end{align*}

    \item[Rule (assoc${}_1$):] Soundness follows by monotonicity of $\oplus$
    (in both arguments) and the following inequality 
    \begin{align*}
        \denSet{r(s\psi)}\delta
        = \min \{ r \cdot \min\{s\denSet\psi,1\}, 1\} 
        = \min \{ rs \cdot \denSet\psi, r, 1\}
        \leq \min \{ rs \cdot \denSet\psi, 1\}
        =\denSet{(rs)\psi} \,.
    \end{align*}

    \item[Rule (assoc${}_2$):] Soundness follows by monotonicity of $\oplus$
    (in both arguments) and the fact that the inequality proven in the case \textsc{(assoc${}_1$)} is an equality when $s\leq 1$ or $r\geq 1$.

    \item[Rule (g-rec):] If $\denSet\Psi\delta =1$, then the conclusion $\denSet\Psi\delta\geq \denSet \varphi\delta$ is trivial. Otherwise,
    \[\denSet{(1-p)\Psi}\delta = (1-p)\denSet\Psi\delta\] and $(1-p)\denSet\Psi\delta \oplus p \denSet\varphi\delta = (1-p)\denSet\Psi\delta + p \denSet\varphi\delta$. So we conclude by
    \begin{align*}
        (1-p)\denSet\Psi\delta + p \denSet\varphi\delta \geq \denSet\varphi\delta
        &\iff
        (1-p)\denSet\Psi\delta \geq (1-p)\denSet\varphi\delta \\
        &\iff
        \denSet\Psi\delta \geq \denSet\varphi\delta \,.
    \end{align*}

    \item[Rule ($\neg$-i):] Soundness follows by the fact that $\lol$ is right adjoint to $\oplus$ and the interpretation of $\false$. Indeed,
    \begin{align*}
        \denSet\Psi\delta \oplus \denSet\varphi\delta \geq \denSet\false\delta
        &\iff
        \denSet\Psi\delta \geq \denSet\varphi\delta \lol \denSet\false\delta
    \end{align*}
    and $\denSet\varphi\delta \lol \denSet\false\delta = \max \{1 -\denSet\varphi\delta, 0\} = 1 -\denSet\varphi\delta = \denSet{\neg\varphi}\delta$.

    \item[Rule ($\neg$-e):] Soundness follows by the fact that $\lol$ is right adjoint to $\oplus$ and the interpretation of $\false$. Indeed,
    \begin{align*}
        \denSet\Psi\delta \oplus \denSet{\neg\varphi}\delta \geq \denSet\false\delta
        &\iff
        \denSet\Psi\delta \geq \denSet{\neg\varphi}\delta \lol \denSet\false\delta
    \end{align*}
    and $\denSet{\neg\varphi}\delta \lol \denSet\false\delta = \max \{1 -\denSet{\neg\varphi}\delta, 0\} = 
    \max \{\denSet{\varphi}\delta, 0\} =
    \denSet{\varphi}\delta$.

    \item[Rule ($\exists$-i)] Soundness follows by the following:    \begin{align*}
        \denSet{\hastype{\Delta}{\Psi}{\Prop}}\delta 
        &\geq \denSet{\hastype{\Delta}{\varphi[t/x]}{\Prop}}\delta %\tag{premise} 
        \\
        &= \denSet{\hastype{\Delta+\infty\Delta}{\varphi[t/x]}{\Prop}}\delta\\
        &=\denSet{\hastype{\Delta,x:A}{\varphi}{\Prop}}
            (\delta,\denSet{\hastype{\Delta}{t}{A}}\delta)
        \tag{Lemma~\ref{lem:sem:subst}}\\
        &\geq \inf_{a\in\den A}
        \denSet{\hastype{\Delta,x:A}{\varphi}{\Prop}}(\delta,a) \\
        &=\denSet{\hastype{\Delta}{\exists x:A.\varphi}{\Prop}}\delta \,.
    \end{align*}
    
    \item[Rule ($\exists$-e):] First note that $\inf\{r\cdot a\mid a\in A\} = r\cdot \inf A$ for any subset $A \subseteq \uinterval$
%    \[
%     \inf\{r\cdot a\mid a\in A\} = r\cdot \inf A
%    \]
    because $r \cdot (-)$ is continuous and monotone for $r <\infty$. 
    (This fails for $r = \infty$ as $\inf\{\infty\cdot a\mid a\in (0,1]\} = 1$, and $\infty \cdot \inf (0,1] = \infty \cdot 0 = 0$.)
    Now assume 
    \[\denSet{\hastype{\Delta,x:A}{\Psi}{\Prop}}(\delta,a) \oplus 
    r \cdot\denSet{\hastype{\Delta,x:A}{\varphi}{\Prop}}(\delta,a) \geq \denSet{\hastype{\Delta,x:A}{\psi}{\Prop}}(\delta,a),\] 
    for all $\delta\in\den\Delta$ and $a\in \den{A}$.
    As the logical judgment appearing in the conclusion of the rule 
    is well-formed, we know that $\hastype{\Delta}{\Psi}{\Prop}$ 
    and $\hastype{\Delta}{\psi}{\Prop}$. 
    Thus, by Lemma~\ref{lem:sem:subst}, we have
    \begin{align*}
        \denSet{\hastype{\Delta,x:A}{\Psi}{\Prop}}(\delta,a) & = 
        \denSet{\hastype{\Delta}{\Psi}{\Prop}}\delta \\
        %&&\text{and}&&
        \denSet{\hastype{\Delta,x:A}{\psi}{\Prop}}(\delta,a) & =
        \denSet{\hastype{\Delta}{\psi}{\Prop}}\delta
    \end{align*}
    As a consequence, 
    \begin{align*}
    \denSet{\hastype{\Delta}{\psi}{\Prop}}\delta
    &\leq \inf_{a\in\den{A}} \big(\denSet{\hastype{\Delta}{\Psi}{\Prop}}\delta
    \oplus r\cdot\denSet{\hastype{\Delta,x:A}{\psi}{\Prop}}(\delta,a) \big)
    \\
    &= \denSet{\hastype{\Delta}{\Psi}{\Prop}}\delta
    \oplus \inf_{a\in\den{A}} r\cdot \denSet{\hastype{\Delta,x:A}{\psi}{\Prop}}(\delta,a) \\
    &= \denSet{\hastype{\Delta}{\Psi}{\Prop}}\delta
    \oplus r\cdot\inf_{a\in\den{A}} \denSet{\hastype{\Delta,x:A}{\psi}{\Prop}}(\delta,a) \\
    &= \denSet{\hastype{\Delta}{\Psi}{\Prop}}\delta
    \oplus \denSet{\hastype{\Delta}{r(\exists x:A.\varphi)}{\Prop}} \,.
    \end{align*}

    \item[Rule ($\forall$-i/e):] The proof of soundness follows similarly to those for existential quantifier, only that reasoning has to be thought with the order reversed. Note however,
    that in this case $\sup \{r\cdot a \mid a\in A\} = r\cdot \sup A$ holds for all $r$ including $r= \infty$. 

    \item[Case (eq-i):] Soundness follows as a direct consequence of the interpretation of equality as a distance. Indeed
    \begin{align*}
        \denSet{t \peq_A t}\delta = d_{\den{A}}(\denSet{t}\delta,\denSet{t}\delta) = 0 \leq \denSet{\Psi}\delta \,.
    \end{align*}

    \item[Rule (eq-e):] For the soundness, observe that the judgments appearing in the 
    rule are well-formed. Thus, $\Delta$ is discrete and all formulas are well-typed
    for the typing context of the logical judgment where they appear. Thus, 
    \begin{align*}
        \denSet{\hastype{\Delta}{{}&\Psi}{\Prop}}\delta \oplus \denSet{\hastype{\Delta}{\Psi'}{\Prop}}\delta \\
        &= \denSet{\hastype{\Delta+r\Delta}{\Psi}{\Prop}}\delta \oplus 
            \denSet{\hastype{r(\Delta+\Delta)}{\Psi'}{\Prop}}\delta
        \tag{Lemma~\ref{lem:sem:weak}} \\
        &\geq \denSet{\hastype{\Delta+r\Delta}{\varphi[t/x]}{\Prop}}\delta \oplus 
            \denSet{\hastype{r(\Delta+\Delta)}{r(t=s)}{\Prop}}\delta 
        \tag{premise} \\
        &= \denSet{\hastype{\Delta,x:^rA}{\varphi}{\Prop}}
            (\delta,\denSet{\hastype{\Delta}{t}{A}}\delta) \oplus
           \denSet{\hastype{r(\Delta+\Delta)}{r(t=s)}{\Prop}}\delta
        \tag{Lemma~\ref{lem:sem:subst}} \\
        \intertext{If $\denSet{\hastype{r(\Delta+\Delta)}{r(t=s)}{\Prop}}\delta = 1$, we immediately conclude. Otherwise:}
        &= \denSet{\hastype{\Delta,x:^rA}{\varphi}{\Prop}}
            (\delta,\denSet{\hastype{\Delta}{t}{A}}\delta) \oplus
            r\cdot \denSet{\hastype{\Delta+\Delta}{t=s}{\Prop}}\delta \\
        &= \denSet{\hastype{\Delta,x:^rA}{\varphi}{\Prop}}
            (\delta,\denSet{\hastype{\Delta}{t}{A}}\delta) \oplus
            d_{r\den A}(\denSet{\hastype{\Delta}{t}{A}}\delta,
                               \denSet{\hastype{\Delta}{s}{A}}\delta) \\
        &\geq \denSet{\hastype{\Delta,x:^rA}{\varphi}{\Prop}}
            (\delta,\denSet{\hastype{\Delta}{s}{A}}\delta)
        \tag{*} \label{eq:sound:EQ-e} \\
        &= \denSet{\hastype{\Delta+r\Delta}{\varphi[s/x]}{\Prop}}\delta
        \tag{Lemma~\ref{lem:sem:subst}} \\
        &= \denSet{\hastype{\Delta}{\varphi[s/x]}{\Prop}}\delta
        \tag{Lemma~\ref{lem:sem:weak}}
    \end{align*}
    where \eqref{eq:sound:EQ-e} follows as $\denSet{\hastype{\Delta,x:^rA}{\varphi}{\Prop}}(\delta, -) \colon r\den{A} \to \uinterval$ is non-expansive.

    \item[Rule (ind${}_\mprod$):] Suppose 
    \[\denSet{\hastype{\Delta,x:A,y:B}{\Psi[(x,y)/z]}{\Prop}}(\delta,a,b)
    \geq \denSet{\hastype{\Delta,x:A,y:B}{\varphi[(x,y)/z]}{\Prop}}(\delta,a,b)\] 
    holds for all
    $\delta \in \den{\Delta}$, $a \in \infty\den A$, and $b \in \infty\den B$. Then, by Lemma~\ref{lem:sem:subst} and interpretation of terms, also the following 
    inequality holds for all $\delta \in \den{\Delta}$, $c \in \den {A \pair{p}{q} B}$:
    \begin{equation}
    \label{eq:sound:ind:tensor}
        \denSet{\hastype{\Delta,z: A \pair{p}{q} B}{\Psi}{\Prop}}(\delta,c)
    \geq \denSet{\hastype{\Delta,z: A \pair{p}{q} B}{\varphi}{\Prop}}(\delta,c)
    \end{equation}
    From this we obtain
    \begin{align*}
        \denSet{\hastype{\Delta}{\Psi[t/z]}{\Prop}}\delta
        &= \denSet{\hastype{\Delta,z: A \pair{p}{q} B}{\Psi}{\Prop}}
        (\delta,\denSet{\hastype{\Delta}{t}{A \pair{p}{q} B}}) 
        \tag{Lemma~\ref{lem:sem:subst}} \\
        &\geq \denSet{\hastype{\Delta,z: A \pair{p}{q} B}{\varphi}{\Prop}}
        (\delta,\denSet{\hastype{\Delta}{t}{A \pair{p}{q} B}})
        \tag{by \eqref{eq:sound:ind:tensor}} \\
        &=\denSet{\hastype{\Delta}{\varphi[t/z]}{\Prop}}\delta
        \tag{Lemma~\ref{lem:sem:subst}}
    \end{align*}

    \item[Rule (ind$_+$):] The proof is similar to that of (\textsc{ind}${}_\mprod$).

    \item[Rule (ind$_\nat$):] From the assumption on the soundness of the premises and interpretation of the terms $\Zero$ and $\Succ{n}$ we have that for all $\delta \in \den\Delta$ the following two properties hold:
    \begin{itemize}
        \item $\denSet{\hastype{\Delta, n:\nat}{\varphi}{\Prop}}(\delta,0) 
        \leq \denSet{\hastype{\Delta}{\Psi}{\Prop}}\delta$;
        \item $\denSet{\hastype{\Delta, n:\nat}{\varphi}{\Prop}}(\delta,m+1) 
        \leq \denSet{\hastype{\Delta, n:\nat}{\varphi}{\Prop}}(\delta,m)$, for all $m \in \nat$.
    \end{itemize}
    Thus, by induction % proof principle on the natural numbers, we have that
    the following inequality holds for all $m \in \nat$:
    \begin{equation*}
        \denSet{\hastype{\Delta, n:\nat}{\varphi}{\Prop}}(\delta,m) 
        \leq \denSet{\hastype{\Delta}{\Psi}{\Prop}}\delta
    \end{equation*}
    From the above and the fact that $\Delta$ is a discrete context we obtain
    \begin{align*}
        \denSet{\hastype{\Delta}{\Psi}{\Prop}}\delta
        &\geq \denSet{\hastype{\Delta, n:\nat}{\varphi}{\Prop}}
            (\delta,\denSet{\hastype{\Delta}{t}{\nat}}\delta) \\
        &= \denSet{\hastype{\Delta+\infty\Delta}{\varphi[t/n]}{\Prop}}\delta
        \tag{Lemma~\ref{lem:sem:subst}}\\
        &= \denSet{\hastype{\Delta}{\varphi[t/n]}{\Prop}}\delta \,.
    \end{align*}

   \item[Rule (ind$_{\TD}$):] 
    We prove the soundness of the rule as restated in Remark~\ref{rem:finiteTD-Induction}, as this is more general than the one presented in Figure~\ref{fig:logicrules}.
    
    To this end, first observe that the finitely supported dyadic Borel probability measures
    can be expressed as a $\oplus_{\frac 12}$-convex sum of Dirac distributions 
    at elements from its support. 
    
    From the soundness of the premises we obtain that the following two properties hold, for all $e \in \den\Delta$, $a\in \den A$, 
    and $\mu,\nu \in \D \den{A}$,
    \begin{itemize}
        \item $\denSet{\hastype{\Delta, {x:}\D A}{\varphi}{\Prop}}(e,\delta_{\denSet{A}}(a)) 
        \leq \denSet{\hastype{\Delta}{\Psi}{\Prop}}(e)$;
        \item $\denSet{\hastype{\Delta, {x:}\D A}{\varphi}{\Prop}}(e,\mu \oplus_{\frac12} \nu ) 
        \leq \frac12 \denSet{\hastype{\Delta, {x:}\D A}{\varphi}{\Prop}}(e,\mu) + 
            \frac12 \denSet{\hastype{\Delta, {x:}\D A}{\varphi}{\Prop}}(e,\nu)$.
    \end{itemize}
    From the above, by an easy induction, we have that 
    \begin{equation} \label{eq:sound:ind:TD}
        \denSet{\hastype{\Delta, {x:}\D A}{\varphi}{\Prop}}(e,\mu) \leq \denSet{\hastype{\Delta}{\Psi}{\Prop}}(e)
    \end{equation}
    holds for any finitely supported dyadic Borel probability measure $\mu$ on $\den A$.

   Since finitely supported dyadic Borel probability measures are dense in $\D A$ (see proof of \cite[Theorem 21]{BreugelHMW07} and references therein) and by assumption $\denSet{\hastype{\Delta, {x:}\D A}{\varphi}{\Prop}}(\delta, -) \colon r \D\den{A} \to [0,1]$ is $r$-Lipschitz for $r < \infty$ (thus, continuous), we have that the inequality
   \eqref{eq:sound:ind:TD} extends to the entire set of Radon probability measures in $\D \den A$.
   From this and the fact that $\Delta$ is discrete, for all $e\in \den\Delta$, we have
    \begin{align*}
        \denSet{\hastype{\Delta}{\Psi}{\Prop}}(e)
        &\geq \denSet{\hastype{\Delta, x:\D A}{\varphi}{\Prop}}(e,
        \denSet{\hastype{\Delta}{t}{\D A}}(e)) \\
        &= \denSet{\hastype{\Delta+\infty\Delta}{\varphi[t/x]}{\Prop}}(e)
        \tag{Lemma~\ref{lem:sem:subst}}\\
        &= \denSet{\hastype{\Delta}{\varphi[t/x]}{\Prop}}(e) \,.
        \qedhere
    \end{align*}
\end{description}
%\end{proof}

\subsection*{Proof of Lemma~\ref{lem:r:exists}}
Note that by (\textsc{der}), $1\varphi$ is equivalent to $\varphi$. In the proofs below, this equivalence will be used implicitly when the scaling factor $r$ used in the rules is $r = 1$. 
\begin{itemize}
\item Both the directions of the equivalence between $r(\varphi \wedge \psi)$ and $r\varphi \wedge r\psi$ follows by an application of (\textsc{$\wedge$-i}) from $r(\varphi \wedge \psi) \vdash r\varphi$ and $r(\varphi \wedge \psi) \vdash r\psi$. In turn, these are obtained from $\varphi \wedge \psi \vdash \varphi$ and $\varphi \wedge \psi \vdash \psi$, which are clearly valid, by an application of (\textsc{pr}).

\item By (\textsc{$\vee$-e}), to conclude $r(\varphi \vee \psi) \vdash r\varphi \vee r\psi$ it 
suffices to prove $r\varphi \vdash r\varphi \vee r\psi$ and $r\psi \vdash r\varphi \vee r\psi$, which are certanly true.
Conversely, $r\varphi \vee r\psi \vdash r(\varphi \vee \psi)$ follows by an application of (\textsc{$\vee$-e}) from $r\varphi \vdash r(\varphi \vee \psi)$ and $r\psi \vdash r(\varphi \vee \psi)$, which can be concluded from the clearly valid statements $\varphi \vdash \varphi \vee \psi$ and $\psi \vdash \varphi \vee \psi$ via an application of (\textsc{pr}).

\item By (\textsc{$\forall$-i}), to prove $r(\forall x:A. \varphi) \vdash \forall x:A. r\varphi$ it 
suffices to show that $r(\forall x:A. \varphi) \vdash r\varphi$, which follows from $\forall x:A. \varphi \vdash \varphi$ by (\textsc{pr}). 
Conversely, $\forall x:A. r\varphi \vdash r(\forall x:A. \varphi)$ is obtained from 
$\forall x:A. r\varphi \vdash r\varphi$ via an application of (\textsc{$\forall$-i}), since  
$\forall x:A. r\varphi \vdash r\varphi$ can be concluded via (\textsc{$\forall$-e}) from $\forall x:A. r\varphi \vdash \forall x:A. r\varphi$, by substituting $x$ for itself in $r\varphi$.

\item Let $r < \infty$. $r(\exists x:A. \varphi) \vdash \exists x:A. r\varphi$ is obtained from 
$r\varphi \vdash \exists x:A. r\varphi$ via an application of (\textsc{$\exists$-e}), as 
$r\varphi \vdash \exists x:A. r\varphi$ can be concluded from $r\varphi \vdash r\varphi$ via (\textsc{$\exists$-i}), by substituting $x$ for itself in $r\varphi$.
Conversely, $\exists x:A. r\varphi \vdash r(\exists x:A. \varphi)$ is proven by (\textsc{$\exists$-e})
from $r\varphi \vdash r(\exists x:A. \varphi)$, which in turn follows by $\varphi \vdash \exists x:A. \varphi$ via an application of (\textsc{pr}).
\end{itemize}

\subsection*{Proof of Lemma~\ref{lem:logic:weak} (Weakening of logical contexts)}
% \begin{proof}[Proof of Lemma~\ref{lem:logic:weak} (Weakening of logical contexts)]
We want to show that, 
{\centering
if $\logicJ[\Delta]{\Psi}{\varphi}$ is derivable, 
so is $\logicJ[\Delta]{\Psi, \vartheta}{\varphi}$.
}

The proof is by induction on the derivation of $\logicJ{\Psi}{\varphi}$.
Below we show only a few selected interesting cases, as the others are proved similarly.
The cases below correspond to the last rule applied in the derivation. For convenience we use the same symbols appearing in Figure~\ref{fig:logicrules}
and reserve $\vartheta$ as the generic formula for which we do weakening.
\begin{description}[font=\sc,wide]
\item[Case (false):] Apply (\textsc{false}) to obtain 
$\logicJ{\Psi,\vartheta,\false}{\varphi}$ and then (\textsc{ex}) to
obtain the format required by the statement.

\item[Case (ex):] Then $\logicJ{\Psi,\varphi,\psi,\Psi'}{\rho}$ is derivable
and so is $\logicJ{\Psi,\varphi,\psi,\Psi',\vartheta}{\rho}$, by inductive hypothesis. Apply (\textsc{ex}) to obtain
\begin{mathpar}
\infrule[ex]{
    \logicJ{\Psi,\varphi,\psi,\Psi',\vartheta}{\rho}}{
\logicJ{\Psi,\psi,\varphi,\Psi',\vartheta}{\rho}}
\end{mathpar}

\item[Case (dup):] Then $\logicJ{\Psi,(r+s)\varphi}{\psi}$ is derivable.
Then, also $\logicJ{\Psi,(r+s)\varphi,\vartheta}{\psi}$ is derivable by
inductive hypothesis.
Apply (\textsc{ex}) and (\textsc{dup}) to get 
\begin{mathpar}
\infrule[dup]{
\infrule[ex]{
    \logicJ{\Psi,(r+s)\varphi,\vartheta}{\psi}
}{\logicJ{\Psi,\vartheta,(r+s)\varphi}{\psi}} }{
\logicJ{\Psi,\vartheta,r\varphi,s\varphi}{\psi}}
\end{mathpar}
and then (\textsc{ex}) twice to recover the syntactic format 
required by the statement.

\item[Case (g-rec):] Then, $\logicJ{(1-p)\Psi,p\varphi}{\varphi}$ is derivable, for some $p \in (0,1)$. Apply the inductive hypothesis on 
$(1-p)\vartheta$
to obtain $\logicJ{(1-p)\Psi,p\varphi, (1-p)\vartheta}{\varphi}$. 
Apply (\textsc{ex}) and then (\textsc{g-rec}) to get
\begin{mathpar}
\infrule[g-rec]{
\infrule[ex]{
    \logicJ{(1-p)\Psi,p\varphi, (1-p)\vartheta}{\varphi}
}{\logicJ{(1-p)(\Psi\vartheta),p\varphi}{\varphi}}
}{\logicJ{\Psi,\vartheta}{\varphi}}
\end{mathpar}

\item[Case ($\ltensor$-i):] Then $\logicJ{\Psi}{\varphi}$ and $\logicJ{\Psi'}{\varphi'}$ are derivable. Apply the inductive hypothesis on the second one to get $\logicJ{\Psi',\vartheta}{\varphi'}$, and then (\textsc{$\ltensor$-i})
to finally derive
\begin{mathpar}
\infrule[$\ltensor$-i]{
    \logicJ{\Psi}{\varphi}
    &
    \logicJ{\Psi',\vartheta}{\varphi'}
}{\logicJ{\Psi,\Psi',\vartheta}{\varphi\ltensor\varphi'}}
\end{mathpar}

\item[Case (eq-e):] Then $\logicJ[\Delta]{\Psi}{\varphi[t/x]}$ and 
$\logicJ[\Delta]{\Psi'}{r(t=_A u)}$ are derivable for well-typed
predicate $\hastype{\Delta, x:^r A}{\varphi}{\Prop}$ and terms
$\hastype{\Delta}{t}{A}$, $\hastype{\Delta}{u}{A}$. 
Apply the inductive hypothesis to $\logicJ[\Delta]{\Psi'}{r(t=_A u)}$ to
obtain $\logicJ[\Delta]{\Psi',\vartheta}{r(t=_A u)}$ and the 
(\textsc{eq-e}) to obtain
\begin{mathpar}
\infrule[eq-e]{
    \hastype{\Delta, x:^r A}{\varphi}{\Prop}
    &\hastype{\Delta}{t}{A} 
    &\hastype{\Delta}{u}{A}
    &\logicJ[\Delta]{\Psi}{\varphi[t/x]}
    &\logicJ[\Delta]{\Psi',\vartheta}{r(t=_A u)}
}{\logicJ[\Delta]{\Psi, \Psi',\vartheta}{\varphi[u/x]} }
\end{mathpar}

\item[Case (ind${}_\mprod$):] Then 
$\hastype{\Delta}{t}{A\! \pair{r}{s} B}$ and $\logicJ[\Delta, x : A, y : B]{\Psi}{\varphi[(x,y)/z]}$ is derivable. Apply the inductive 
hypothesis to obtain $\logicJ[\Delta, x : A, y : B]{\Psi,\vartheta}{\varphi[(x,y)/z]}$, and then (\textsc{ind${}_\mprod$}) to obtain
\begin{mathpar}
\infrule[ind${}_\mprod$]{
    \logicJ[\Delta, x : A, y : B]{\Psi,\vartheta}{\varphi[(x,y)/z]} 
    & \hastype{\Delta}{t}{A\! \pair{r}{s} B}
}{\logicJ{\Psi,\vartheta}{\varphi[t/z]} }
\end{mathpar}
\qedhere
\end{description}
%\end{proof}

\subsection*{Proof of Lemma~\ref{lem:weak}.1 (Logic -
weakening of typing context)}
% \begin{proof}[Proof of Lemma~\ref{lem:weak}.1 (Logic :: weakening of typing context).]
We need to show that whenever $\logicJ[\Delta]{\Psi}{\varphi}$ is derivable, 
so is $\logicJ[\Delta, e:E]{\Psi}{\varphi}$, for $e \notin \Delta$.

The proof is by induction on the derivation of $\logicJ[\Delta]{\Psi}{\varphi}$. In most cases, the result follows directly from a straightforward application of the inductive hypothesis. Below, we focus only on two interesting cases, as the others are similar. For clarity, we use the same symbols as in Figure~\ref{fig:logicrules} and designate $e:E$ as the generic 
variable used for weakening.
\begin{description}[font=\sc,wide]
\item[Case (eq-i):] Then, $\hastype{\Delta}{t \jeq s}{A}$. By Lemma~\ref{lm:weakening}(1) we get $\hastype{\Delta, e:E}{t \jeq s}{A}$.
Apply (\textsc{eq-i}) to obtain 
\begin{mathpar}
\infrule[eq-i]{
    \hastype{\Delta,e:E}{t \jeq s}{A}
}{\logicJ[\Delta,e:E]{\Psi}{t=_A s} }
\end{mathpar}

\item[Case (eq-e):] Then $\logicJ[\Delta]{\Psi}{\varphi[t/x]}$ and 
$\logicJ[\Delta]{\Psi'}{r(t=_A u)}$ are derivable for well-typed
predicate $\hastype{\Delta, x:^r A}{\varphi}{\Prop}$ and terms
$\hastype{\Delta}{t}{A}$, $\hastype{\Delta}{u}{A}$. 
By Lemma~\ref{lm:weakening}(1) we have 
\begin{mathpar}
\hastype{\Delta, e:E, x:^r A}{\varphi}{\Prop} \,,
\and
\hastype{\Delta,e:E}{t}{A}\,, \text{and}
\and
\hastype{\Delta, e:E}{u}{A} \,.
\end{mathpar}
% $\hastype{\Delta, e:E, x:^r A}{\varphi}{\Prop}$, $\hastype{\Delta,e:E}{t}{A}$, and $\hastype{\Delta, e:E}{u}{A}$.
By inductive hypothesis, we $\logicJ[\Delta,e:E]{\Psi}{\varphi[t/x]}$ and $\logicJ[\Delta,e:E]{\Psi'}{r(t=_A s)}$ are also derivable. 
Apply (\textsc{eq-e}) to obtain
\begin{mathpar}
\infrule[eq-e]{
    \begin{array}{r@{}l}
    \hastype{\Delta,e:E, x:^r A}{\varphi}{{}& \Prop} \\
    \hastype{\Delta,e:E}{t}{{}& A} \\
    \hastype{\Delta,e:E}{u}{{}& A}
    \end{array}
    \and
    \begin{array}{r@{}l}
    \\
    \logicJ[\Delta,e:E]{\Psi}{{}&\varphi[t/x]} \\
    \logicJ[\Delta,e:E]{\Psi'}{{}&r(t=_A u)}
    \end{array}
}{\logicJ[\Delta,e:E]{\Psi, \Psi'}{\varphi[u/x]} }
\end{mathpar}
\end{description}
%\end{proof}

\subsection*{Proof of Lemma~\ref{lem:weak}.2 (Logic - substitution lemma)}
% \begin{proof}[Proof of Lemma~\ref{lem:weak}.2 (Logic :: substitution lemma).]
We show that if $\logicJ[\Delta, e:E]{\Psi}{\varphi}$ is derivable 
and $\hastype{\Delta}uE$, then $\logicJ[\Delta]{\Psi[u/e]}{\varphi[u/e]}$
is also derivable.

The proof proceeds by induction on the derivation of $\logicJ[\Delta]{\Psi}{\varphi}$. Below, we focus on the most interesting cases, as the others are similar. For clarity, we use the same symbols as in Figure~\ref{fig:logicrules} with an exception for typing context which we
assume to be $\Delta,e:E$. We designate $u$ as the generic term to
be substituted for $e$.
\begin{description}[font=\sc,wide]
\item[Case (true):] Apply (\textsc{true}) to obtain 
$\logicJ{\Psi[u/e]}{\true}$ and note that $\true = \true[u/e]$.
Moreover, as $\logicJ[\Delta,e:E]{\Psi}{\true}$ is well-formed, so is
$\logicJ[\Delta]{\Psi[u/e]}{\true}$ by Lemma~\ref{lm:substitution}.

\item[Case (pr):] Then, $\logicJ[\Delta,e:E]{\Psi}{\varphi}$.
By inductive hypothesis, $\logicJ{\Psi[u/e]}{\varphi[u/e]}$ is also derivable. From this, apply (\textsc{pr}) to obtain 
\begin{mathpar}
    \infrule[pr]{
        \logicJ{\Psi[u/e]}{\varphi[u/e]} }{
    \logicJ{p\Psi[u/e]}{p\varphi[u/e]}}
\end{mathpar}

\item[Case ($\exists$-i):] Then, $\hastype{\Delta,e:E}{t}{A}$ and
$\logicJ[\Delta,e:E]{\Psi}{\varphi[t/x]}$. 
Assume without loss of generality that $x \neq e$ (otherwise apply
$\alpha$-renaming to $\exists x:A.\varphi$).
By Lemma~\ref{lm:substitution} we have $\hastype{\Delta}{t[u/e]}{A}$
and by inductive hypothesis $\logicJ{\Psi[u/e]}{(\varphi[t/x])[u/e]}$ is
also derivable. By composition of substitutions
\[
(\varphi[t/x])[u/e] = (\varphi[u/e])[t[u/e]/x] \,.
\]
Thus we can apply (\textsc{$\exists$-i}) to obtain
\begin{mathpar}
    \infrule[$\exists$-i]{
        \hastype{\Delta}{t[u/e]}{A} &
        \logicJ{\Psi[u/e]}{(\varphi[u/e])[t[u/e]/x]}
    }{ \logicJ{\Psi[u/e]}{\exists x: A. \varphi[e/u]}}
\end{mathpar} 
and notice that, as $x \neq e$, we have $\exists x: A. \varphi[e/u] = (\exists x: A. \varphi)[e/u]$.

\item[Case ($\exists$-e):] Then, $\logicJ[\Delta,e:E,x:A]{\Psi,r\varphi}{\psi}$. By an application of the inductive hypothesis, also $\logicJ[\Delta,x:A]{\Psi[u/e],r\varphi[u/e]}{\psi[u/e]}$ is derivable.
Now apply (\textsc{$\exists$-e}) to obtain 
\begin{mathpar}
\infrule[$\exists$-e]{
    \logicJ[\Delta,x:A]{\Psi[u/e],r\varphi[u/e]}{\psi[u/e]} }{
\logicJ{\Psi[u/e],r(\exists x:A \varphi[u/e])}{\psi[u/e]} }
\end{mathpar}
and notice that, as $x\neq e$, we have $r(\exists x: A. \varphi[e/u]) = r(\exists x: A. \varphi)[e/u]$.

\item[Case (eq-i):] Then $\hastype{\Delta,e:E}{t \jeq s}{A}$. By Lemma~\ref{lm:substitution} we obtain $\hastype{\Delta}{t[u/e]}{A}$ and $\hastype{\Delta}{s[u/e]}{A}$.
Apply (\textsc{eq-i}) to obtain 
\begin{mathpar}
\infrule[eq-i]{
    \hastype{\Delta}{t[u/e] \jeq s[u/e]}{A}
}{\logicJ{\Psi[u/e]}{t[u/e] =_A s[u/e]} }
\end{mathpar}
and observe that $t[u/e] = t[u/e] = (t = t)[u/e]$.

\item[Case (eq-e):] Then $\logicJ[\Delta,e:E]{\Psi}{\varphi[t/x]}$ and $\logicJ[\Delta,e:E]{\Psi'}{r(t=_A s)}$ are derivable for the well-typed predicate $\hastype{\Delta,e:E,x:^rA}{\varphi}{\Prop}$ and terms
$\hastype{\Delta,e:E}{t}{A}$ and $\hastype{\Delta,e:E}{s}{A}$.
By Lemma~\ref{lm:substitution}, we respectively get $\hastype{\Delta,x:^rA}{\varphi[u/e]}{\Prop}$, for the predicate, and $\hastype{\Delta}{t[u/e]}{A}$, and $\hastype{\Delta}{s[u/e]}{A}$ for the terms.

By inductive hypothesis, 
$\logicJ{\Psi[u/e]}{(\varphi[t/x])[u/e]}$ and $\logicJ{\Psi'[u/e]}{r(t[u/e]=_A s[u/e])}$ are also derivable.
%Note that $x\neq e$ and $e\notin \Delta$. 
By composition of substitutions
\[
(\varphi[t/x])[u/e] = (\varphi[u/e])[t[u/e]/x] \,.
\]
Thus, we can apply (\textsc{eq-e}) to obtain
\begin{mathpar}
\infrule[eq-e]{
    \begin{array}{r@{}l}
    \hastype{\Delta,e:E, x:^r A}{\varphi[u/e]}{{}& \Prop} \\
    \hastype{\Delta,e:E}{t[u/e]}{{}& A} \\
    \hastype{\Delta,e:E}{s[u/e]}{{}& A}
    \end{array}
    \and
    \begin{array}{r@{}l}
    \\
    \logicJ[\Delta,e:E]{\Psi[u/e]}{{}&(\varphi[u/e])[t[u/e]/x]} \\
    \logicJ[\Delta,e:E]{\Psi'[u/e]}{{}&r(t[u/e]=_A s[u/e])}
    \end{array}
}{\logicJ{(\Psi,\Psi')[u/e]}{(\varphi[u/e])[s[u/e]/x]} }
\end{mathpar}
Once again, $(\varphi[u/e])[s[u/e]/x] = (\varphi[s/x])[u/e]$, so the conclusion is in the right format.

\item[Case (ind${}_\mprod$):] Then $\logicJ[\Delta,e:E,x: A,y: B]{\Psi}{\varphi[(x,y)/z]}$ and $\hastype{\Delta,e:E}{t}{A \pair{r}{s} B}$. By applying Lemma~\ref{lm:substitution} to the latter, we get $\hastype{\Delta}{t[u/e]}{A \pair{r}{s} B}$. By applying the inductive hypothesis to the former, we get that also $\logicJ[\Delta,x:A,y:B]{\Psi[u/e]}{(\varphi[(x,y)/z])[u/e]}$ is derivable. 
By composition of substitutions and the fact that $x,y\neq e$ the following holds
\[
(\varphi[(x,y)/z])[u/e] = (\varphi[u/e])[(x,y)/z] \,.
\]
Thus, we can apply (\textsc{ind${}_\mprod$}) to obtain
\begin{mathpar}
\infrule[ind${}_\mprod$]{
    \logicJ[\Delta,x:A,y:B]{\Psi[u/e]}{(\varphi[u/e])[(x,y)/z]} & 
    \hastype{\Delta}{t[u/e]}{A \pair{r}{s} B}
}{  \logicJ{\Psi[u/e]}{(\varphi[u/e])[t[u/e]/z]} }
\end{mathpar}
Once again, by composition of substitution, 
$(\varphi[u/e])[t[u/e]/z] = (\varphi[t/z])[u/e]$, so the conclusion
is in the expected format.
\end{description}
The remaining cases are similar.
%\end{proof}

\section{Omitted proofs of Section~\ref{sec:basic:prop}}

\subsection*{Proof of Lemma~\ref{lem:ext:tensor}}
%\begin{proof}[Proof of Lemma~\ref{lem:ext:tensor}]
 Let $\Delta = x,y : A \pair{r}{s} B$. The left-to-right implication is proved by (\textsc{eq-e}), so it suffices to prove 
 \[\logicJ[\Delta]{\cdot}{\letIn{(a,b) = x, (a',b') = x}{r\scaling(a\peq a') \ltensor s (b \peq b')} } \,, \]
 and by (\textsc{ind${}_\mprod$}) we can assume $x$ is of the form $(a,b)$, so the proof obligation reduces to
 $\logicJ[\Delta, a:A, b:B]{\cdot}{r\scaling(a\peq a) \ltensor s (b \peq b)}$, 
 which is true. For the other direction, by (\textsc{ind${}_\mprod$}) it suffices to prove that 
\[ \logicJ[\Delta, a,a':A, b,b':B]{r\scaling(a\peq a') , s (b \peq b')}{(a,b) \peq (a',b')} \,,\]
which follows by (\textsc{eq-e}) on the two equalities appearing in the logical context.
%\end{proof}

\subsection*{Proof of Lemma~\ref{lem:ext:sum}}
%\begin{lemma} \label{lem:ext:sum}
 If $x,y : A + B$ then $x \peq y$ is equivalent to 
\begin{align*}
& (\exists a,a'. (x \peq \inj_1 a) \ltensor (y \peq \inj_1 a') \ltensor (a \peq a')) \\
\vee & 
 (\exists b,b'. (x \peq \inj_2 b) \ltensor (y \peq \inj_2 b') \ltensor (b \peq b'))
\end{align*}
%\end{lemma}
%\giorgio{To type it up in LaTeX}
 For one direction, by equality induction, it suffices to prove 
\begin{align*}
& (\exists a,a'. (x \peq \inj_1 a) \ltensor (x \peq \inj_1 a') \ltensor (a \peq a')) \\
\vee & 
 (\exists b,b'. (x \peq \inj_2 b) \ltensor (x \peq \inj_2 b') \ltensor (b \peq b'))
\end{align*}
and by (\textsc{ind${}_+$}) we just need to prove it in the cases of $x = \inj_1(a)$ and $x = \inj_2(b)$. In the former case,
pick $a'=a$. Then we must prove $(\inj_1 a \peq \inj_1 a) \ltensor (\inj_1 a \peq \inj_1 a) \ltensor (a \peq a))$,
which is true. The other direction follows by transitivity (Proposition~\ref{prop:eq:symm:trans}).

\subsection*{Proof of Proposition~\ref{prop:Leibniz}}
%\begin{proof}%[Proof of Proposition~\ref{prop:Leibniz}]
That $r(x \peq_A y)$ implies $\forall \phi : \PredQ Ar . \phi(x) \lexp \phi(y)$ follows directly from (\textsc{eq-e}). The other direction 
follows by application of (\textsc{$\forall$-e}) to $\lambda z. r(x \peq z) : \PredQ Ar$.
%\end{proof}

\section{Omitted proofs of Section~\ref{sec:c:markov}}

\subsection*{Proof of Proposition~\ref{prop:bisim:eq}}

First note the following.

\begin{lemma} \label{lem:coup:imp}
% Let $R,S : \Rel AB$ and suppose $\forall a,b . R(a,b) \lexp S(a,b)$. Then 
% $\coup{R}\rho\mu\nu$ implies $\coup{S}\rho\mu\nu$. 
 Let $R,S : \Rel AB$. Then 
 $\forall a,b . R(a,b) \lexp S(a,b), \coup{R}\rho\mu\nu \ts \coup{S}\rho\mu\nu.$
\end{lemma}
\begin{proof}
 The statement immediately reduces to 
 \[
   \forall a,b . R(a,b) \lexp S(a,b)\ts \mean{(a,b)}{\rho}{R(a,b)} \lexp \mean{(a,b)}{\rho}{S(a,b)}
 \]
 which can be proved by induction on $\rho$: The base case of $\rho = \delta(a,b)$ reduces to the assumption 
 $R(a,b) \lexp S(a,b)$. For the induction case we must prove that 
 \begin{align*}
 & p(\mean{(a,b)}{\rho}{R(a,b)} \lexp \mean{(a,b)}{\rho}{S(a,b)}), (1-p)(\mean{(a,b)}{\rho'}{R(a,b)} \lexp \mean{(a,b)}{\rho'}{S(a,b)})
 \\ & \ts \mean{(a,b)}{\rho\oplus_p\rho'}{R(a,b)} \lexp \mean{(a,b)}{\rho\oplus_p\rho'}{S(a,b)}
\end{align*}
 which follows by unfolding $\mean{(a,b)}{\rho\oplus_p\rho'}{R(a,b)} $ to 
 $p(\mean{(a,b)}{\rho}{R(a,b)})\ltensor (1-p)(\mean{(a,b)}{\rho'}{R(a,b)})$ and distributing $p$ and $(1-p)$ over $\lexp$ in
 the assumptions. 
\end{proof}

 The proof of Proposition~\ref{prop:bisim:eq} now proceeds as follows. To show that bisimilarity implies equality, we use 
 guarded recursion, so suppose
 $\discfact(\forall x,y . x\bisim y \lexp x \peq y), x \bisim y$. We must prove $x\peq y$. The assumption $x \bisim y$ unfolds to
 \[
 \letIn{(l, \mu) = \unfold(x), \,
    (l', \nu) = \unfold(y)}{
   l \peq l' \ltensor \discfact( \exists \rho . \coup{\bisim}\rho\mu\nu})
 \]
 and by Lemma~\ref{lem:coup:imp}, 
 and the guarded recursion hypothesis, the last part can be rewritten to 
 $\discfact(\exists \rho . \coup{\eq}\rho\mu\nu)$, which by Theorem~\ref{thm:internal:Kantorovic} is equivalent to 
 $\discfact(\mu \peq \nu)$. Using (\textsc{ind}${}_\mprod$) we have proved that 
 \[
   \discfact(\forall x,y . x\bisim y \lexp x \peq y), x \bisim y \ts 
   \letIn{(l, \mu) = \unfold(x), \,
    (l', \nu) = \unfold(y)}{
   l \peq l' \ltensor \discfact( \mu \peq\nu})
 \]
 By Lemma~\ref{lem:ext:tensor} we conclude $\unfold(x) \peq \unfold(y)$, which implies $x\peq y$,
 because $\unfold$ is an isomorphism.
 
 For the other direction, to prove that $x \peq y$ implies $x \bisim y$, by equality induction it suffices to show that 
 bisimilarity is reflexive. This is done by guarded recursion by showing $\discfact(\forall x. x\bisim x) \ts \forall x. x \bisim x$.
 Introducing $x$, the proof obligation unfolds to 
 \[
   \logicJ[x : \Proc]{\discfact(\forall x. x\bisim x)}{ 
   \letIn{(l, \mu) = \unfold(x), \,
    (l', \nu) = \unfold(x)}{
   l \peq l' \ltensor \discfact(  \exists \rho . \coup{\bisim}\rho\mu\nu})}
 \]
 and by (\textsc{ind}${}_\mprod$) it suffices to show that 
 \[
   \logicJ[x : \Proc, l : \Labels, \mu : \D(\Proc)]{\discfact(\forall x. x\bisim x)}{ 
   l \peq l \ltensor \discfact(  \exists \rho . \coup{\bisim}\rho\mu\mu)}
 \]
 so we have reduced the proof to showing that $\forall x. x\bisim x$ implies $\forall\mu . \exists \rho . \coup{\bisim}\rho\mu\mu)$.
 This is done similarly to the proof of Theorem~\ref{thm:internal:Kantorovic}:  
  Define $\hastype{\mu:^2 \TD A}{\omega(\mu)}{\TD(\Proc \pair11 \Proc)}$ as
 %$\omega \defeq 
 $\letIn{a = \mu}{\delta(a,a)}$. Then $\TD\pi_i(\omega(\mu)) \peq \mu$ for $i=1,2$ can be proved by induction on $\mu$.
 %and $\TD\pi_2(\omega(\mu)) \jeq \mu$. %\rasmus{spell out?} 
Moreover, $\cdot \ts \mean{(a,a')}{\omega(\mu)}{a \bisim a'}$ can be proved by induction on $\mu$ as follows. If $\mu = \delta(a)$, 
by \eqref{eq:mean:dirac} and $\omega(\mu) \jeq \delta(a,a)$, $\mean{(a,a')}{\omega(\mu)}{a \bisim a'}$ reduces to $a\bisim a$, which is true
by assumption. If $\mu = \mu_1 \oplus_p \mu_2$, we must show
$\mean{(a,a')}{\omega(\mu_1\oplus_p \mu_2)}{a \bisim a'}$ in context
\[
 p(\mean{(a,a')}{\omega(\mu_1)}{a \bisim a'}), (1-p)(\mean{(a,a')}{\omega(\mu_2)}{a \bisim a'}) 
 %\ts \mean{(a,a')}{\omega(\mu_1\oplus \mu_2)}{a \peq a'}
\]
which holds by \eqref{eq:mean:conv} because $\omega(\mu_1\oplus_p \mu_2) \jeq \omega(\mu_1)\oplus_p \omega(\mu_2)$.

\section{Omitted proofs of Section~\ref{sec:hypercube}}

We just show the missing case for (\ref{eq:sum:flip}), where $p$ and $q$ differ in more than one position. Recall
that this states that 
\[
\frac{N\!-\!1}{N\!+\!1}(p\peq q) \ts \sum_{i=0}^N\frac1{N\!+\!1}(\flip_i\,p \peq \flip_{\bij(i)}\,q)
\]
 
If $p$ and $q$ differ in positions $i_1,\dots, i_n$ for $n>1$, let $\bijarg pq$ be the permutation
that cycles $i_1,\dots, i_n$, then, $p \peq q$ has value $\frac nN$. For $i \in \{i_1,\dots, i_n\}$, 
$\flip_i \,p$ and $ \flip_{\bij i} \,q$ differ in $n-2$ positions, so that $\flip_i \,p \peq \flip_{\bij i} \,q$ equals $\frac{n-2}N$.
If $i \notin \{i_1,\dots, i_n\}$, then $\flip_i\,p$ and $ \flip_{\bij i}\,q$ differ in $n$ positions. So 
\begin{align*}
 \sum_{i=0}^N\frac1{N\!+\!1}(\flip_i\,p \peq \flip_{\bij(i)}\,q) & =  \frac1{N\!+\!1}(n \cdot \frac{n-2}N + (N\!+\!1\!-\!n)\frac nN) \\
 & = \frac1{N\!+\!1}\left(\frac{n^2-2n+nN+n-n^2}N\right) \\
 & = \frac1{N\!+\!1}\left(\frac{(N-1)n}N\right) %\\
% & = \frac nN
\end{align*}
which equals $\frac{N\!-\!1}{N\!+\!1}(p\peq q)$ proving
\[
\frac{N\!-\!1}{N\!+\!1}(p\peq q) \ts \sum_{i=0}^N\frac1{N\!+\!1}(\flip_i\,p \peq \flip_{\bij(i)}\,q)
\]

\end{document}
\endinput
%%
%% End of file `sample-acmsmall.tex'.